\newcolumntype{Y}{>{\RaggedRight\arraybackslash}X}
\newtheorem{thm}{Theorem}
\newtheorem{defin}{Definition}
\newtheorem{lem}{Lemma}
\newtheorem{rem}{Remark}
	\providecommand\BibTeX{{%
			\normalfont B\kern-0.5em{\scshape i\kern-0.25em b}\kern-0.8em\TeX}}}
\journal{~}
\begin{document}
%\listoffigures
\captionsetup[figure]{labelfont={bf},labelformat={default},labelsep=period,name={Fig.}}%%Fig.Command
\begin{frontmatter}
%\title{Detecting balanced modules in stock markets via statistically validated networks}
\title{Finding Core Balanced Modules in Statistically Validated Stock Networks}
%\title{Finding the Largest Strong-Correlation Balanced Module in Statistically Validated Stock Networks}
\author[label1]{Huan Qing}
\ead{qinghuan@cqut.edu.cn\&qinghuan@u.nus.edu}
\address[label1]{School of Economics and Finance, Chongqing University of Technology, Chongqing, 400054, China}
\author[label2]{Xiaofei Xu\corref{cor1}}
\ead{xiaofeix@whu.edu.cn}
\cortext[cor1]{Corresponding author.}
\address[label2]{School of Mathematics and Statistics, Wuhan University, Wuhan, 430072, China}
\begin{abstract}
Traditional threshold-based stock networks suffer from subjective parameter selection and inherent limitations: they constrain relationships to binary representations, failing to capture both correlation strength and negative dependencies. To address this, we introduce statistically validated correlation networks that retain only statistically significant correlations via a rigorous t-test of Pearson coefficients. We then propose a novel structure termed the largest strong-correlation balanced module (LSCBM), defined as the maximum-size group of stocks with structural balance (i.e., positive edge-sign products for all triplets) and strong pairwise correlations. This balance condition ensures stable relationships, thus facilitating potential hedging opportunities through negative edges. Theoretically, within a random signed graph model, we establish LSCBM's asymptotic existence, size scaling, and multiplicity under various parameter regimes. To detect LSCBM efficiently, we develop MaxBalanceCore, a heuristic algorithm that leverages network sparsity. Simulations validate its efficiency, demonstrating scalability to networks of up to 10,000 nodes within tens of seconds. Empirical analysis demonstrates that LSCBM identifies core market subsystems that dynamically reorganize in response to economic shifts and crises. In the Chinese stock market (2013–2024), LSCBM’s size surges during high-stress periods (e.g., the 2015 crash) and contracts during stable or fragmented regimes, while its composition rotates annually across dominant sectors (e.g., Industrials and Financials). 
\end{abstract}
\begin{keyword}
Statistically validated correlation networks\sep Structural balance theory\sep Largest strong-correlation balanced module\sep Random signed graph model\sep Asymptotic analysis\sep Stock network analysis %\sep Portfolio hedging
\end{keyword}
\end{frontmatter}
\section{Introduction}\label{sec1}
The stock market is a dynamic and complex system shaped by the continuous interactions of countless individual stocks \citep{fama1965behavior,de1985does,Frank2009,sammon2026index}. These interactions are driven by macroeconomic forces (such as interest rates and inflation), sector-specific innovations, collective investor sentiment, political events, and so on \citep{gordon1959dividends,chen1986economic,barsky1993does,antonakakis2013dynamic,engle2013stock,arouri2016economic,paramati2017effects,boungou2022impact,habib2018stock,shah2019stock,jiang2021applications,venturini2022climate,ye2025effect}. For decades, researchers have been interested in understanding these complexities, moving beyond simplistic models that treat stocks as isolated entities or rely solely on pairwise comparisons. Network technique has emerged as a powerful tool to model financial systems as interconnected networks \citep{mantegna1999hierarchical,albert2002statistical,dorogovtsev2002evolution,newman2003structure,BOGINSKI20063171,huang2009network,chi2010network,kwapien2012physical,Acemoglu2015,samitas2022covid,LIU2024122529,MASUDA20251}. In this framework, stocks are represented as nodes, and their relationships—typically measured by price correlations—form the edges of a financial network. This approach has proven invaluable for visualizing market structure, identifying systemic risks, and uncovering hidden dependencies that traditional statistical methods often miss.  

A significant portion of the literature on stock correlation networks relies on the threshold-based method \citep{chi2010network}, which constructs stock networks by linking stocks only when their price correlation exceeds a predefined threshold. This binarization simplifies the network structure for graph-theoretical analysis. Researchers have applied this approach to several interconnected research streams including analyzing market stability under varying conditions \citep{HEIBERGER2014376,NOBI2014135,MAJAPA201635,ESMAEILPOURMOGHADAM2019121800,ZHANG2019748,LI2020101185,VIDALTOMAS2021101981,chen2025systemic}, predicting economic growth using Bayesian classifiers \citep{Heiberger2018}, examining structural transitions and market dynamics \citep{Wang2018,Memon2019,Liang2024}, assessing common factor impacts \citep{EOM20171}, modeling risk diffusion \citep{YANG2022103180, YANG2024101138}, and identifying influential stocks \citep{CHEN2022100836,QU2022111939}. These investigations fundamentally rely on analyzing topological properties such as clustering coefficients (sectoral cohesion), modularity structures (co-moving groups), centrality measures (systemically important assets), and network stability, which provide the analytical framework for interpreting market behavior across these research domains \citep{boccaletti2006complex,LU20161,peng2018influence,tabassum2018social}. Particularly, an interesting topic in stock network analysis lies in detecting communities, where groups of stocks connect together more closely than with the broader market. These communities are interpreted as reflecting shared fundamentals like industry affiliations (e.g., technology stocks) \citep{Li2022,Yan2023,ZHOU2023118944,xing2023community,QING2025112769}. The appeal of this method lies in its simplicity and computational efficiency, enabling researchers to transform correlation matrices into interpretable network graphs.  

Despite its popularity, the threshold-based approach is not without limitations. A critical issue lies in the arbitrary selection of the threshold value, which directly impacts the resulting network structure. Researchers often rely on heuristic criteria or trial-and-error methods to choose this threshold rather than rigorous statistical justification, leading to inconsistent results across studies. For instance, a small change in the threshold can drastically alter the number of connected nodes, the strength of observed relationships, and the identification of communities. This sensitivity undermines the reproducibility of findings and raises questions about the robustness of conclusions drawn from such networks. Moreover, compounding this problem is the binary nature of threshold networks. Relationships are reduced to a simple dichotomy—connected or disconnected—discarding critical information about the strength of correlations \citep{NewmanWeighted}. For example, a correlation of 0.85 and one of 0.55 might both be deemed ``connected" at a threshold of 0.5, despite representing vastly different degrees of co-movement. Such distinctions are crucial for portfolio risk management and diversification strategies. More importantly, the binary framework entirely ignores negative correlations, which are foundational to diversification and hedging. Assets that move inversely during downturns can naturally offset losses in a portfolio, yet traditional threshold networks overlook these relationships by focusing solely on the magnitude of correlations. Market interactions are inherently continuous and directional phenomena; forcing them into a binary, unsigned framework discards economically vital information, leading to an incomplete and potentially misleading picture of market dynamics. By truncating these relationships into a binary framework, existing methods risk oversimplifying the true complexity of the market. These foundational problems critically impact the interpretation of community structures identified within such threshold networks. While community detection algorithms are powerful tools for finding densely connected subgroups in stock networks, their application here faces inherent methodological challenges. First, reliably estimating the optimal number of communities is challenging. Second, different community detection algorithms \citep{SC,RSC,SCORE,CMM,qing2023community,qing2025community,garcia2025improving} applied to the same stock threshold network can yield substantially different groupings. Third, the detected communities are highly sensitive to the chosen correlation threshold – changing the threshold fundamentally reshapes the communities. Consequently, the identified groups may lack clear and consistent financial meaning. This disconnect between algorithmic groupings and tangible economic logic raises serious questions about the practical utility of these communities for applications like portfolio construction or risk management. The inability to map network structures to tangible economic phenomena suggests that current approaches may be missing key elements of market organization.   

The shortcomings of the threshold-based method motivate us to develop an alternative approach that addresses these issues while preserving the interpretability of stock network structures. Some studies construct stock networks based on distance matrices (e.g., \citep{mantegna1999hierarchical,zhang2025assessing,chen2025contagion,zhu2025brown,zhao2025can}), where Pearson correlations are first transformed into distances and then filtered using techniques like minimum spanning trees. While this approach avoids an explicit threshold on the correlation coefficient, it still produces dense, non‑negative matrices that discard the sign of the original relationships, and the subsequent filtering step introduces its own arbitrariness. In contrast, a more promising avenue lies in leveraging statistical validation to filter correlations, ensuring that only those with robust evidence of significance are included in the network. Unlike arbitrary thresholds, statistical validation provides an objective criterion for determining the relevance of a relationship, grounded in hypothesis testing. This method not only mitigates the subjectivity of threshold selection but also retains the full spectrum of correlation strengths, allowing for a more nuanced representation of market interactions. Furthermore, by explicitly accounting for the sign of correlations, such an approach can capture both positive and negative dependencies, enriching the network’s ability to reflect real-world financial dynamics. However, even with statistically validated correlations, constructing a network is just the first step. The true challenge lies in extracting meaningful substructures that align with economic intuition and offer actionable insights. This is where the concept of structural balance theory \citep{heider1946attitudes} becomes particularly relevant. Originating in social psychology to explain the stability of relationships within triads (e.g., ``the friend of my friend is my friend," or ``the enemy of my enemy is my friend"), this theory provides a principled framework for understanding how configurations of positive (friendly) and negative (antagonistic) ties tend towards stable equilibria \citep{heider1946attitudes,cartwright1956structural,facchetti2011computing,zheng2015social,ma2015memetic,wang2016optimizing,cai2022structure,song2022evolutionary,dong2025social}. For instance, \cite{zheng2015social} examined social media signed networks to identify balanced subnetworks and measure social tension. \cite{facchetti2011computing} computed the global level of balance of very large online social networks and claimed that currently available networks are indeed extremely balanced. \cite{dong2025social} proposed a social balance theory-based modeling framework for group-to-empirical decision-making transition with cognitive inertia and trust propagation. See Figure \ref{TOYSBT} for an illustration of the four states in structural balance theory. These principles have been applied to various domains, from political alliances to social media interactions, but their relevance to financial markets remains largely unexplored. Translated into the financial domain, positive correlations represent assets moving in tandem (like allies), while negative correlations represent assets moving oppositely (like adversaries). Given that stock prices can exhibit both positive and negative correlations, and that these relationships can influence hedging strategies and portfolio diversification, structural balance theory provides a natural framework for analyzing the stability and dynamics of financial networks.   
\begin{figure}
\centering
\resizebox{\columnwidth}{!}{
\subfigure[Two balanced states]{\includegraphics[width=0.45\textwidth]{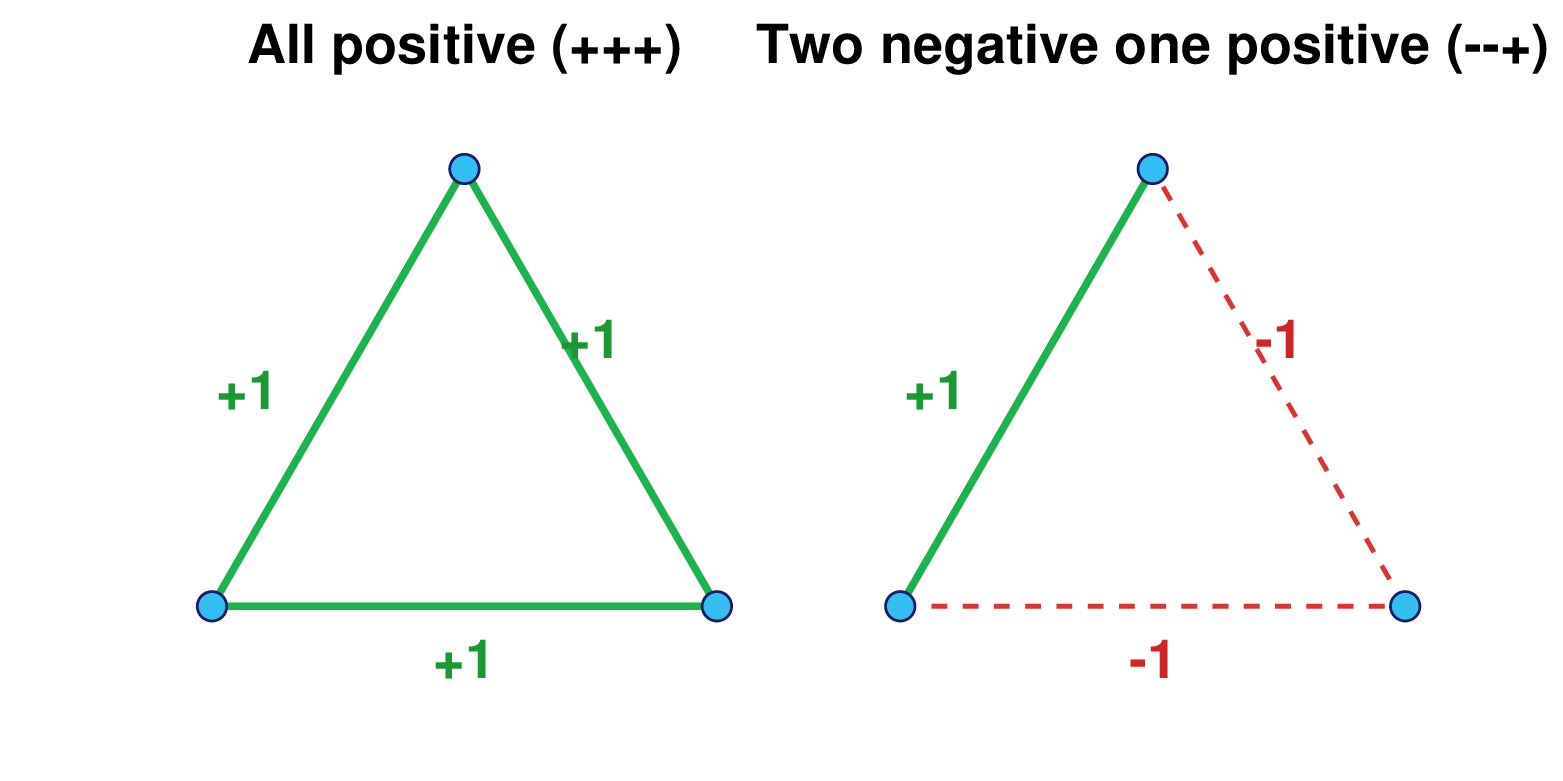}}
\subfigure[Two unbalanced states]{\includegraphics[width=0.45\textwidth]{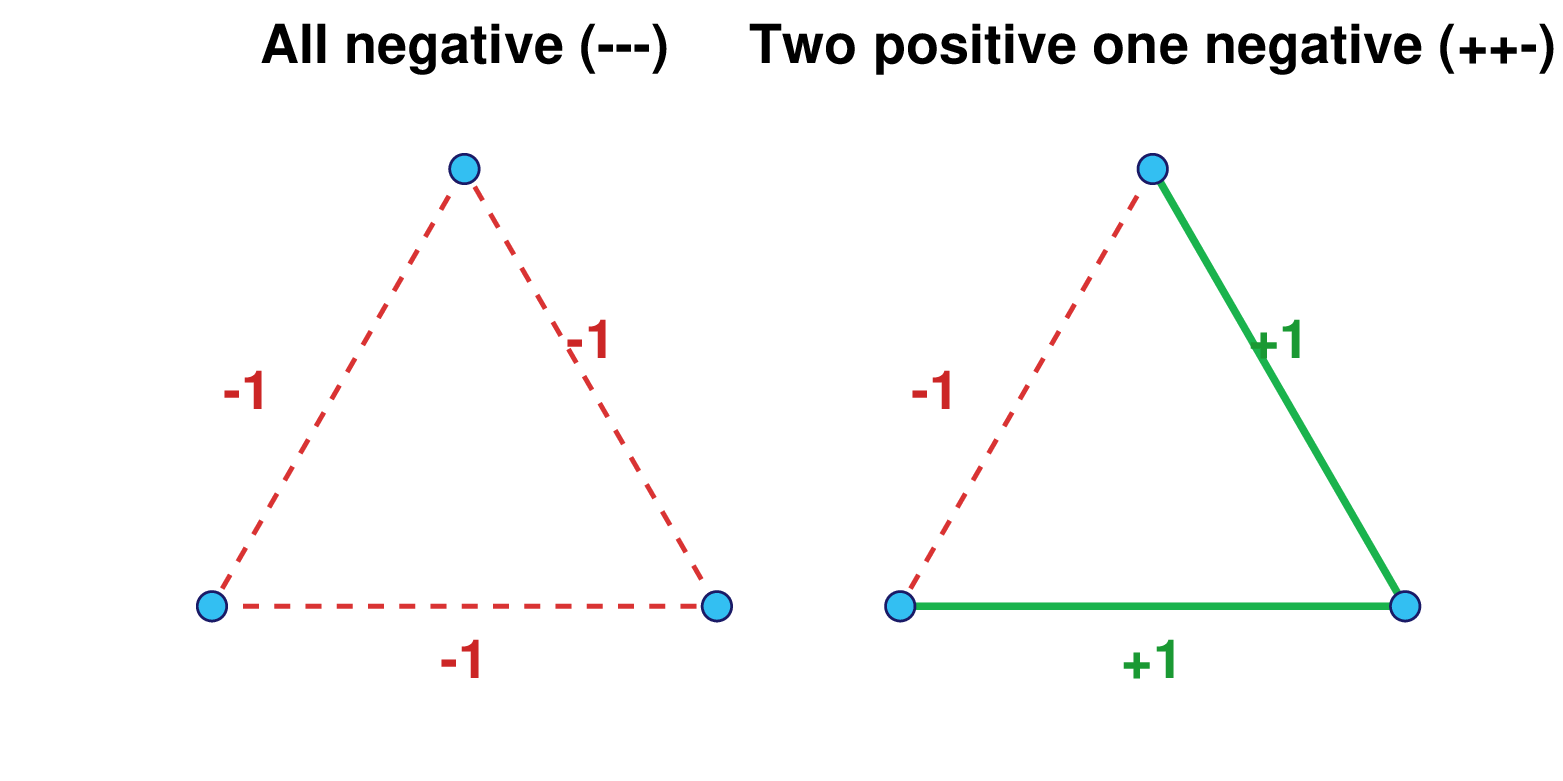}}
}
\caption{Illustration of structural balance theory configurations.}
\label{TOYSBT} %% label for entire figure
\end{figure}

To systematically overcome the limitations of the traditional threshold-based stock networks, this article translates the above insights into four concrete contributions:
\begin{itemize}
  \item \textbf{A statistically grounded network construction.} Instead of discarding correlation strength and sign through an arbitrary cutoff, we introduce a statistically validated correlation network that eliminates spurious relationships while preserving the richness of correlation data. In detail, we replace subjective thresholding with rigorous statistical hypothesis testing. By applying t-tests to Pearson correlation coefficients, we construct a network that retains only statistically significant relationships. This method eliminates arbitrary parameters, preserving both the strength and sign (positive/negative) of correlations. By filtering out spurious links, we ensure that the network reflects genuine dependencies, providing a more robust foundation for analysis than traditional threshold networks.  
  \item \textbf{The largest strong‑correlation balanced module (LSCBM).} Rather than applying generic community detection—which often yields unstable clusters with unclear financial interpretation, we define a novel concept: the largest strong-correlation balanced module (LSCBM). LSCBM is a maximal subset of stocks that satisfies two key properties: (1) all pairwise correlations are strong and statistically significant, and (2) the network adheres to the principles of structural balance, ensuring that every triangle of stocks has a positive product of edge signs. This dual requirement ensures that the module is not only densely connected but also structurally stable, making it a candidate for identifying core market subsystems with inherent resilience and hedging potential.  
  \item \textbf{Rigorous asymptotic theory.} Prior empirical studies of stock networks rarely provide formal guarantees for the substructures they analyse. To fill this gap, we rigorously analyze LSCBM within a generative statistical model of random signed networks, where each potential edge is independently positive, negative, or absent with fixed probabilities. We prove that LSCBMs exist with high probability as the number of stocks grows, and we derive the asymptotic scaling of the expected LSCBM size under different connectivity regimes. These results provide a predictive, mathematically rigorous understanding of how market connectivity governs the size of the core stable subsystem—a level of theoretical depth not found in purely descriptive network analyses. This theoretical foundation also confirms that such stable core structures are not anomalies but fundamental features emerging in large markets.
  \item \textbf{An efficient algorithm for large markets.}  Given that the identification of LSCBM is a non-trivial task in large markets, we develop MaxBalanceCore, a heuristic algorithm that leverages network sparsity and correlation strength thresholds. Empirically, we validate MaxBalanceCore’s accuracy and scalability, demonstrating its ability to process networks with over 10,000 nodes in seconds. Simulation studies confirm its accuracy, and an empirical application to the Chinese stock market illustrates its ability to identify economically interpretable and stable market subsystems.  
\end{itemize}

The remainder of this article is organized as
follows. Section \ref{Sec2} details the construction of statistically validated correlation networks. Section \ref{Sec3} defines the largest strong-correlation balanced module, establishes its asymptotic properties within a random signed graph model, and develops the MaxBalanceCore algorithm for efficient detection. Section \ref{Sec4} evaluates the algorithm’s performance via simulations and validates the framework’s utility using real-world stock market data. Section \ref{sec5} concludes and suggests future research directions. All technical proofs and the MATLAB codes for the proposed algorithm MaxBalanceCore are provided in the appendix.

\textbf{\textit{Notations.}} We take the following general notations in this article. Write $[m]:=\{1,2,\ldots,m\}$ for any positive integer $m$. Let $N$ denote the number of stocks (nodes), $T$ the length of the logarithmic return vector (number of time points), $\Delta\tau$ the time interval (set to one day), $P_i(\tau)$ the price of stock $i$ at time $\tau$, $r_i(\tau)$ the logarithmic return calculated as $\log\frac{P_i(\tau)}{P_i(\tau-\Delta\tau)}$, $\mathbf{C}$ the $N \times N$ Pearson correlation matrix with elements $\mathbf{C}_{i,j}$, $\bar{r}_i$ the mean of $r_i$, $\rho$ the threshold for classical network construction, $\widetilde{\mathbf{C}}$ the statistically validated correlation matrix, $\alpha$ the significance level in hypothesis testing or the probability of a positive edge in random signed graphs $\mathcal{G}(N,\alpha,\beta)$, $\beta$ the probability of a negative edge in $\mathcal{G}(N,\alpha,\beta)$, $t_{i,j}$ the test statistic for correlation significance, $\nu = T-2$ the degrees of freedom for the t-distribution, $\sigma$ the minimum correlation strength threshold for strong-correlation modules, $\mathcal{S}$ a subnetwork or module, $\mathcal{S}^*$ the largest strong-correlation balanced module, $|\mathcal{S}|$ the cardinality (size) of $\mathcal{S}$, $A$ and $B$ disjoint sets in structural balance partitions, $Z_s$ the number of strong-correlation balanced modules of size $s$, $\lambda(\alpha,\beta)$ a scaling parameter depending on $\alpha$ and $\beta$, $f(a)$ a function in asymptotic analysis, and $H(a)$ the binary entropy function. Asymptotic notations include $\sim$ for asymptotic equivalence, $\Theta$ for a tight bound, $O$ for the big-O upper bound, and $o$ for a lower-order term. Probability, expectation, and variance are denoted by $\mathbb{P}$, $\mathbb{E}$, and $\text{Var}$, respectively. Table~\ref{tab:notation} summarizes the main symbols used in this paper for quick reference.
\begin{table}[htbp]
\centering
\small 
\setlength{\tabcolsep}{2pt} 
\caption{Summary of main notations.}
\label{tab:notation}
\begin{tabular}{>{\raggedright\arraybackslash}p{2.3cm} >{\raggedright\arraybackslash}p{5cm} >{\raggedright\arraybackslash}p{2.3cm} >{\raggedright\arraybackslash}p{5cm}}
\toprule
Symbol & Description & Symbol & Description \\
\midrule
$[m]$ & Set $\{1,2,\ldots,m\}$ for any positive integer $m$ & $N$ & Number of stocks (nodes) \\
$T$ & Length of the logarithmic return vector & $\Delta\tau$ & Time interval for return calculation\\
$P_i(\tau)$ & Price of stock $i$ at time $\tau$ & $r_i(\tau)$ & Logarithmic return of stock $i$\\
$\bar{r}_i$ & Mean of $r_i$ & $\mathbf{C}$ & $N \times N$ Pearson correlation matrix\\
$\widetilde{\mathbf{C}}$ & Statistically validated correlation matrix& $\rho$ & Threshold for classical threshold-based network construction\\
$\alpha$ & Significance level in hypothesis testing; also probability of a positive edge in $\mathcal{G}(N,\alpha,\beta)$ & $\beta$ & Probability of a negative edge in $\mathcal{G}(N,\alpha,\beta)$ \\
$t_{i,j}$ & Test statistic for correlation significance& $\nu = T-2$ & Degrees of freedom for the $t$-distribution \\
$t_{\nu}(\frac{\alpha}{2})$ & Critical value of the $t$-distribution& $\sigma$ & Minimum correlation strength threshold\\
$\mathcal{S}$ & A subnetwork or module (set of nodes) & $\mathcal{S}^*$ & Largest strong-correlation balanced module (LSCBM) \\
$|\mathcal{S}|$ & Cardinality (size) of module $\mathcal{S}$ & $A$, $B$ & Disjoint sets in a structurally balanced partition\\
$Z_s$ & Number of strong-correlation balanced modules (SCBMs) of size $s$ & $\lambda(\alpha,\beta)$ & Scaling parameter\\
$f(a)$ & Function in asymptotic analysis& $H(a)$ & Binary entropy function\\
$\sim$ & Asymptotic equivalence & $\Theta$ & Tight asymptotic bound (Theta notation) \\
$O$ & Big-O asymptotic upper bound & $o$ & Little-o lower-order term \\
$\mathbb{P}$ & Probability & $\mathbb{E}$ & Expectation \\
$\text{Var}$ & Variance & $\xi_+$ & Proportion of positive elements in $\widetilde{\mathbf{C}}$ \\
$\xi_-$ & Proportion of negative elements in $\widetilde{\mathbf{C}}$& $\mu_+$ & Average value of positive elements in $\widetilde{\mathbf{C}}$ \\
$\mu_-$ & Average value of negative elements in $\widetilde{\mathbf{C}}$ & $\varsigma$ & Proportion of nodes belonging to LSCBM: $\varsigma = |\mathcal{S}^*|/N$ \\
$\mathbf{S}$ & Signed adjacency matrix& $\operatorname{impact}_i$ & Impact (degree) of node $i$ in $\mathbf{S}$\\
$\mathbb{I}[\cdot]$ & Indicator function& $\mathcal{G}(N,\alpha,\beta)$ & Random signed graph model: each edge independently is $+1$ with prob. $\alpha$, $-1$ with prob. $\beta$, $0$ with prob. $1-\alpha-\beta$ \\
\bottomrule
\end{tabular}
\end{table}

\section{Statistically validated correlation network construction}\label{Sec2}
Consider a stock market with $N$ stocks and let $P_{i}(\tau)$ be the stock price of stock $i$ at time $\tau$ for $i\in[N]$. We know that the logarithmic return of the stock $i$ at a time interval $\triangle \tau$ can be calculated as
\begin{align}\label{logreturn}
r_{i}(\tau)=\mathrm{log}\frac{ P_{i}(\tau)}{P_{i}(\tau-\triangle \tau)},
\end{align}
where we set $\triangle \tau$ as one day in this article. Suppose there are $(T+1)$ consecutive trading days. For each stock $i$, its logarithmic return vector is a $1\times T$ vector $r_{i}=[r_{i}(1), r_{i}(2), \ldots, r_{i}(T)]$. To analyze the relationships among stocks, the Pearson correlation coefficient between any two stocks $i$ and $j$ is considered:
\begin{align}\label{Cij}
\mathbf{C}_{i,j}=\frac{\sum_{t}(r_{i}(\tau)-\bar{r}_{i})(r_{j}(\tau)-\bar{r}_{j})}{\sqrt{\sum_{\tau}(r_{i}(\tau)-\bar{r}_{i})^{2}}\sqrt{\sum_{\tau}(r_{j}(\tau)-\bar{r}_{j})^{2}}},
\end{align}
where $\bar{r}_{i}$ (and $\bar{r}_{j}$) represent the mean of $r_{i}$ (and $r_{j}$), and the summations are taken over the period we considered. For the $N$ stocks, we see that the $N\times N$ symmetric correlation matrix $\mathbf{C}$ records all Pearson correlation coefficients among stocks.

It is well-known that the classical threshold networks \citep{chi2010network} for stocks can be constructed in the following way. Let the $N\times N$ symmetric matrix $\mathbf{G}$ be the adjacency matrix of the stock threshold network. For $i\in[N], j\in[N]$, $\mathbf{G}$'s $(i,j)$-th entry is calculated by
\begin{align}\label{DefineGt}
\mathbf{G}_{i,j}=\begin{cases}
1& \mbox{when~} |\mathbf{C}_{i,j}|>\rho,\\
0, & \mbox{otherwise}.
\end{cases}.
\end{align}
where $\rho$ is a threshold value located in $(0,1)$. Though such construction of the threshold unweighted stock networks via Equation (\ref{DefineGt}) is simple, we observe that it has the following limitations:
\begin{itemize}
  \item (a) The selection of the threshold $\rho$ is highly subjective. Different values of $\mathbf{\rho}$ may result in substantially different adjacency matrices $\mathbf{G}$. Consequently, $\mathbf{G}$ fails to accurately capture the connectivity between stocks.
  \item (b) In traditional threshold-based stock networks, the absence or presence of a connection (uncorrelated or correlated) solely by the binary values 0 or 1, obtained by truncating values using a threshold $\rho$, is overly arbitrary. Such binarization completely fails to capture the varying strength of correlations between stocks, and cannot represent negative correlations between them.
\end{itemize}

We find that directly utilizing the correlation matrix $\mathbf{C}$ can overcome the two aforementioned limitations inherent in traditional stock threshold networks. However, we note that a network constructed directly from $\mathbf{C}$ would barely qualify as a network since connections exist between virtually every pair of nodes (stocks), given that $\mathbf{C}_{i,j}$ is rarely exactly zero. More importantly, the correlation coefficient $|\mathbf{C}_{i,j}|$ for some stock pairs can be extremely close to zero (e.g., 0.01). Such weak correlations can often be attributed to random noise or sampling errors, indicating an absence of any true underlying relationship. This naturally leads us to introduce the following statistically validated correlation matrix, upon which we construct the stock correlation network. To systematically distinguish economically meaningful correlations from spurious noise-induced correlations, we implement a statistical hypothesis testing procedure for each pairwise correlation coefficient $\mathbf{C}_{i,j}$. For every pair of stocks $i$ and $j$, we formalize the test as:
\begin{align*}
H_{0}&: \mathbf{C}_{i,j} = 0 \quad \text{(no true linear correlation exists)} \\
H_{1}&: \mathbf{C}_{i,j} \neq 0 \quad \text{(significant linear correlation exists)}
\end{align*}
tested at $\alpha=5\%$ significance level. The test statistic is computed as:
\begin{align}\label{tstatistics}
t_{i,j} = \mathbf{C}_{i,j} \sqrt{\frac{T - 2}{1 - \mathbf{C}_{i,j}^2}},
\end{align}
where $T$ denotes the length of the logarithmic return vector. It is well-known that the test statistic $t_{i,j}$ follows a Student's t-distribution with $\nu=T-2$ degrees of freedom under $H_{0}$ \citep{edgell1984effect,obilor2018test}. When $|t_{i,j}|> t_{\nu}(\frac{\alpha}{2})$ (where $t_{\nu}(\frac{\alpha}{2})$ is the critical value of the t-distribution), we reject $H_{0}$ and conclude that $\mathbf{C}_{i,j}$ is statistically significant. For $i\in[N], j\in[N]$, the statistically validated correlation matrix $\widetilde{\mathbf{C}}$ is then constructed element-wise as:
\begin{align}\label{sscm}
\widetilde{\mathbf{C}}_{i,j} =
\begin{cases}
\mathbf{C}_{i,j} & \text{if } H_0 \text{ is rejected} \\
0 & \text{otherwise}.
\end{cases},
\end{align}
where we set $\widetilde{\mathbf{C}}_{ii}=1$ for convenience. The flowchart of $\widetilde{\mathbf{C}}$'s construction process of the stock market is shown in Figure \ref{Flowchart0}.

\begin{figure}[htbp]
\centering
%\resizebox{\columnwidth}{!}{
\includegraphics[height=0.43\textheight,keepaspectratio]{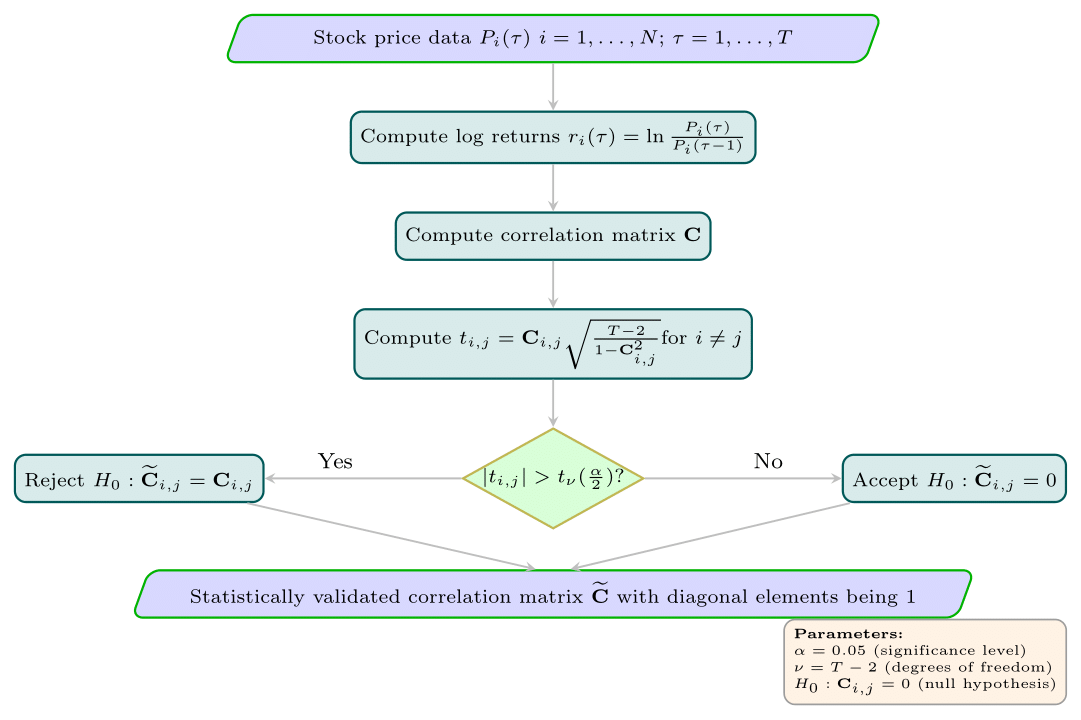}
%}
\caption{Flowchart of the construction process for statistically validated stock correlation networks. }
\label{Flowchart0}
\end{figure}

This validation process effectively filters out correlations attributable to random fluctuations while preserving economically significant relationships. The resulting sparse matrix $\widetilde{\mathbf{C}}$ forms the adjacency matrix of the correlation network for stocks, where non-zero edges represent statistically validated correlations between stocks. It is noteworthy that the symmetric matrix $\widetilde{\mathbf{C}}$ is a weighted adjacency matrix, with all elements located in the interval $[-1, 1]$. Consequently, the resulting correlation network is an undirected, weighted network where:
\begin{itemize}
  \item Edge weights quantify the strength of validated correlations between stocks.
  \item Edge signs naturally represent positive or negative   relationships between stocks.
  \item Sparsity ensures only statistically meaningful connections are retained (i.e., $\widetilde{\mathbf{C}}_{i,j} = 0$ for insignificant correlations).
\end{itemize}

The proposed correlation network offers several advantages over traditional threshold networks in stock market analysis:
\begin{itemize}
  \item The correlation network overcomes the limitations of threshold networks by eliminating the need for subjective threshold selection. This allows for a more objective and data-driven approach to network construction.
  \item Unlike traditional threshold networks, which depict relationships as binary (connected or not), the correlation network quantifies association strengths through edge weights. Meanwhile, the inclusion of signed edges permits the explicit representation of both positive and negative correlations, facilitating a more nuanced investigation into relationships between stocks.
  \item The sparsity of the correlation network ensures that only statistically meaningful connections are retained. This helps to filter out noise and irrelevant information, providing a clearer picture of the true relationships between stocks.
\end{itemize}
\section{Largest strong-correlation balanced module (LSCBM)}\label{Sec3}
In this section, we formalize the definition of largest strong-correlation balanced module by uniting statistically validated correlation strengths with structural balance theory, derive its asymptotic existence, size scaling, and multiplicity under a random signed graph model, and present an efficient algorithm to detect it from large-scale networks.
\subsection{Definition of LSCBM}
To advance our analysis of statistically validated stock correlation networks, we introduce a novel concept: the largest strong-correlation balanced module (LSCBM for short). LSCBM combines structural balance theory with statistically validated correlation networks to identify maximal market subsystems where stocks exhibit economically significant relationships and relational stability. Its definition is provided below.
\begin{defin}\label{LSCBM}
(Largest strong-correlation balanced module, LSCBM)
Let $\widetilde{\mathbf{C}}$ denote the statistically validated correlation matrix defined in Equation (\ref{sscm}) of $N$ stocks. A subnetwork $\mathcal{S}$ is a strong-correlation balanced module (SCBM for short) if:
\begin{itemize}
  \item (1) Strong correlation module: For any pair of nodes $i$ and $j$ in the subnetwork $\mathcal{S}$, they share a strong statistically validated edge:
       \begin{align}\label{SCC}
       |\widetilde{\mathbf{C}}_{i,j}|\geq\sigma,
       \end{align}
       where $\sigma>0$ is a predefined threshold.
  \item (2) Structural balance: For every three distinct nodes $i, j$, and $k$ in $\mathcal{S}$, the product of edge signs for the triangle formed by these three nodes is positive, i.e.,
      \begin{align}\label{balancedtriangle}
        \widetilde{\mathbf{C}}_{i,j}\times \widetilde{\mathbf{C}}_{i,k}\times \widetilde{\mathbf{C}}_{j,k}>0.
      \end{align}
      This permits two configurations: (i) all three correlations among $i, j,$ and $k$ are positive, or (ii) two negative and one positive.
\end{itemize}

The largest strong-correlation balanced module (LSCBM) $\mathcal{S}^{*}$ is the maximal such subgraph regarding node cardinality $|\mathcal{S}|$, i.e.,
\begin{align}\label{LSCBMset}
\mathcal{S}^{*}=\mathrm{argmax}_{\mathcal{S}\subseteq \{1,2,\ldots,N\}}\{|\mathcal{S}|: \mathcal{S}\mathrm{~is~a~SCBM}\}.
\end{align}
\end{defin}

\begin{figure}
\centering
\resizebox{\columnwidth}{!}{
{\includegraphics[width=3\textwidth]{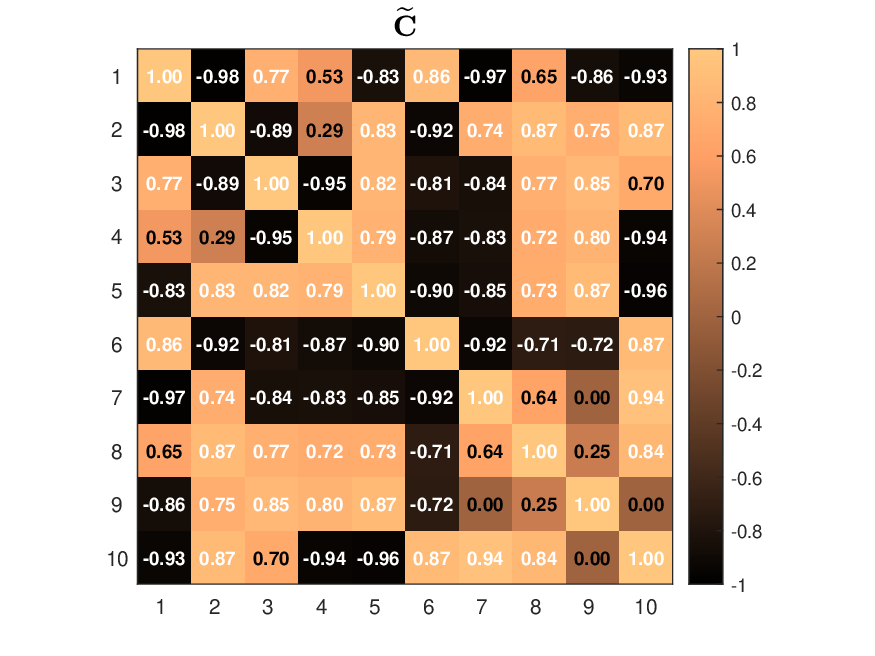}}
{\includegraphics[width=3\textwidth]{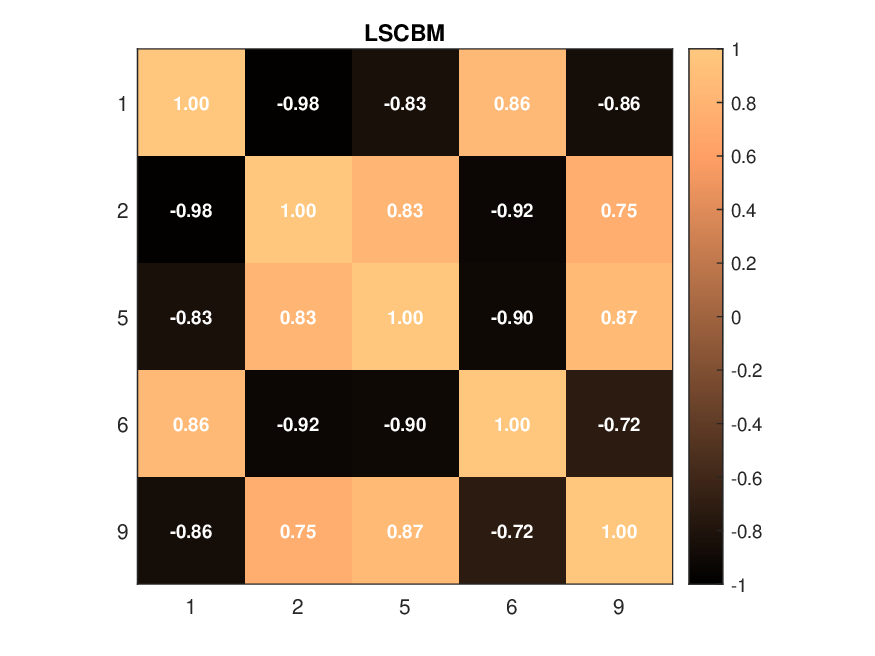}}
}
\caption{An illustrative example of the statistically validated correlation matrix $\widetilde{\mathbf{C}}$ and its LSCBM.}
\label{TOYLSCBM} %% label for entire figure
\end{figure}

In Equation (\ref{SCC}), the threshold $\sigma$ defines the minimum correlation strength required for stocks within the LSCBM. A higher $\sigma$ value yields a smaller module size. Unless otherwise specified, we set $\sigma=0.7$ throughout this article. It could be noted that the parameter $ \sigma $ introduced here is conceptually distinct from the thresholds employed in classical threshold network. In our formulation, $ \sigma $ serves as a quantitative criterion reflecting the strength of significant correlations to be selected for specific network construction. In financial markets, the correlations are usually moderate rather than strong, adopting a strength threshold such as 0.7 or 0.8 is generally appropriate. Nevertheless, the specific choice of $ \sigma $ could be carefully calibrated in light of the characteristics of the underlying dataset and the requirements of the target application domain. Figure \ref{TOYLSCBM} illustrates an example of the statistically validated correlation matrix $\widetilde{\mathbf{C}}$ and the LSCBM extracted from it. While rooted in network science, the LSCBM moves beyond pure graph theory to deliver practical insights for analyzing stock market structure and portfolio design. Its importance rests on the following aspects:
\begin{itemize}
  \item By identifying clusters of stocks with strong correlations where the absolute value of the correlation coefficient is no smaller than $\sigma$, LSCBM provides a clear lens to view market segments that move in tandem. This is crucial for understanding sector dynamics and the transmission of market shocks. Such strongly correlated groups often reflect common underlying factors like industry trends, macroeconomic conditions, or shared risk exposures. For instance, tech stocks might form an LSCBM due to their collective sensitivity to interest rate changes or technological innovation cycles.

\item The structural balance aspect of LSCBM offers profound risk management insights. When all triangles within the module are balanced—either through uniform positive correlations or configurations where ``the enemy of my enemy is my friend"—this reveals stable relational patterns. In finance, this stability is valuable for predicting how shocks propagate through the market. A balanced negative triangle, where two negative correlations and one positive correlation exist among three stocks, presents a natural hedging opportunity. This configuration allows investors to construct portfolios where losses in one position are offset by gains in another, providing a built-in risk mitigation mechanism.

\item The LSCBM framework enhances portfolio construction by highlighting both opportunities for concentration and diversification. Strong positive correlation clusters may appeal to investors seeking focused sector exposure, while the inclusion of balanced negative triangles enables the creation of resilient portfolios that withstand various market conditions. By identifying these structurally balanced subnetworks, investors can make more informed decisions about asset allocation, knowing they are backed by statistically validated relationships rather than spurious correlations.
\end{itemize}

In essence, the LSCBM concept bridges network theory with practical financial applications, offering a robust framework for analyzing market structure, designing portfolios, and managing risk in a manner that respects the complex, interdependent nature of financial markets.
\begin{rem}
While introduced here within statistically validated stock correlation networks, the concept of LSCBM offers a fundamental framework broadly applicable to any undirected network, weighted or unweighted. Its core requirements – identifying a maximal group where all pairwise connections meet a meaningful strength threshold and where the overall structure adheres to balance theory (ensuring stable triad configurations) – provide a universal lens for uncovering cohesive and relationally stable subsystems. This allows us to identify critical, resilient cores characterized by strong, well-structured interactions across diverse domains, from social and biological systems to other complex networks.
\end{rem}
\subsection{Theoretical analysis of LSCBM}
The definition of SCBM and LSCBM within statistically validated correlation networks in Definition \ref{LSCBM} captures crucial aspects of financial relationships: the statistical significance of correlations, their strength (via the strength threshold $\sigma$), and their sign. This representation is rich and directly grounded in empirical data, making it highly relevant for practical financial analysis. However, when we shift our focus to theoretical analysis—specifically, to rigorously establish properties such as the asymptotic existence, expected size, and multiplicity of the core concept LSCBM in large-scale markets, we encounter significant challenges inherent to the continuous, data-dependent nature of this construction.

Proving fundamental properties about the LSCBM, especially as the number of stocks \(N\) grows large, necessitates a formal probabilistic model. We require a framework that allows us to control edge generation probabilities and analyze emergent structures precisely. Random graph models provide this foundation. Yet, directly modeling the continuous, statistically validated correlation matrix \(\widetilde{\mathbf{C}}\) within a random graph framework is intractable for deriving sharp asymptotic results. The continuous edge weights and the complex dependence structure arising from the validation process (which itself depends on the underlying return time series) make analytical characterization difficult.

To overcome this barrier and enable rigorous theoretical exploration, we introduce a carefully chosen abstraction: the random signed network. This model, denoted as \(\mathcal{G}(N, \alpha, \beta)\) given in Definition \ref{SignedModel}, simplifies the edge representation while preserving the core structural information essential for defining and analyzing LSCBM in a theoretical context. Crucially, it discards the precise correlation magnitude but retains all key pieces of information derived from the statistical validation and strength filtering process. A signed network, characterized by edge weights in $\{-1, 0, +1\}$, provides the necessary theoretical lens. Here, a non-zero edge ($|\widetilde{\mathbf{C}}_{i,j}|\geq\sigma$) is simply represented by its sign (+1 or -1), and a zero edge ($|\widetilde{\mathbf{C}}_{i,j}|<\sigma$ or statistically insignificant) remains 0. This binarization (+1, -1, 0) captures the essence of the economically meaningful relationships identified in the statistically validated network: which stocks have strong, statistically validated connections and whether those connections are positive or negative. The continuous correlation strength, while important for the initial filtering, is not directly utilized by structural balance theory, which operates solely on the signs of the relationships within triangles. The condition \(|\widetilde{\mathbf{C}}_{i,j}| \geq \sigma\) ensures the edge is economically meaningful; the sign determines its role in structural balance. This abstraction allows us to leverage powerful tools from random graph theory. We can model \(\alpha\) and \(\beta\) as edge formation probabilities, analyze the resulting combinatorial structures, and derive rigorous asymptotic results concerning the LSCBM's properties under different parameter regimes. In essence, the signed network definitions provide the theoretical foundation needed to rigorously analyze the core concepts introduced for empirical financial network analysis. The formal definition of the random signed network model is provided below.
\begin{defin}\label{SignedModel}
(Random signed graph \(\mathcal{G}(N, \alpha, \beta)\) ). Let $\mathcal{N}$ be a set of nodes with cardinality $|\mathcal{N}| =N$. The random signed graph \(\mathcal{G}(N, \alpha, \beta)\) is defined as an undirected graph where every pair of distinct nodes $(i,j)$, $i\neq j$, independently forms:
\begin{itemize}
  \item A positive edge (denoted $+1$) with probability $\alpha$,
  \item A negative edge (denoted $-1$) with probability $\beta$,
  \item No edge (denoted $0$) with probability $1-\alpha-\beta$.
\end{itemize}
Here, $\alpha \in(0,1], \beta\in[0,1)$, and $\alpha+\beta\leq 1$. The model assumes no self-loops, and all edges are mutually independent. This graph is undirected and characterized by its adjacency matrix \(\mathbf{S} \in \{-1, 0, 1\}^{N \times N}\), where \(\mathbf{S}_{i,j} = \mathbf{S}_{j,i}\) for all \(i, j\).
\end{defin}

Within \(\mathcal{G}(N, \alpha, \beta)\), the definitions of SCBM and LSCBM are direct analogs of those in the stock correlation network but adapted to the random graph’s binary edge structure:
\begin{itemize}
  \item SCBM (Strong-correlation balanced module):
   A subgraph \(\mathcal{S}\) is an SCBM if:
   \begin{itemize}
     \item Every pair of distinct nodes \(i, j \in \mathcal{S}\) has a non-zero edge (i.e., \(\mathbf{S}_{i,j} \neq 0\)).
     \item For every triplet of distinct nodes \(i, j, k \in \mathcal{S}\), the product of edge signs satisfies \(\mathbf{S}_{i,j} \times \mathbf{S}_{i,k} \times \mathbf{S}_{j,k} > 0\), implying either (i) all three edges are positive, or (ii) two edges are negative and one is positive.
   \end{itemize}
  \item LSCBM (Largest strong-correlation balanced module):
   The LSCBM \(\mathcal{S}^*\) is the SCBM with maximum cardinality \(|\mathcal{S}|\), formally:
   \[
   \mathcal{S}^* = \underset{\mathcal{S} \subseteq \mathcal{N}}{\arg\max} \left\{ |\mathcal{S}| : \mathcal{S} \text{ is an SCBM} \right\}.
   \]
\end{itemize}

Having defined the random signed graph model \(\mathcal{G}(N,\alpha,\beta)\) and adapted the concept LSCBM to this theoretical framework, a fundamental question arises: does such a core balanced module even exist in large markets? This is not merely a technical concern. In financial applications, the very premise of identifying stable core subsystems hinges on their asymptotic existence as the market grows. We must first establish whether the strict joint conditions of strong correlation (non-zero edges) and structural balance can realistically coexist in large networks. The following lemma addresses this foundational concern, ensuring our concept is theoretically sound.
\begin{lem}\label{lemmain1}
(Non-emptiness)
Consider a random signed graph \(\mathcal{G}(N, \alpha, \beta)\), when $\alpha>0, \beta\geq0$, and $\alpha+\beta\leq1$, the probability that no LSCBM exists vanishes:
\[
\mathbb{P}(\mathcal{S}^* = \emptyset) \to 0\qquad\text{as~}N\rightarrow\infty.
\]
\end{lem}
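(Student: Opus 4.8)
The strategy is to reduce the claim to a classical fact about cliques in Erd\H{o}s--R\'enyi graphs, thereby bypassing the sign structure entirely. The key observation is that any clique of $\mathcal{G}(N,\alpha,\beta)$ all of whose edges are positive is automatically an SCBM: every pair inside it carries a non-zero edge, and every triangle has sign product $(+1)\times(+1)\times(+1)=+1>0$, so the structural-balance condition holds vacuously. Hence it is enough to show that, with probability tending to $1$ as $N\to\infty$, the positive-edge subgraph of $\mathcal{G}(N,\alpha,\beta)$ contains a clique of size at least $3$; on that event $\mathcal{S}^*$ contains an SCBM of size $3$ and is therefore non-empty. (A single positive edge would already suffice if one adopts the convention that $2$-node subgraphs count as SCBMs, but using size $3$ makes the argument independent of such conventions.)

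To carry this out, I would first note that, by Definition~\ref{SignedModel}, each pair of nodes is joined by a positive edge independently with probability $\alpha$, so the positive-edge subgraph is distributed exactly as $G(N,\alpha)$ with $\alpha>0$ fixed. The existence of a small clique then follows by either of two routes. One may simply invoke the classical asymptotics of the clique number of dense random graphs: for fixed $\alpha\in(0,1]$, with probability tending to $1$ the graph $G(N,\alpha)$ contains a clique of size $(2+o(1))\log_{1/\alpha}N\to\infty$, in particular one of size $3$. Alternatively, for a self-contained argument, I would apply the second-moment method to $Z_3$, the number of positive triangles: a first-moment count gives $\mathbb{E}[Z_3]=\binom{N}{3}\alpha^{3}=\Theta(N^{3})\to\infty$ (more generally a fixed triple is a balanced triangle with probability $\alpha^{3}+3\alpha\beta^{2}\geq\alpha^{3}>0$), while only pairs of triangles sharing exactly one edge contribute to the covariance---there are $\Theta(N^{4})$ such pairs and each contributes $O(1)$---so $\mathrm{Var}(Z_3)=\Theta(N^{4})=o\!\left(\mathbb{E}[Z_3]^{2}\right)$, and Chebyshev's inequality yields $\mathbb{P}(Z_3=0)\to 0$. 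Either way, $\mathbb{P}(\mathcal{S}^*=\emptyset)\leq\mathbb{P}(\text{no positive triangle in }G(N,\alpha))\to 0$.

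I do not expect a genuine obstacle in this proof; the real work is in recognizing the reduction above, after which the remainder is standard. The only points that need care are mild bookkeeping ones: the argument must cover the entire admissible range of $(\alpha,\beta)$---including $\beta$ as large as $1-\alpha$ and the degenerate case $\beta=0$---which it does, since it relies only on the positive edges and only on the hypothesis $\alpha>0$; and one should avoid leaning on any unstated convention for trivial modules, which the size-$3$ formulation sidesteps. Finally, it is worth recording that this argument in fact proves more than the lemma asks: the positive-clique route shows $|\mathcal{S}^*|\to\infty$ with probability tending to $1$, indeed at a rate of order $\log N$, which anticipates the size-scaling results established subsequently.
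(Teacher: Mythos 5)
Your proposal is correct and, in its self-contained second route, essentially reproduces the paper's own argument: a second-moment bound on the number of size-$3$ balanced triangles (the paper uses the full success probability $\alpha^{3}+3\alpha\beta^{2}$, which you note parenthetically, while you may restrict to all-positive triangles with probability $\alpha^{3}$), with the covariance controlled by the $\Theta(N^{4})$ pairs of triplets sharing an edge and Chebyshev's inequality giving $\mathbb{P}(Z_3=0)\to 0$. The reduction to positive triangles and the remark that this already yields $|\mathcal{S}^*|\to\infty$ are fine but do not change the substance of the argument.
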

Lemma \ref{lemmain1} provides the cornerstone for our theoretical analysis: with high probability, at least one LSCBM exists in a large random signed network when \(\alpha > 0\). Knowing that LSCBM exists allows us to confidently explore its scaling behavior and multiplicity property under different network regimes, which is crucial for understanding its potential role in modeling real financial markets.

With existence guaranteed, we investigate the asymptotic scaling of LSCBM’s size within the general regime of \(\mathcal{G}(N,\alpha,\beta)\), where edge probabilities \(\alpha\) and \(\beta\) remain fixed as \(N \to \infty\). Understanding this scaling is crucial—it quantifies how the core market subsystem grows relative to the overall market size and reveals its dependence on the balance between positive and negative relationship densities. The following theorem establishes this universal scaling law and a key property about the multiplicity of LSCBM.
\begin{thm}\label{main1}
(General regime) Consider a random signed graph \(\mathcal{G}(N, \alpha, \beta)\). Define the strong-correlation balanced module (SCBM) $\mathcal{S}$ as a partition \(A \cup B\) (possibly empty parts) such that (1) all edges within \(A\) are positive, (2) all edges within \(B\) are positive, and (3) all edges between \(A\) and \(B\) are negative, i.e., every triangle in a SCBM must obey structural balance (the product of its edge signs is positive) and SCBM has at least three nodes. The largest strong-correlation balanced module (LSCBM) is defined as the SCBM $\mathcal{S}^{*}$ of maximum cardinality in Equation (\ref{LSCBMset}). Then for fixed \(\alpha, \beta > 0\) with \(\alpha + \beta \leq 1\), as \(N \to \infty\), with high probability, we have
\begin{itemize}
  \item $\mathbb{E}[|\mathcal{S}^*|] \sim \frac{\log N}{\lambda(\alpha, \beta)}, \quad \lambda(\alpha, \beta) = \begin{cases}
\frac{1}{2} |\log \alpha| & \alpha \geq \beta \\
\frac{1}{4} (|\log \alpha| + |\log \beta|) & \alpha < \beta
\end{cases}$
\item There exist multiple LSCBMs of size \(|\mathcal{S}^{*}|\). Specifically,
   \[
   \lim_{N \to \infty} \mathbb{P}\left(Z_{|\mathcal{S}^{*}|} \geq 2\right) = 1,
   \]
   where \(Z_s\) denotes the number of SCBM of size \(s\), and \(|\mathcal{S}^{*}|\) is the size of the LSCBM.
\end{itemize}
\end{thm}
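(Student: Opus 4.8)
The plan is to go through the first‑ and second‑moment methods, using the bipartition description of an SCBM already recorded in the theorem statement. Throughout write $\lambda=\lambda(\alpha,\beta)$. First I would fix the one‑set probability. For a fixed vertex set $\mathcal{S}$ of size $s$, if the induced complete signed graph is balanced then it realizes a partition $A\cup B$ as in the statement, and this partition is unique up to swapping $A$ and $B$ (two vertices lie in the same part exactly when the edge between them is positive). Hence the events indexed by distinct unordered bipartitions of $\mathcal{S}$ are disjoint, so
\[
p_{s}:=\mathbb{P}(\mathcal{S}\text{ is an SCBM})=\sum_{a=0}^{\lfloor s/2\rfloor}c_{s,a}\binom{s}{a}\,\alpha^{\binom{s}{2}-a(s-a)}\,\beta^{\,a(s-a)},
\]
where $a=|A|$ and $c_{s,a}\in\{1/2,1\}$ are combinatorial multiplicities. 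Since $\alpha^{\binom{s}{2}}(\beta/\alpha)^{a(s-a)}$ is monotone in the cross‑edge count $a(s-a)$, the dominant summand is $a=0$ (all edges positive, i.e.\ a clique in the positive subgraph) when $\alpha\ge\beta$ and $a=\lfloor s/2\rfloor$ (balanced parts) when $\alpha<\beta$; checking that the binomial factors and the number of summands contribute only $e^{O(s)}=e^{o(s^{2})}$, this collapses to the single estimate $p_{s}=\exp\!\big(-(1+o(1))\,s^{2}\lambda\big)$ as $s\to\infty$, which is exactly the origin of the two branches of $\lambda$.

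For the upper bound I would use $\mathbb{E}[Z_{s}]=\binom{N}{s}p_{s}\le\exp\!\big(s\log N-(1-o(1))s^{2}\lambda\big)$: the exponent is negative once $s>(1+\varepsilon)\log N/\lambda$, and being concave in $s$ it makes the tail $\sum_{s\ge(1+\varepsilon)\log N/\lambda}\mathbb{E}[Z_{s}]$ geometrically dominated by its first term, hence $o(1)$, so Markov's inequality yields $\mathbb{P}\!\big(|\mathcal{S}^{*}|>(1+\varepsilon)\log N/\lambda\big)\to0$. For the matching lower bound I would fix $s_{-}=(1-\varepsilon)\log N/\lambda$, note that the same computation gives $\mathbb{E}[Z_{s_{-}}]\to\infty$, and apply the second moment method to a restricted family: when $\alpha\ge\beta$ the all‑positive SCBMs are precisely the cliques of the Erd\H{o}s--R\'enyi graph $G(N,\alpha)$, so $|\mathcal{S}^{*}|\ge(1-o(1))\log N/\lambda$ follows from the classical concentration of its clique number at $2\log N/|\log\alpha|=\log N/\lambda$; when $\alpha<\beta$ I would count only SCBMs with nearly‑equal parts and bound $\mathbb{E}[Z_{s_{-}}^{2}]$ by splitting over the overlap size $t=|\mathcal{S}\cap\mathcal{S}'|$, the $t\le1$ contributions reproducing $(\mathbb{E}[Z_{s_{-}}])^{2}$ and the target being that the $t\ge2$ part is of smaller order. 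Combining the two bounds gives $|\mathcal{S}^{*}|=(1+o(1))\log N/\lambda$ with high probability; since $0\le|\mathcal{S}^{*}|\le N$ and the complementary event has probability $e^{-c(\log N)^{2}}$, the same asymptotics transfer to $\mathbb{E}[|\mathcal{S}^{*}|]$.

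For multiplicity I would use two‑point concentration. Since $\mathbb{E}[Z_{s+1}]/\mathbb{E}[Z_{s}]=\tfrac{N-s}{s+1}\,p_{s+1}/p_{s}=\Theta\!\big(1/(N\log N)\big)$ for $s$ near $\log N/\lambda$, the sequence $\mathbb{E}[Z_{s}]$ falls by a factor $\Theta(N\log N)$ per unit of $s$, so there is $r=r(N)\sim\log N/\lambda$ with $\mathbb{E}[Z_{r}]\to\infty$ while $\sum_{s\ge r+2}\mathbb{E}[Z_{s}]\to0$; this pins $|\mathcal{S}^{*}|\in\{r,r+1\}$ with high probability. The second‑moment estimate applied at level $r$ forces $Z_{r}$ to concentrate around $\mathbb{E}[Z_{r}]\to\infty$, hence $Z_{r}\ge2$ with high probability, which settles the event $\{|\mathcal{S}^{*}|=r\}$; the remaining event $\{|\mathcal{S}^{*}|=r+1\}$ is handled by the finer two‑point analysis, which controls the number of maximum‑size SCBMs there as well, giving $\lim_{N\to\infty}\mathbb{P}\!\big(Z_{|\mathcal{S}^{*}|}\ge2\big)=1$.

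\textbf{Main obstacle.} The crux is the second‑moment variance bound in the regime $\alpha<\beta$: the bipartition structure makes the overlap combinatorics genuinely richer than the Erd\H{o}s--R\'enyi clique computation, because one must simultaneously track how the two parts of a second structure meet the two parts of the first and how the shared versus private edges are allocated between the $\alpha$‑ and $\beta$‑constraints, and the whole argument hinges on forcing the $t\ge2$ sum below $(\mathbb{E}[Z_{s_{-}}])^{2}$. A secondary difficulty is the boundary case of the two‑point concentration used for multiplicity, where $\mathbb{E}[Z_{r+1}]$ is of order $1$.
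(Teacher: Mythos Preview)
For the size scaling your plan matches the paper's: both compute $\mathbb{E}[Z_s]=\binom{N}{s}p_s$, identify the dominant bipartition ratio ($a=0$ when $\alpha\ge\beta$, $a=1/2$ when $\alpha<\beta$) that produces the two branches of $\lambda$, apply Markov above $(1+\varepsilon)s_c$ and Chebyshev below $(1-\varepsilon)s_c$. The paper packages the variance bound as a single auxiliary theorem covering both regimes at once, whereas you split off $\alpha\ge\beta$ by quoting the classical $G(N,\alpha)$ clique number and restrict to near-balanced bipartitions for $\alpha<\beta$; either route works, and the obstacle you isolate---the overlap combinatorics for two intersecting bipartitioned structures when $\alpha<\beta$---is precisely the content of the paper's auxiliary variance theorem.

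Your multiplicity argument departs from the paper, and the boundary step you flag is a genuine gap in your proposal. The paper does not use two-point concentration at all: it fixes a subcritical size $s=(1-\epsilon)s_c$ with $\epsilon<1/3$, counts unordered pairs of \emph{vertex-disjoint} SCBMs of size $s$, shows by a second-moment bound on this pair count that at least one such pair exists w.h.p., and then argues that the two members cannot lie inside a common LSCBM because that would force $|\mathcal{S}^*|\ge 2s>(1+\epsilon)s_c$. This maneuver sidesteps entirely the difficulty you identify: in your scheme, on the event $\{|\mathcal{S}^*|=r+1\}$ with $\mathbb{E}[Z_{r+1}]=\Theta(1)$ you cannot deduce $Z_{r+1}\ge 2$ from concentration around a bounded mean, and your phrase ``finer two-point analysis'' is a placeholder rather than a method---for random-graph cliques this boundary case is genuinely delicate and is not settled by the standard Bollob\'as--Erd\H{o}s two-point argument alone. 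The paper's disjoint-pair device trades that boundary headache for a variance bound on the pair count, which reuses the overlap estimates already in hand from the first part.
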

Theorem \ref{main1} reveals two key insights for the general regime. First, the size of LSCBM scales logarithmically with \(N\), \(\mathbb{E}[|\mathcal{S}^*|] \sim \log N / \lambda(\alpha,\beta)\), where the scaling constant \(\lambda\) depends critically on the relative magnitudes of \(\alpha\) and \(\beta\). This explicitly links the module’s growth rate to market connectivity patterns. Second, multiple distinct LSCBMs of this maximal size typically coexist in large markets. Real-world markets may exhibit dense interconnectivity. To model this, we consider a dense regime where the probability of a positive edge \(\alpha \approx 1 - b/N\) approaches 1, while negative edges are rare (\(\beta \approx b/N\)). The next theorem characterizes how LSCBM grows under this highly positive connectivity scenario.
\begin{thm}\label{main2}
(Dense regime) Consider a random signed graph \(\mathcal{G}(N, \alpha, \beta)\) with \(\alpha = 1 - \frac{b}{N} + o(1/N)\) and \(\beta = \frac{b}{N} + o(1/N)\), where \(b > 1\) is constant. As \(N \to \infty\), we have
\begin{itemize}
  \item $\mathbb{E}[|\mathcal{S}^{*}|] = \Theta\left(\frac{N \log b}{b}\right)$ and w.h.p. LSCBM is an all-positive module.
\item There exist multiple LSCBMs of size \(|\mathcal{S}^{*}|\) with high probability.
\end{itemize}
\end{thm}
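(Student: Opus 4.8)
The plan is to reduce Theorem~\ref{main2} to the independence number of an auxiliary sparse Erd\H{o}s--R\'enyi graph. Let $H$ be the graph on the $N$ nodes whose edges are exactly the pairs $(i,j)$ with $\mathbf{S}_{i,j}\neq+1$; since the per-pair edge types are independent, $H$ is distributed as $G(N,q)$ with $q=1-\alpha=\frac{b}{N}+o(1/N)$. An all-positive SCBM of size at least $3$ is precisely an independent set of $H$ of size at least $3$. So the first target is to pin $\alpha(H)$, and the second is to argue that an SCBM with both parts of the $A\cup B$ decomposition nonempty must be tiny.

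For $\alpha(H)$ the upper bound is a one-line first-moment estimate: $\mathbb{E}\big[\#\{\text{independent }k\text{-sets}\}\big]=\binom{N}{k}(1-q)^{\binom{k}{2}}\le\exp\!\big(k\log\tfrac{eN}{k}-\tfrac{bk^2}{2N}(1+o(1))\big)$, which is $\exp(-\Omega(N))$ once $k\ge C\,\tfrac{N\log b}{b}$ with $C$ slightly above $2$; the matching lower bound $\alpha(H)=\Omega(N\log b/b)$ comes from the second-moment method applied to the same count (or, at the level of $\Theta$, from the Caro--Wei / Frieze estimates for sparse random graphs). For the second point, write a candidate SCBM as $A\cup B$ in the theorem's sign pattern, so the $A$--$B$ edges form a negative biclique $K_{|A|,|B|}$. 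If $\min(|A|,|B|)=1$, the smaller part is a single node and the larger part lies in its negative neighbourhood, of size at most the maximum degree $\Delta=(1+o(1))\log N/\log\log N$ of $G(N,\beta)$; if $\min(|A|,|B|)\ge2$, a union bound shows w.h.p.\ that no negative biclique $K_{a,c}$ with $\min(a,c)\ge2$ and $a+c\ge5$ occurs, so the module has at most $4$ nodes. Hence every mixed SCBM has size $O(\log N/\log\log N)=o(N)$, while all-positive SCBMs of size $\Theta(N\log b/b)$ exist; therefore w.h.p.\ $\mathcal{S}^*$ is all-positive and $|\mathcal{S}^*|=\alpha(H)=\Theta(N\log b/b)$. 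Since $3\le|\mathcal{S}^*|\le N$ deterministically and the exceptional event has probability $\exp(-\Omega(N))$, the same $\Theta(N\log b/b)$ bound holds for $\mathbb{E}[|\mathcal{S}^*|]$.

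For the multiplicity it suffices, by this reduction, to produce two distinct maximum independent sets of $H$ with high probability, and an isolated-edge swap does this: with high probability $H=G(N,q)$ has a connected component that is a single edge $\{u,v\}$ (the expected number of such components is $\tfrac{bN}{2}e^{-2b}(1+o(1))\to\infty$, turned into a w.h.p.\ statement by a routine variance bound). Any maximum independent set $S$ of $H$ contains exactly one of $u,v$ --- not both, as they are adjacent, and not neither, since otherwise $S\cup\{u\}$ would be a strictly larger independent set --- so replacing that endpoint by the other produces a second maximum independent set; both have size $|\mathcal{S}^*|=\Theta(N)\ge3$, hence $Z_{|\mathcal{S}^*|}\ge2$ w.h.p. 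The genuine obstacle is the lower bound $\alpha(H)=\Omega(N\log b/b)$ with the sharp $\log b/b$ constant: the upper bound, the all-positive reduction, and the multiplicity swap are elementary first-moment, union-bound, and combinatorial arguments, whereas the right constant needs either the delicate second-moment analysis of independent sets in sparse Erd\H{o}s--R\'enyi graphs or an appeal to Frieze-type results (if one only asks for $\Theta(N)$ at each fixed $b$, Caro--Wei already suffices and no real difficulty remains).
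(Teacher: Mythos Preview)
Your proposal is correct and takes a genuinely different route from the paper. The paper works directly with first- and second-moment computations on the module count $Z_s$: it shows $\mathbb{E}[\text{non-positive count}]\to 0$ at any linear size $s=\omega N$, then analyzes $\mathbb{E}[Z_k]=\binom{N}{k}\alpha^{\binom{k}{2}}$ for all-positive modules via a function $h(c)$ of the rescaled size $c=k/\mu$ (with $\mu=N\log b/b$), uses a full variance decomposition to obtain $\mathbb{P}(Z_\mu>0)\to 1$, and for multiplicity constructs pairs of \emph{disjoint} size-$\delta N$ all-positive modules whose combined size exceeds the LSCBM upper bound. By contrast, you reduce everything to the independence number of the complementary graph $H\sim G(N,b/N)$: the size bound becomes the classical $\alpha(H)=\Theta(N\log b/b)$ (where, as you note, for fixed $b$ even Caro--Wei gives the needed $\Theta(N)$); your structural biclique/max-degree argument for mixed modules is sharper than the paper's first moment (you get $O(\log N/\log\log N)$ rather than merely $o(N)$); and your isolated-edge swap for multiplicity is considerably simpler and more direct than the paper's disjoint-pair construction with second-moment control on $Q$. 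What the paper's approach buys is self-containment (no appeal to Frieze-type or Caro--Wei results) and a uniform first/second-moment template that is reused verbatim across Theorems~\ref{main1}--\ref{main3}; what your approach buys is conceptual clarity and shorter arguments tailored to this specific dense regime.
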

Theorem \ref{main2} shows that in the dense regime, LSCBM's size scales linearly with the market size. This linear growth suggests that highly positive connected markets tend to have proportionally large core balanced subsystems. Furthermore, multiple such large modules coexist with high probability. Conversely, markets may be dominated by adversarial relationships (negative edges). We consider a negative-dominated regime where \(\beta \to 1^{-}\). This regime tests the limits of structural balance under antagonism. The following theorem establishes LSCBM's behavior under such a conflict-dominated case.
\begin{thm}\label{main3}
(Negative-dominated regime) Consider a random signed graph \(\mathcal{G}(N, \alpha, \beta)\) with \(\beta \to 1^{-}\) and \(\alpha \to 0^{+}\) as \(N \to \infty\). We have:
\begin{itemize}
  \item $\mathbb{E}[|\mathcal{S}^{*}|] = O\left(\frac{\log N}{|\log \alpha|}\right).$
  \item If additionally \(|\log \alpha| = o(\sqrt{\log N})\), we have
   \[
   \lim_{N \to \infty} \mathbb{P}\left(Z_{|\mathcal{S}^{*}|} \geq 2\right) = 1.
   \]
\end{itemize}
\end{thm}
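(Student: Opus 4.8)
The plan is to study $Z_s$, the number of SCBMs of size $s$, through first- and second-moment estimates. The basic combinatorial fact is that a fixed $s$-set carrying a fixed balanced sign pattern is exactly a choice of unordered bipartition $A\cup B$ with $|A|=a$, $|B|=s-a$ (edges $+$ inside a part, $-$ across parts), and such a configuration occurs in $\mathcal{G}(N,\alpha,\beta)$ with probability $\alpha^{\binom{a}{2}+\binom{s-a}{2}}\beta^{a(s-a)}=\alpha^{\binom{s}{2}}(\beta/\alpha)^{a(s-a)}$. Since $\beta/\alpha\to\infty$ in the negative-dominated regime, this is maximized at the balanced split $a=\lfloor s/2\rfloor$, so summing over the $2^{s-1}$ balanced patterns and the $\binom{N}{s}$ vertex sets gives both the crude bound $\mathbb{E}[Z_s]\le\binom{N}{s}\,2^{s-1}\alpha^{h(s)}$ with $h(s):=\binom{s}{2}-\lfloor s^{2}/4\rfloor\sim s^{2}/4$ (using $\beta^{\lfloor s^{2}/4\rfloor}\le1$) and the sharp asymptotic $\log\mathbb{E}[Z_s]=s\log(2eN/s)-h(s)\,|\log\alpha|+O(\log s)$, in which the $\beta$-contribution $\lfloor s^{2}/4\rfloor\,|\log\beta|$ has been absorbed since $|\log\beta|=o(|\log\alpha|)$ holds automatically when $\beta\to1^{-}$ and $\alpha\to0^{+}$.

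For the first bullet, set $s_{0}:=\lceil c\,\log N/|\log\alpha|\rceil$ with $c$ a large absolute constant. Plugging $\binom{N}{s}\le(eN/s)^{s}$ and $h(s)\ge s^{2}/8$ (valid for $s\ge4$) into the crude bound yields $\log\mathbb{E}[Z_s]\le s\bigl(\log(2eN)-\tfrac{s}{8}|\log\alpha|\bigr)+O(1)$, which is $\le-\tfrac{s}{2}\log N$ for all $s\ge s_{0}$ once $s_{0}\ge4$; hence $\sum_{s\ge s_{0}}\mathbb{E}[Z_s]\to0$. Combining $\mathbb{P}(|\mathcal{S}^{*}|\ge s)\le\mathbb{E}[Z_s]$ with $|\mathcal{S}^{*}|\le N$, the tail-sum identity $\mathbb{E}[|\mathcal{S}^{*}|]=\sum_{s\ge1}\mathbb{P}(|\mathcal{S}^{*}|\ge s)\le s_{0}+\sum_{s>s_{0}}\mathbb{E}[Z_s]=s_{0}+o(1)$ gives $\mathbb{E}[|\mathcal{S}^{*}|]=O(\log N/|\log\alpha|)$; the degenerate regime $|\log\alpha|=\omega(\log N)$ (where $s_{0}<4$, and already $\mathbb{E}[|\mathcal{S}^{*}|]\le N\,\mathbb{E}[Z_3]\le N^{4}\alpha\to0$) is immediate.

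For the multiplicity bullet, the sharp asymptotic shows $\log\mathbb{E}[Z_s]$ is, to leading order, the concave function $s\log N-\tfrac{s^{2}}{4}|\log\alpha|$, whose positive root is $s\approx4\log N/|\log\alpha|$; the hypothesis $|\log\alpha|=o(\sqrt{\log N})$ forces this critical size to diverge while keeping the $O(\log s)$ error negligible beside the leading $\Theta((\log N)^{2}/|\log\alpha|)$ terms. Since one step changes the leading part by $\log\mathbb{E}[Z_{s+1}]-\log\mathbb{E}[Z_s]\approx\log N-\tfrac{s}{2}|\log\alpha|$, which is $\approx-\log N$ at the crossover, $\mathbb{E}[Z_s]$ falls there by a factor $N^{-1+o(1)}$ in a single step; I therefore choose $s^{*}$ so that $\mathbb{E}[Z_{s^{*}}]\to\infty$ and $\mathbb{E}[Z_{s^{*}+1}]\to0$, which the sharpness of the transition makes possible (the familiar two-point-concentration ambiguity for a sparse set of ``awkward'' $N$ is handled in the standard way, shifting $s^{*}$ by one and using that every subset of a balanced module is itself a balanced module). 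Then $Z_{s^{*}+1}=0$ with high probability, so $|\mathcal{S}^{*}|\le s^{*}$ w.h.p.

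It remains to prove $Z_{s^{*}}\ge2$ w.h.p., which I do by a second-moment argument: write $\mathbb{E}[Z_{s^{*}}^{2}]=\sum_{t=0}^{s^{*}}\sum_{|S\cap S'|=t}\mathbb{P}(S,S'\text{ both SCBM})$ and compare it with $\mathbb{E}[Z_{s^{*}}]^{2}$. The terms with $t\le1$ involve no shared edge and reproduce $(1+o(1))\mathbb{E}[Z_{s^{*}}]^{2}$; the work is to show $\sum_{t\ge2}$ is $o(\mathbb{E}[Z_{s^{*}}]^{2})$. This is the technical core and the step I expect to be the main obstacle: for overlap $t$ the two balanced patterns must restrict to the same pattern on $S\cap S'$, which simultaneously constrains the admissible pairs of patterns and makes the joint edge-probability split into an $S$-only block, an $S'$-only block, and a shared block; one must count positive versus negative edges in each block — once more dominated by near-balanced bipartitions — and show the resulting overlap-$t$ contribution is a summable fraction of $\mathbb{E}[Z_{s^{*}}]^{2}$, exploiting crucially that $\mathbb{E}[Z_{s^{*}}]\to\infty$ and that $s^{*}=O(\log N/|\log\alpha|)$ is only polylogarithmic in $N$ (which is where $|\log\alpha|=o(\sqrt{\log N})$ earns its keep). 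Chebyshev's inequality then yields $Z_{s^{*}}=(1+o(1))\mathbb{E}[Z_{s^{*}}]\to\infty$, so $Z_{s^{*}}\ge2$ w.h.p.; together with $|\mathcal{S}^{*}|\le s^{*}$ w.h.p. this forces $|\mathcal{S}^{*}|=s^{*}$, hence $Z_{|\mathcal{S}^{*}|}=Z_{s^{*}}\ge2$ w.h.p., i.e. $\mathbb{P}(Z_{|\mathcal{S}^{*}|}\ge2)\to1$.
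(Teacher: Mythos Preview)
Your argument for the first bullet is essentially the paper's: bound the single-set SCBM probability by $2^s\alpha^{h(s)}$ with $h(s)\sim s^2/4$, plug into a tail sum for $\mathbb{E}[|\mathcal{S}^*|]$, and control the tail beyond $c\log N/|\log\alpha|$ via Markov. The paper uses the slightly cruder lower bound $\binom{k}{2}+\binom{s-k}{2}\ge s(s-2)/4$ and a double-sum formulation, but the skeletons match.

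For the second bullet, your tactic differs from the paper's. You attempt \emph{two-point concentration}: find a single $s^*$ with $\mathbb{E}[Z_{s^*}]\to\infty$ and $\mathbb{E}[Z_{s^*+1}]\to0$, pin $|\mathcal{S}^*|=s^*$ w.h.p., then invoke the second moment at that one value. The paper instead conditions on $|\mathcal{S}^*|$ lying in a wide deterministic window $[s_{\text{low}},s_{\text{high}})$ (both endpoints $\Theta(k)$ with $k=\log N/|\log\alpha|$), proves $\mathbb{P}(Z_s<2)\to0$ \emph{uniformly} over all integers $s$ in that window via a variance lemma established under $|\log\alpha|=o(\sqrt{\log N})$, and then averages over the random value of $|\mathcal{S}^*|$. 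The paper's route buys robustness: it never needs to locate $|\mathcal{S}^*|$ to within $O(1)$, only to within a constant multiple of $k$.

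Your route has a gap exactly at the place you flag as ``awkward.'' If along a subsequence of $N$ there is an integer $s_0$ with $\mathbb{E}[Z_{s_0}]$ bounded away from both $0$ and $\infty$, no $s^*$ with your two properties exists; shifting $s^*$ down by one restores $\mathbb{E}[Z_{s^*}]\to\infty$, but then $\mathbb{E}[Z_{s^*+1}]=\mathbb{E}[Z_{s_0}]$ stays bounded, so $\mathbb{P}(|\mathcal{S}^*|=s_0)$ need not vanish --- and on that event you must show $Z_{s_0}\ge2$, which your second-moment argument (requiring $\mathbb{E}[Z_{s_0}]\to\infty$) cannot deliver. The monotonicity fact you cite (subsets of SCBMs are SCBMs) goes in the wrong direction: it gives lower bounds on $Z_{s_0-1}$ from $Z_{s_0}$, not on $Z_{s_0}$ itself. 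This is not a harmless detail borrowed from the ``standard'' clique argument, because there the target is concentration of the \emph{size} of the maximum clique, not multiplicity of maximum cliques; a Poisson-type limit for $Z_{s_0}$ with $\mathbb{P}(Z_{s_0}=1)$ bounded away from $0$ is perfectly consistent with two-point concentration of $|\mathcal{S}^*|$ yet fatal for your conclusion. The paper's uniform-interval strategy is designed precisely to sidestep having to resolve this boundary behaviour.
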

Theorem \ref{main3} demonstrates that widespread negativity imposes a significant constraint on the size of stable core modules, limiting the LSCBM size to scale at most logarithmically, \(\mathbb{E}[|\mathcal{S}^{*}|] = O(\log N / |\log \alpha|)\). This contrasts sharply with the linear growth seen in the dense positive regime, highlighting how widespread negative correlations fragment the market's capacity to form large, cohesive cores. However, an important nuance emerges: under the condition that positive edges, though rare, are not vanishingly fast (\(|\log \alpha| = o(\sqrt{\log N})\)), multiple LSCBMs of this maximal size still emerge with high probability. Even in conflict-rich markets, the core stable structures persist, though smaller and more numerous, reflecting market fragmentation.
\subsection{MaxBalanceCore: an efficient algorithm for identifying LSCBM}
Identifying the LSCBM in large financial networks is a computationally tough task (NP-hard). To tackle this, we develop a heuristic algorithm that leans on structural balance theory  
\citep{harary1953notion,cartwright1956structural} and exploits the natural sparsity controlled by the correlation strength threshold $\sigma$ of the statistically validated correlation network. The core idea is smart: focus the search where big modules are most likely to appear and avoid the combinatorial explosion of checking everything. Here's how it works:
\begin{itemize}
  \item First, we build a signed network by using Equation (\ref{SCC}). Only edges where the absolute correlation strength meets or exceeds the threshold $\sigma$ are left, and any other weaker links are discarded. This signed adjacency matrix $S\in\{-1,0,1\}^{N\times N}$ is our starting point.
  \item Second, we kick things off with ``high-impact" nodes, those with lots of strong connections (high degree centrality). These dense hubs are more likely to be part of large modules. We then choose seeds as nodes with the highest degree centrality in the signed adjacency matrix $S$.
  \item Third, for each seed, we divide its strongly connected neighbors into two groups: set $A$ (positive correlations to the seed) and set $B$ (negative correlations). This is where we get strict:
      \begin{itemize}
        \item Inside $A$ (or $B$), every node must have a strong positive link ($S_{u,v} = 1$) to every other node in $A$ (or $B$). If a node lacks even one positive connection within its faction, it must be removed. This enforces the ``strong module" condition internally.
        \item Every node in $A$ must have a strong negative link ($S_{u,v} = -1$) to every node in $B$. Any node showing neutrality or positivity ($S_{u,v}\geq0$) towards someone in the opposite faction is cut. This ensures a strict structural balance between the groups.
      \end{itemize}
  \item Fourth, the surviving nodes in $A\cup B$ now form a valid strong-correlation balanced module (SCBM) candidate centered on the seed.
  \item Fifth, we try to grow this core module. New nodes can only join if they:
      \begin{itemize}
        \item Have a strong correlation ($|C_{ij}|\geq\sigma$) to every node in $A\cup B$.
        \item Show uniformly positive connections to every member of one entire faction and uniformly negative connections to every member of the other faction(maintaining structural balance in Equation (\ref{balancedtriangle})).
      \end{itemize}
  \item Sixth, we run this process for the top 100 seeds (prioritized by impact) and track the biggest valid module found – the LSCBM.
\end{itemize}

\begin{algorithm}
\caption{\textbf{MaxBalanceCore}}
\label{alg:MaxBalanceCore}
\begin{algorithmic}[1]
\Require Statistically validated correlation matrix $\widetilde{\mathbf{C}} \in [-1,1]^{N\times N}$, strength threshold $\sigma > 0$
\Ensure The largest strong-correlation balanced module $LSCBM$
\State Construct signed adjacency matrix $\mathbf{S}$ where $\mathbf{S}_{i,j} = \begin{cases}
      \operatorname{sign}(\widetilde{\mathbf{C}}_{i,j}) & \text{if } |\widetilde{\mathbf{C}}_{i,j}| \geq \sigma \text{ and } i \neq j \\
      0 & \text{otherwise}
   \end{cases}$
\State Compute node impact: $\text{impact}_i \gets \sum_{j=1}^{N} \mathbb{I}[\mathbf{S}_{i,j} \neq 0]$ for $i = 1,\dots,N$
\State $\text{order} \gets \text{sort indices by } \text{impact} \text{ descending}$
\State $\text{best\_module} \gets \emptyset$
\State $\text{best\_size} \gets 0$

\For{$i \gets 1$ \textbf{to} $\min(100, N)$}
    \State $\text{seed} \gets \text{order}[i]$
    \If{$\text{impact}_{\text{seed}} = 0$}
        \State \textbf{continue}
    \EndIf
    \State $\text{neighbors} \gets \{j \mid \mathbf{S}_{\text{seed},j} \neq 0\}$
    \State $A \gets \{\text{seed}\} \cup \{j \in \text{neighbors} \mid \mathbf{S}_{\text{seed},j} > 0\}$
    \State $B \gets \{j \in \text{neighbors} \mid \mathbf{S}_{\text{seed},j} < 0\}$

    \For{$\text{group} \in \{A, B\}$}
        \If{$|\text{group}| \geq 2$}
            \State Remove $u \in \text{group}$ if $\exists v \in \text{group}\ (u \neq v \land \mathbf{S}_{u,v} \neq 1)$
        \EndIf
    \EndFor

    \If{$A \neq \emptyset$ \textbf{and} $B \neq \emptyset$}
        \State Remove $u \in A$ if $\exists v \in B\ (\mathbf{S}_{u,v} \geq 0)$
        \State Remove $v \in B$ if $\exists u \in A\ (\mathbf{S}_{u,v} \geq 0)$
    \EndIf

    \State $\text{module} \gets A \cup B$
    \State $\text{candidates} \gets \{1,\dots,N\} \setminus \text{module}$
    \State $\text{strong\_candidates} \gets \{ \text{node} \in \text{candidates} \mid \forall j \in \text{module},\ |\mathbf{S}_{\text{node},j}| \geq \sigma \}$

    \For{$\text{node} \in \text{strong\_candidates}$}
        \State $\text{canJoinA} \gets (\forall u \in A,\ \mathbf{S}_{\text{node},u} = 1) \land (\forall v \in B,\ \mathbf{S}_{\text{node},v} = -1)$
        \State $\text{canJoinB} \gets (\forall u \in A,\ \mathbf{S}_{\text{node},u} = -1) \land (\forall v \in B,\ \mathbf{S}_{\text{node},v} = 1)$

        \If{$\text{canJoinA}$}
            \State $A \gets A \cup \{\text{node}\}$
            \State $\text{module} \gets \text{module} \cup \{\text{node}\}$
        \ElsIf{$\text{canJoinB}$}
            \State $B \gets B \cup \{\text{node}\}$
            \State $\text{module} \gets \text{module} \cup \{\text{node}\}$
        \EndIf
    \EndFor

    \If{$|\text{module}| > \text{best\_size}$}
        \State $\text{best\_module} \gets \text{module}$
        \State $\text{best\_size} \gets |\text{module}|$
    \EndIf
\EndFor

\State \Return $\text{best\_module}$
\end{algorithmic}
\end{algorithm}
\begin{figure}[htbp]
\centering
%\resizebox{\columnwidth}{!}{
\includegraphics[height=0.98\textheight,keepaspectratio]{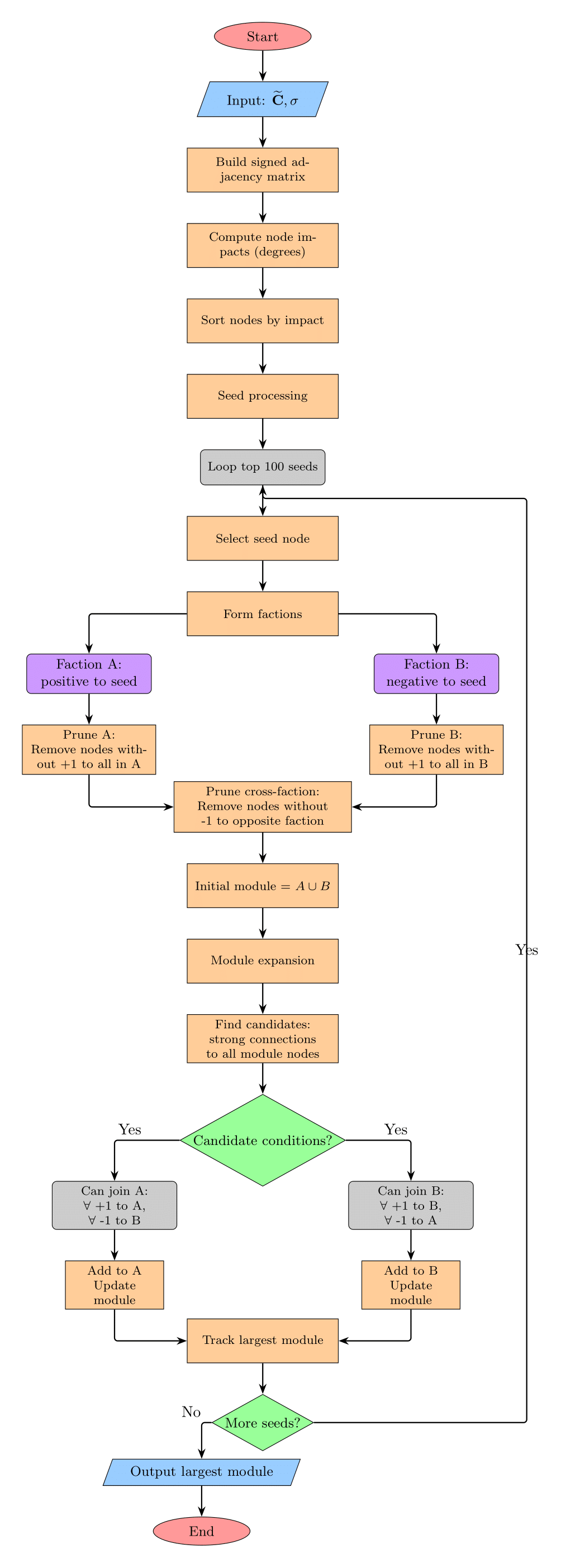}
%}
\caption{Flowchart of our MaxBalanceCore algorithm.}
\label{Flowchart}
\end{figure}

The steps above are summarized in Algorithm \ref{alg:MaxBalanceCore}, where we name our algorithm as MaxBalanceCore because it specifically seeks the maximum-sized module while enforcing structural balance conditions. Figure \ref{Flowchart} displays the flowchart of our MaxBalanceCore algorithm.

\textbf{Complexity analysis.} The time complexity of our proposed MaxBalanceCore algorithm is dominated by the construction of the signed adjacency matrix $S$ ($O(N^{2})$) and the iterative processing of high-impact seeds (up to 100 seeds). For each seed, pruning incompatible nodes involves checking pairwise relationships within subsets $A$ and $B$, which scales as $O(N^{2})$ in the worst case. Module expansion further requires validating candidate nodes against all existing module members, contributing $O(N^{2})$ per seed. Thus, the overall time complexity is $O(N^{2})$. The space complexity is primarily determined by storing the signed adjacency matrix $S$ and auxiliary data structures (e.g., node impact scores, module candidates), resulting in $O(N^{2})$ space due to the dense matrix representation. While sparsity (controlled by $\sigma$) reduces practical computational load, the worst-case bounds remain quadratic in both time and space. This approach stays manageable for huge stock correlation networks (thousands of stocks) because of the following three key choices:
\begin{itemize}
  \item The algorithm initiates searches exclusively from high-degree nodes (prioritized by impact scores $\text{impact}_i$). These hub nodes exhibit a higher statistical likelihood of anchoring large modules. This strategic restriction reduces the number of starting points while maximizing the potential for identifying large solutions early in the search process.
  \item Before module expansion, the algorithm rigorously prunes incompatible nodes using structural balance theory \citep{harary1953notion,cartwright1956structural}. After partitioning a seed's neighbors into faction $A$ and faction $B$, nodes violating strict intra-faction harmony (all $A$-$A$ and $B$-$B$ ties must be $+1$) or inter-faction antagonism (all $A$-$B$ ties must be $-1$) are eliminated. This step drastically reduces the candidate set before computationally intensive expansion, thereby limiting combinatorial growth.
  \item The module expansion phase leverages the inherent sparsity of the statistically validated correlation matrix $\widetilde{\mathbf{C}}$ and the correlation strength threshold $\sigma$. Candidate nodes must satisfy two conditions:
      \begin{itemize}
        \item (i) A strong correlation ($|\widetilde{\mathbf{C}}_{i,j}| \geq \sigma$) exists with every current module member.
        \item (ii) Uniform sign alignment across entire factions (e.g., all ties to $A$ are $+1$ and all ties to $B$ are $-1$, or vice versa).
      \end{itemize}
  These conditions are highly selective in sparse networks, ensuring very few candidates qualify for evaluation. Consequently, the per-iteration computational burden remains manageable.
\end{itemize}

Although the MaxBalanceCore algorithm cannot guarantee identification of the exact LSCBM, discovering a large SCBM holds significant value from both algorithmic and practical perspectives. From an algorithmic perspective, identifying the exact LSCBM is an NP-hard problem, implying that the computational complexity of finding an exact solution would grow exponentially with the scale of the network. Our MaxBalanceCore algorithm employs efficient heuristic methods, leveraging structural balance theory and correlation strength thresholds to efficiently search for large SCBM, thereby avoiding combinatorial explosions and providing practical and scalable solutions for large-scale financial networks within a reasonable timeframe. Such trade-offs are necessary when dealing with complex networks, as exact solutions are often impractical in reality.

In terms of practical applications, the core objective of stock market analysis is to identify groups of stocks that exhibit strong correlations and stable relationships. A large SCBM can offer crucial insights into market structure by revealing which stocks exhibit economically significant and stable relationships. This stability is particularly vital for portfolio design and risk management. For instance, investors can use the stocks identified within an SCBM to construct portfolios with inherent hedging mechanisms or to focus on specific industry groups or market trends. Therefore, while the MaxBalanceCore algorithm may not guarantee identification of the absolute largest SCBM, the large SCBM it identifies is sufficient to meet the needs of financial analysis and investment decision-making.
\section{Experimental evaluation}\label{Sec4}
In this section, we present comprehensive experimental evaluations to validate the MaxBalanceCore algorithm and the proposed LSCBM framework. We first conduct simulation studies to assess the accuracy and efficiency of the algorithm, and verify the theoretical scaling laws for LSCBM's size under different network regimes. We then perform empirical analysis using Chinese stock market data to demonstrate the framework's utility in capturing dynamic market reorganizations during economic events.
\subsection{Simulation studies}
\subsubsection{Performance evaluation of MaxBalanceCore}
In this part, to validate the accuracy and efficiency of our MaxBalanceCore algorithm, we construct synthetic statistically validated correlation networks where the true LSCBMs are known as follows: Suppose there are $N$ nodes and the threshold is $\sigma=0.7$. We first randomly partition $(N_{A}+N_{B})$ nodes into two disjoint sets: set $A$ with $N_{A}$ nodes and set B with $N_{B}$ nodes, ensuring all intra-set connections within $A$ or $B$ are strongly positive (edge weight = +1), while all inter-set connections between A and B are strongly negative (edge weight = -1). The remaining $(N-N_{A}-N_{B})$ nodes (set $R$) are weakly correlated with all other nodes in the network, where any pairwise connection involving set $R$ has absolute correlation strength strictly below the threshold $\sigma$. Specifically, these edges are absent (weight = 0) with probability 0.3, weakly positive (uniformly sampled from $(0,\sigma)$) with probability 0.35, or weakly negative (uniformly sampled from $(-\sigma,0)$) with probability 0.35. This configuration guarantees that the ground-truth largest strong-correlation balanced module is precisely the union of sets A and B, satisfying both the minimum correlation strength ($|\widetilde{\mathbf{C}}_{i,j}| \geq \sigma$) and structural balance conditions. For each simulation study, $N, N_{A}$, and $N_{B}$ are set independently. We say that our MaxBalanceCore correctly recovers LSCBM if the nodes of the output of MaxBalanceCore are exactly the same as those in LSCBM. To measure MaxBalanceCore's accuracy, we use the Accuracy rate defined as the ratio of correctly estimating LSCBM to the total number of independent trials. For each simulation setting, we consider 100 independent replicates in this article. Finally, we should emphasize that since LSCBM is a new concept proposed in this work, no prior algorithms exist to detect it. Our MaxBalanceCore algorithm is the first specialized solution designed for identifying LSCBM in statistically validated correlation networks. Consequently, we are unable to include direct algorithmic comparisons in our numerical experiments.
\begin{figure}[H]
\centering
%\resizebox{\columnwidth}{!}{
{\includegraphics[width=0.3\textwidth]{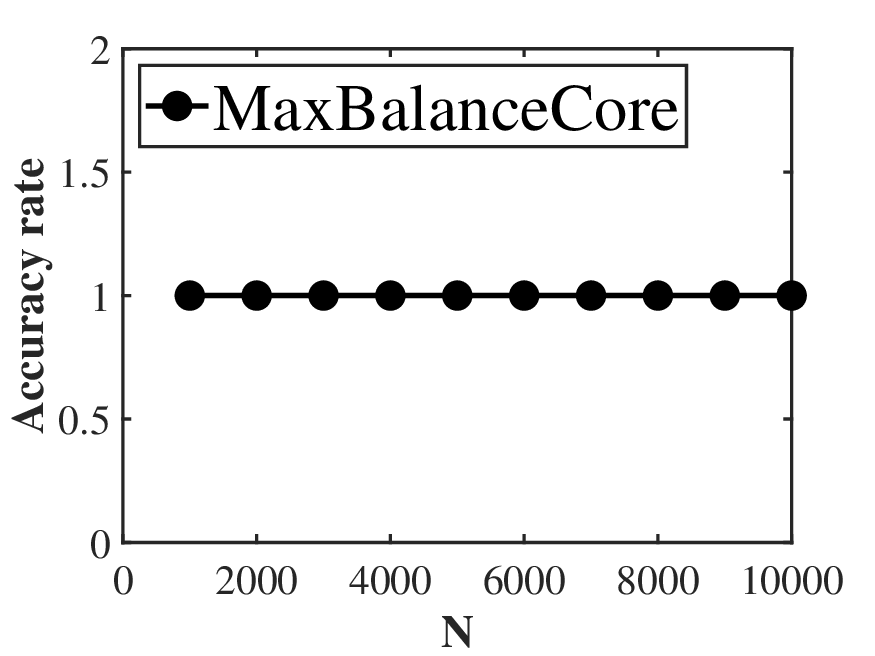}}
{\includegraphics[width=0.3\textwidth]{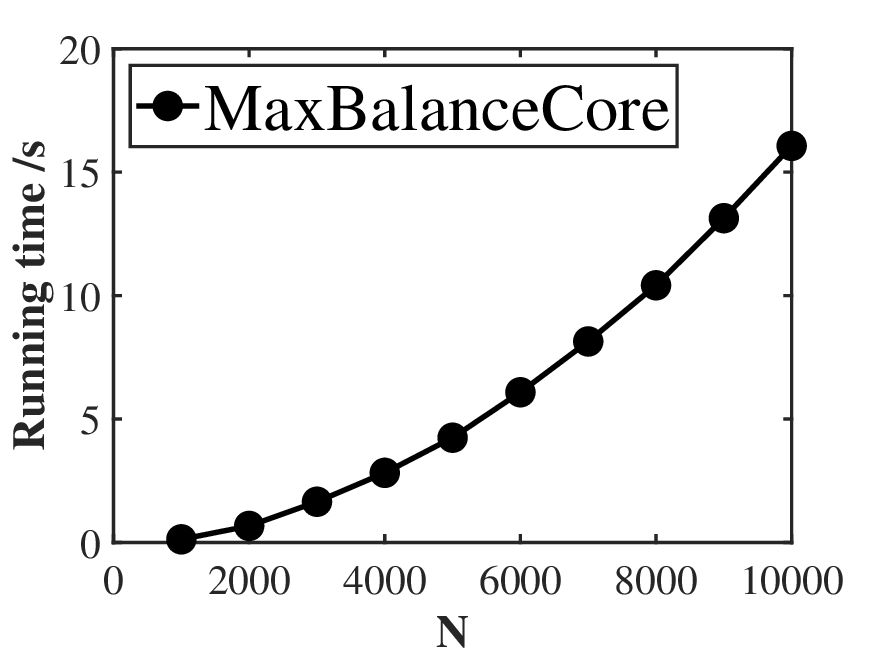}}
%}
\caption{Left: Accuracy rate against $N$. Right: Running time against $N$.}
\label{Sim1} %% label for entire figure
\end{figure}
\begin{figure}[H]
\centering
%\resizebox{\columnwidth}{!}{
{\includegraphics[width=0.3\textwidth]{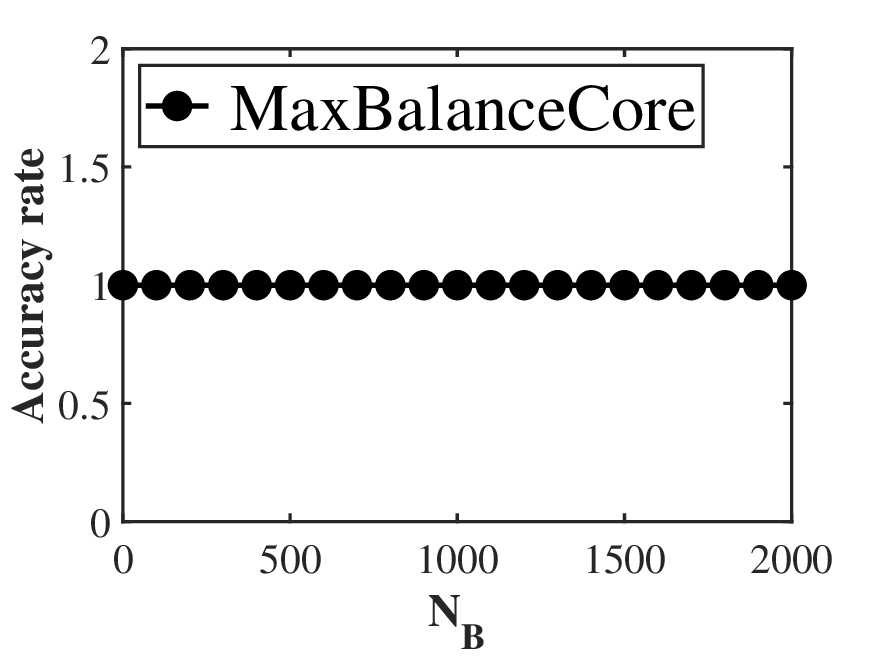}}
{\includegraphics[width=0.3\textwidth]{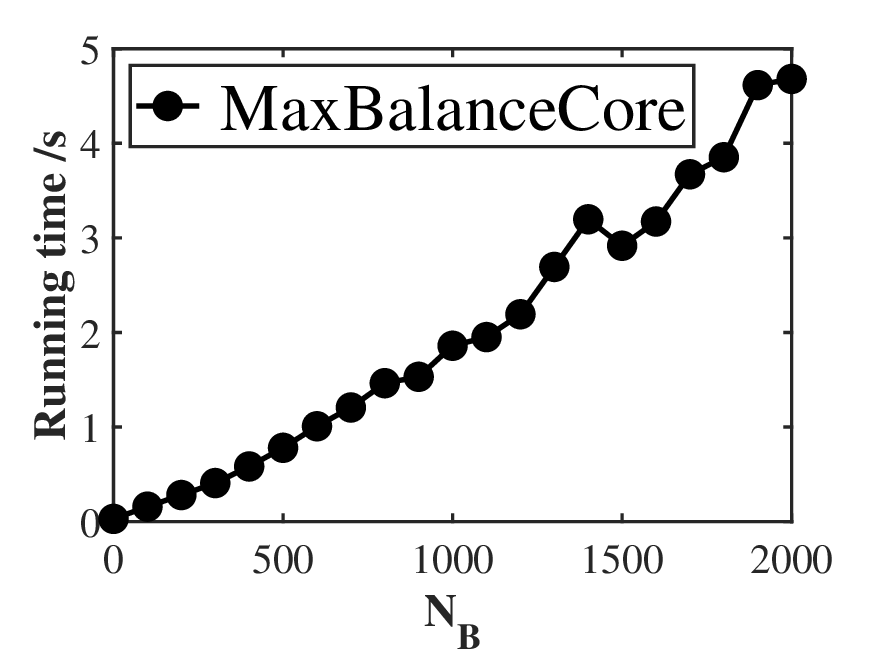}}
%}
\caption{Left: Accuracy rate against $N_{B}$. Right: Running time against $N_{B}$.}
\label{Sim2} %% label for entire figure
\end{figure}
\texttt{Simulation study 1:changing $N$.} For this simulation, we set $N_{A}=N/10, N_{B}=N/5$, and vary $N$ in $\{1000,2000,\ldots,10000\}$. The results are shown in Figure \ref{Sim1}. Our MaxBalanceCore algorithm consistently identifies the true LSCBM across all tested network sizes. While runtime scales with N, the algorithm efficiently processes networks of up to 10000 nodes within 20 seconds.

\texttt{Simulation study 2:changing $N_{B}$.} For this simulation, we set $N=3000, N_{A}=20$, and vary $N_{B}$ in $\{0, 100,200,\ldots,2000\}$. Figure \ref{Sim2} presents the results. Our MaxBalanceCore algorithm always recovers the true LSCBM exactly, even in cases of highly asymmetric modules where the size of set $B$ significantly exceeds that of set $A$ (or $B$ is empty). The right panel of the figure shows that as the size of the LSCBM increases, the running time also increases, but it remains feasible for practical applications.
\subsubsection{Verification of theoretical scaling}
To validate Theorems \ref{main1}-\ref{main3}, we conduct simulations where the signed graphs are generated from the model $\mathcal{G}(N,\alpha,\beta)$ with node counts \(N\) ranging in $\{10,20,\ldots,200\}$ or $\{300,600,\ldots,6000\}$, using fixed parameters \(\alpha = 0.6\), \(\beta = 0.3\) for Theorem \ref{main1}, parameterized settings \(\alpha = 1 - b/N\), \(\beta = b/N\) with \(b = 2\) for Theorem \ref{main2}, and \(\alpha = 1/\sqrt{N}\), \(\beta = 1 -1/\sqrt{N}\) for Theorem \ref{main3}. For each \(N\), we generate graphs, compute the size of LSCBM returned by our MaxBalanceCore, and record the ratio of the observed size to its theoretical prediction (i.e., \(\log N / \lambda\) for Theorem \ref{main1}, \(N \log b / b\) for Theorem \ref{main2}, and \(\log N / |\log \alpha|\) for Theorem \ref{main3}) over 100 independent trials. The mean ratios across these trials are then normalized by their collective average over all \(N\), and these normalized ratios are plotted against \(N\) to verify asymptotic convergence to unity. The numerical results presented in Figure \ref{Sim3}   strongly validate the theoretical scaling predictions for LSCBM's size across different random graph regimes. In the general regime of Theorem \ref{main1}, the detected LSCBM size shows remarkable convergence toward the predicted \(\log N / \lambda\) scaling, with minimal deviation across increasing $N$. For the dense regime of Theorem 2, the results demonstrate the \(N \log b / b\) scaling, with observed sizes tightly aligning with theoretical expectations. In the negative-dominated regime of Theorem 3, despite greater sparsity constraints, the numerical results still closely follow the predicted \(\log N / |\log \alpha|\) scaling law.  Across all configurations, the results consistently support the theoretical framework's accuracy in predicting the size of LSCBM.
\begin{figure}[H]
\centering
%\resizebox{\columnwidth}{!}{
{\includegraphics[width=0.3\textwidth]{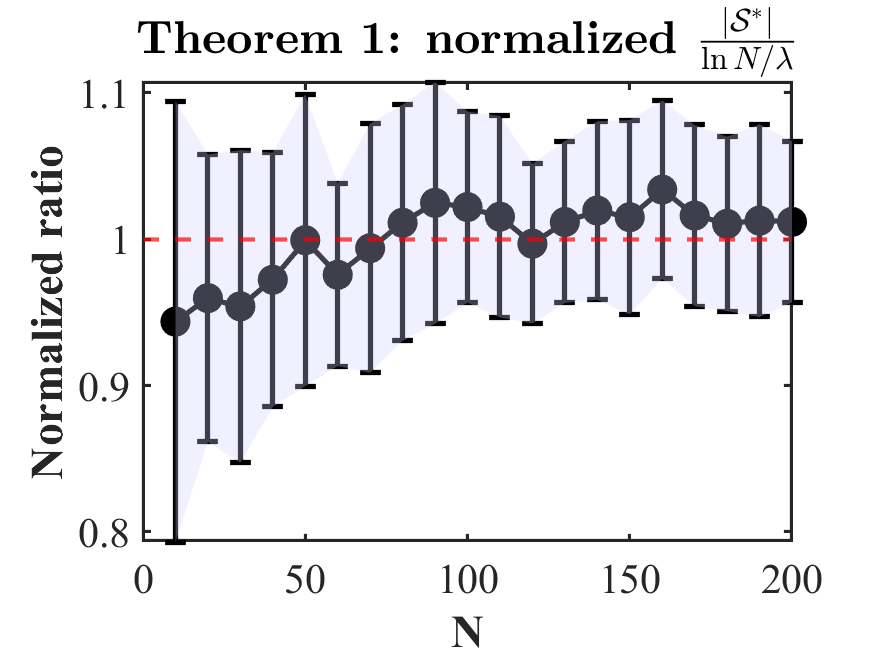}}
{\includegraphics[width=0.3\textwidth]{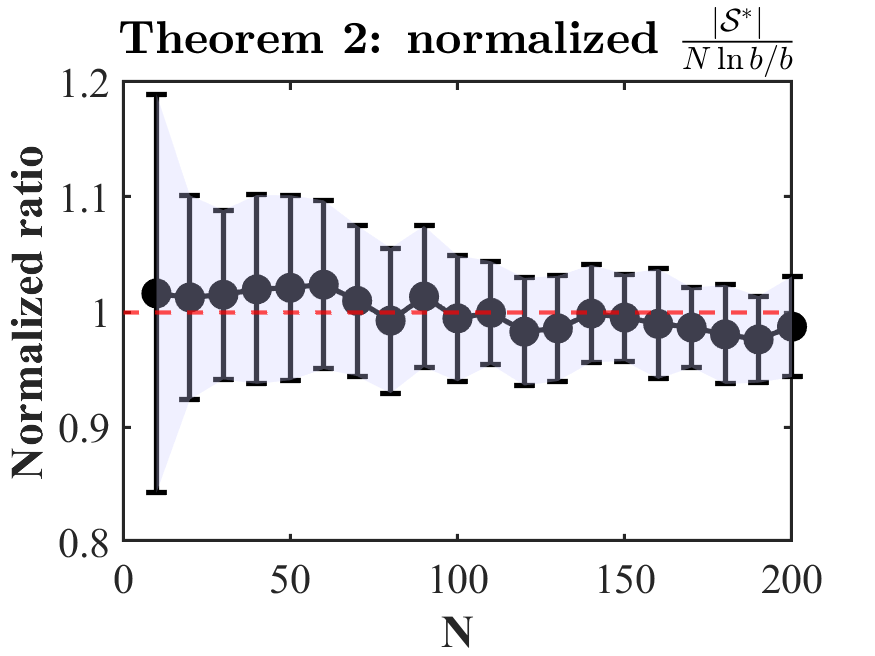}}
{\includegraphics[width=0.3\textwidth]{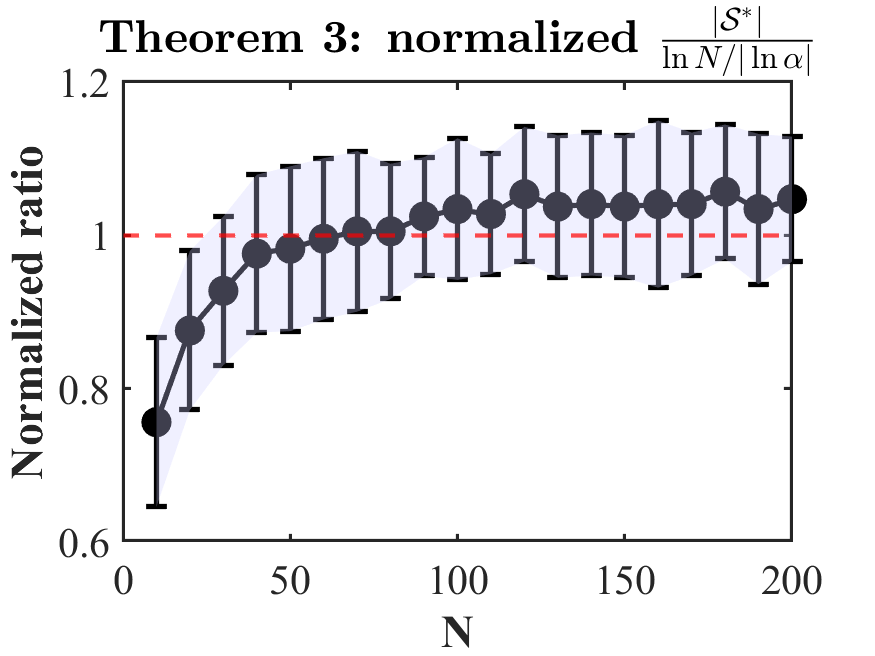}}
%}
%\resizebox{\columnwidth}{!}{
{\includegraphics[width=0.3\textwidth]{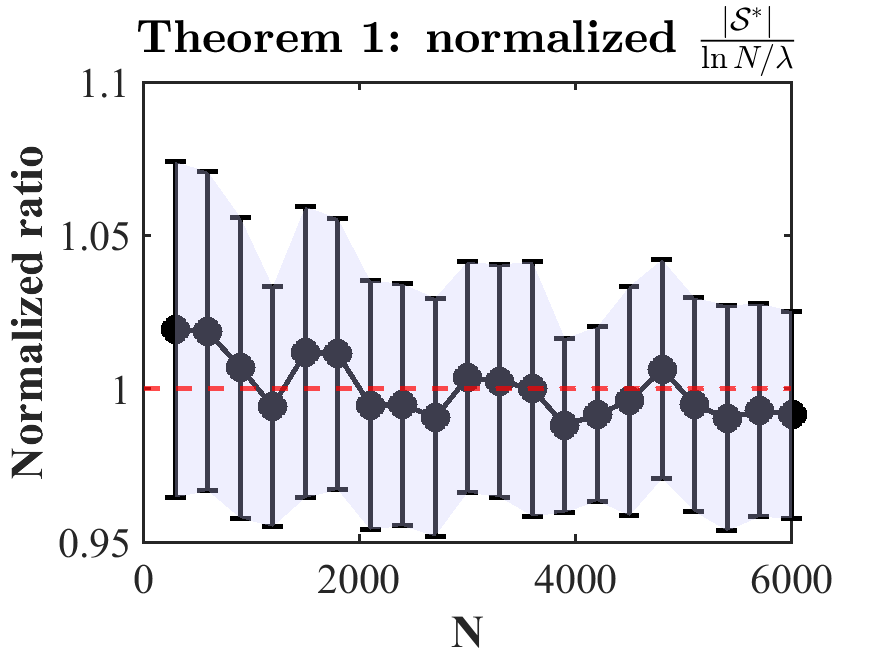}}
{\includegraphics[width=0.3\textwidth]{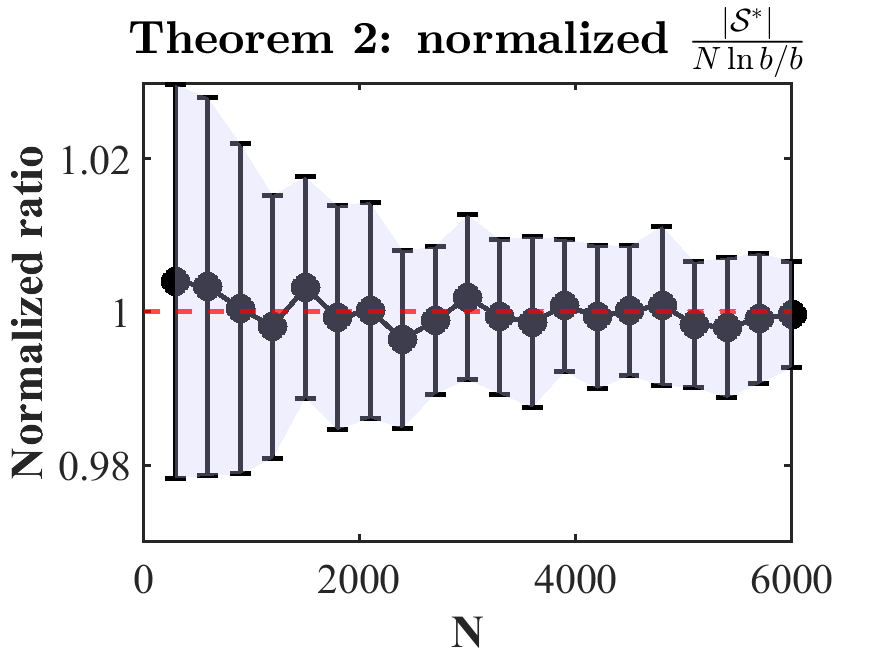}}
{\includegraphics[width=0.3\textwidth]{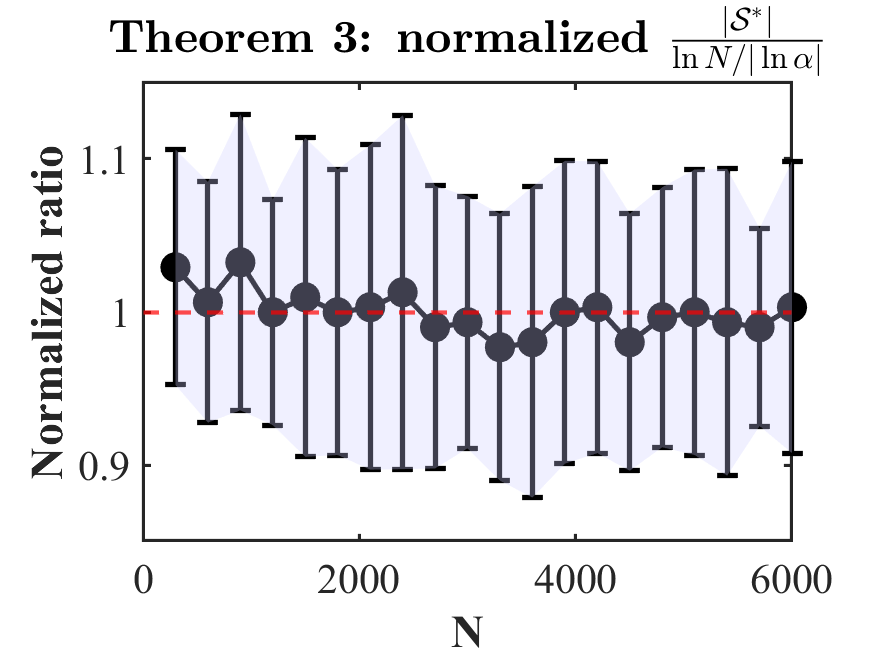}}
%}
\caption{Normalized ratios against $N$.}
\label{Sim3} %% label for entire figure
\end{figure}
\subsection{Empirical analysis}
To empirically validate the proposed LSCBM framework and explore its practical implications in real financial markets, we leverage stock data from the RESSET \footnote{\url{www.resset.com}}  database, a primary source for Chinese financial data. Specifically, we collect daily closing price data for all listed stocks on major Chinese exchanges (Shanghai and Shenzhen) across twelve distinct annual periods from 2013 to 2024. This longitudinal design intentionally spans diverse market regimes, including extreme volatility induced by the 2015 Chinese stock market crash (characterized by leveraged sell-offs, circuit breakers, and systemic contagion), periods of relative stability (e.g., 2016–2017), and heightened uncertainty during global events like the COVID-19 pandemic (2020). This dynamic, year-by-year approach allows us to move beyond static snapshots and instead capture the time-varying evolution of market structure under both endogenous shocks (e.g., the 2015 crash) and exogenous crises. To ensure robustness, we rigorously preprocess the data by deleting stocks with missing data. For each year \(m \in \{2013, 2014, \ldots, 2024\}\), we compute the statistically validated correlation matrix \(\widetilde{\mathbf{C}}_m\) using the rigorous t-test procedure outlined in Section \ref{Sec2}.

To characterize the basic properties of the statistically validated stock correlation networks, we define the following metrics for each annual network \(\widetilde{\mathbf{C}}_{m}\) (\(m \in \{2013, 2014, \ldots, 2024\}\)):
\begin{itemize}
  \item Let \(\xi_{+}\) and \(\xi_{-}\) denote the proportions of positive and negative elements in \(\widetilde{\mathbf{C}}_{m}\), respectively, excluding diagonal elements.
  \item Let \(\mu_{+}\) and \(\mu_{-}\) represent the average values of the positive and negative elements in \(\widetilde{\mathbf{C}}_{m}\), respectively, excluding diagonal elements.
  \item Define \(\varsigma:= \frac{|\mathcal{S}^{*}|}{N}\) as the proportion of nodes belonging to the LSCBM \(\mathcal{S}^{*}\) detected by the MaxBalanceCore algorithm relative to the total number of stocks \(N\).
\end{itemize}

\begin{table}[h!]
\centering
\caption{Basic properties of the statistically validated stock networks considered in this article.}
\label{realdataBasic}
\footnotesize  
\begin{tabular}{*{9}{c}}  
\hline
\(\widetilde{\mathbf{C}}\)&$N$&$T$&$\xi_{+}$&$\xi_{-}$&$\mu_{+}$&$\mu_{-}$&$|\mathcal{S}^{*}|$&$\varsigma$\\
\hline
\(\widetilde{\mathbf{C}}_{2013}\)&1462&237&0.9295&0.000077716&0.3241&-0.1592&13&0.0089\\
\(\widetilde{\mathbf{C}}_{2014}\)&1101&244&0.8920&0.00031542&0.2919&-0.1537&15&0.0136\\
\(\widetilde{\mathbf{C}}_{2015}\)&609&243&0.9939&0.0000054014&0.5574&-0.1541&55&0.0903\\
\(\widetilde{\mathbf{C}}_{2016}\)&1364&243&0.9761&0.000017212&0.4762&-0.1531&87&0.0638\\
\(\widetilde{\mathbf{C}}_{2017}\)&1841&243&0.7783&0.0075&0.2926&-0.1733&14&0.0076\\
\(\widetilde{\mathbf{C}}_{2018}\)&2566&242&0.9694&0.000020055&0.3830&-0.1565&35&0.0135\\
\(\widetilde{\mathbf{C}}_{2019}\)&3155&243&0.9625&0.000018893&0.3428&-0.1478&27&0.0086\\
\(\widetilde{\mathbf{C}}_{2020}\)&3248&242&0.9183&0.00010620&0.3037&-0.1483&24&0.0074\\
\(\widetilde{\mathbf{C}}_{2021}\)&3537&242&0.4902&0.0035&0.2102&-0.1542&7&0.0020\\
\(\widetilde{\mathbf{C}}_{2022}\)&3943&241&0.8886&0.0003246&0.2986&-0.1577&22&0.0056\\
\(\widetilde{\mathbf{C}}_{2023}\)&4316&241&0.5848&0.0027&0.2354&-0.1583&31&0.0072\\
\(\widetilde{\mathbf{C}}_{2024}\)&4515&241&0.9702&0.00058595&0.4174&-0.1586&113&0.0250\\
\hline
\end{tabular}
\end{table}

The longitudinal analysis of statistically validated stock correlation networks for Chinese stock markets, as presented in Table \ref{realdataBasic}, reveals profound insights into market structural dynamics, particularly when combined within major economic events. The network properties exhibit significant year-to-year variations, directly reflecting shifts in market regimes driven by both endogenous shocks and exogenous crises. Critically, the proportion of statistically significant positive correlations $\xi_{+}$ dominates throughout the period, consistently exceeding 49.02\% and reaching peaks such as 99.39\% in 2015. This overwhelming prevalence underscores the strong co-movement tendencies inherent in emerging markets, especially during periods of stress. Conversely, the proportion of statistically significant negative correlations $\xi_{-}$ remains extremely low ($\leq0.75\%$), highlighting the scarcity of robust hedging opportunities within the market structure. The average strength of positive correlations $\mu_{+}$ and negative correlations $\mu_{-}$ also fluctuates, with $\mu_{+}$ ranging from 0.2102 to 0.5574 and $\mu_{-}$ consistently between -0.1478 and -0.1733, indicating that validated negative relationships, while rare, exhibit economically meaningful strength when present. Notably, $\mu_{+}$ peaks in 2015 and remains the second-highest in 2016, indicating that the correlation networks formed during these stock-market crises exhibit exceptionally strong positive linkages. This aligns with the findings observed in \citep{XIA2018222,HE2022121732}.

The size of LSCBM $|\mathcal{S}^{*}|$ and its proportion relative to the total stocks $\varsigma$ serve as crucial indicators of market stability and integration. The year 2015 stands out dramatically: $|\mathcal{S}^{*}|$ surges to 55 ($\varsigma= 9.03\%$), coinciding precisely with the Chinese stock market crash. This event, characterized by a leveraged bubble burst, panic selling, and systemic contagion, forced extreme synchronization across stocks. The statistically validated network captures this: $\xi_{+}$ reaches 99.39\%, $\mu_{+}$ jumps to 0.5574, and the large LSCBM size signifies a market-wide collapse into a highly correlated, unstable state where diversification benefits largely vanish. The structural balance within this large module, while adhering to theory, reflects a fragile cohesion driven by uniform panic rather than fundamental alignment. The following years (2016-2017) show a partial normalization, with $|\mathcal{S}^{*}|$ decreasing to 87 ($\varsigma= 6.38\%$) in 2016 and further to 14 ($\varsigma= 0.76\%$) in 2017. This reduction aligns with the post-crisis stabilization phase, circuit breaker implementations, and regulatory interventions, allowing for some return of special stock behavior and reduced market-wide coupling, evidenced by the decline in $\mu_{+}$ to 0.2926 in 2017.

The period encompassing the COVID-19 pandemic (2020-2021) reveals a distinct two-phase pattern. In 2020, the initial global shock leads to another surge in co-movement, reflected in $\xi_{+} = 91.83\%$ and $|\mathcal{S}^{*}|=24$ ($\varsigma= 0.74\%$). While significant, the LSCBM size is notably smaller than during the 2015 crash, suggesting a slightly less uniform panic. However, 2021 exhibits a stark reversal: $\xi_{+}$ plummets to 49.02\%, its lowest value in the dataset, and $|\mathcal{S}^{*}|$ collapses to a minimal 7 ($\varsigma= 0.20\%$). This fragmentation coincides with the divergent recovery paths of sectors and companies evolving during pandemic waves, supply chain disruptions, and heterogeneous policy responses. The market transitioned from a synchronized crash to a phase where company-specific fundamentals and sectoral exposures regained prominence, hindering the formation of large, strongly correlated, and structurally balanced modules. The years 2022-2024 show a gradual resurgence of connectivity. $|\mathcal{S}^{*}|$ increases to 22 ($\varsigma= 0.56\%$) in 2022 and 31 ($\varsigma= 0.72\%$) in 2023, potentially linked to ongoing global macroeconomic uncertainty (inflation, rate hikes) and domestic concerns like the property sector crisis, which may have induced broader risk-off sentiments. Notably, 2024 exhibits a significant jump to $|\mathcal{S}^{*}|=113$ ($\varsigma= 2.50\%$), the second-largest module observed. This could reflect responses to major policy shifts. One possible reason is, amid the protracted downturn in China’s real estate sector since 2022, the recalibration and escalation of U.S. tariff measures on Chinese exports in 2024 have further compounded existing structural headwinds. China’s macroeconomic environment during this period suffers from the heightened policy uncertainty, weak domestic demand, and a broadly adverse economic outlook.

Meanwhile, we find that within all identified LSCBMs across the twelve annual periods of the Chinese stock market (2013-2024), every statistically validated pairwise correlation is positive. This absence of negative correlations within these core modules signifies a critical lack of inherent hedging opportunities in the Chinese stock market. This finding aligns logically with the remarkably low proportion of statistically significant negative correlations ($\xi_{-}\leq0.0075$) shown in Table \ref{realdataBasic}. The theoretical foundation, particularly proofs of Theorem \ref{main2}, provides a lens for understanding this phenomenon: when the probability of positive edges $\alpha$ (i.e., large $\xi_{+}$)  is significantly larger than the probability of positive edges $\beta$ (i.e., small or even close to zero $\xi_{-}$), the emergent LSCBMs are overwhelmingly composed of positively correlated stocks. This theoretical prediction appears clearly in the Chinese stock market data.

This consistent positivity within the LSCBMs underscores a fundamental characteristic of the core structure in the Chinese stock market: strong, stable co-movement dominates. While structural balance theory theoretically accommodates ``enemy of my enemy" configurations (two negatives and one positive) as stable, the empirical scarcity of robust, statistically significant negative correlations meeting the strength threshold $\sigma=0.7$ makes such balanced negative triangles exceedingly rare within the highly interconnected core modules identified by LSCBM. Consequently, the potential for natural hedging within these specific, densely connected, and statistically robust modules is practically absent. This observation resonates with the behavior of the Chinese stock market, often characterized by high synchronization, especially during stress events like the 2015 crash (where $\xi_{+}$ reached 99.39\%). Factors such as strong common risk factor exposures (e.g., policy shifts, macroeconomic trends), prevalent herding behavior among the large retail investor base, and sectoral interdependence likely contribute to this prevalence of positive dependencies in the core, limiting the formation of stable, strongly negatively correlated pairs suitable for hedging within these tightly knit groups. The LSCBM framework, by design, filters out weak or spurious relationships, thus revealing that the strongest and most stable interdependencies at the China stock market's core are uniformly positive, reflecting a market structure where diversification benefits derived from offsetting negative correlations within its core subsystems are minimal during these years. Furthermore, this uniform positive correlation structure within LSCBMs carries significant practical utility for portfolio construction. Since all validated pairwise relationships exhibit positive co-movement, each LSCBM effectively functions as a unified macro-exposure unit representing a distinct systemic risk factor or economic sectoral theme (e.g., the 2024 large-scale module). Consequently, investors can strategically reduce unintended concentration risk by limiting overexposure to multiple stocks within LSCBM since such holdings provide minimal diversification benefits within the module. Instead, portfolio risk management should emphasize: (i) exposure calibration across different, non-correlated LSCBM to harness true diversification, and (ii) complementary cross-asset hedging strategies to offset systemic risks emanating from these cohesive industry groups. This framework transforms LSCBM from a mere statistical concept into actionable ``risk allocation units'' for disciplined equity allocation in the A-share market.
\begin{figure}[H]
\centering
\resizebox{\columnwidth}{!}{
{\includegraphics[width=0.3\textwidth]{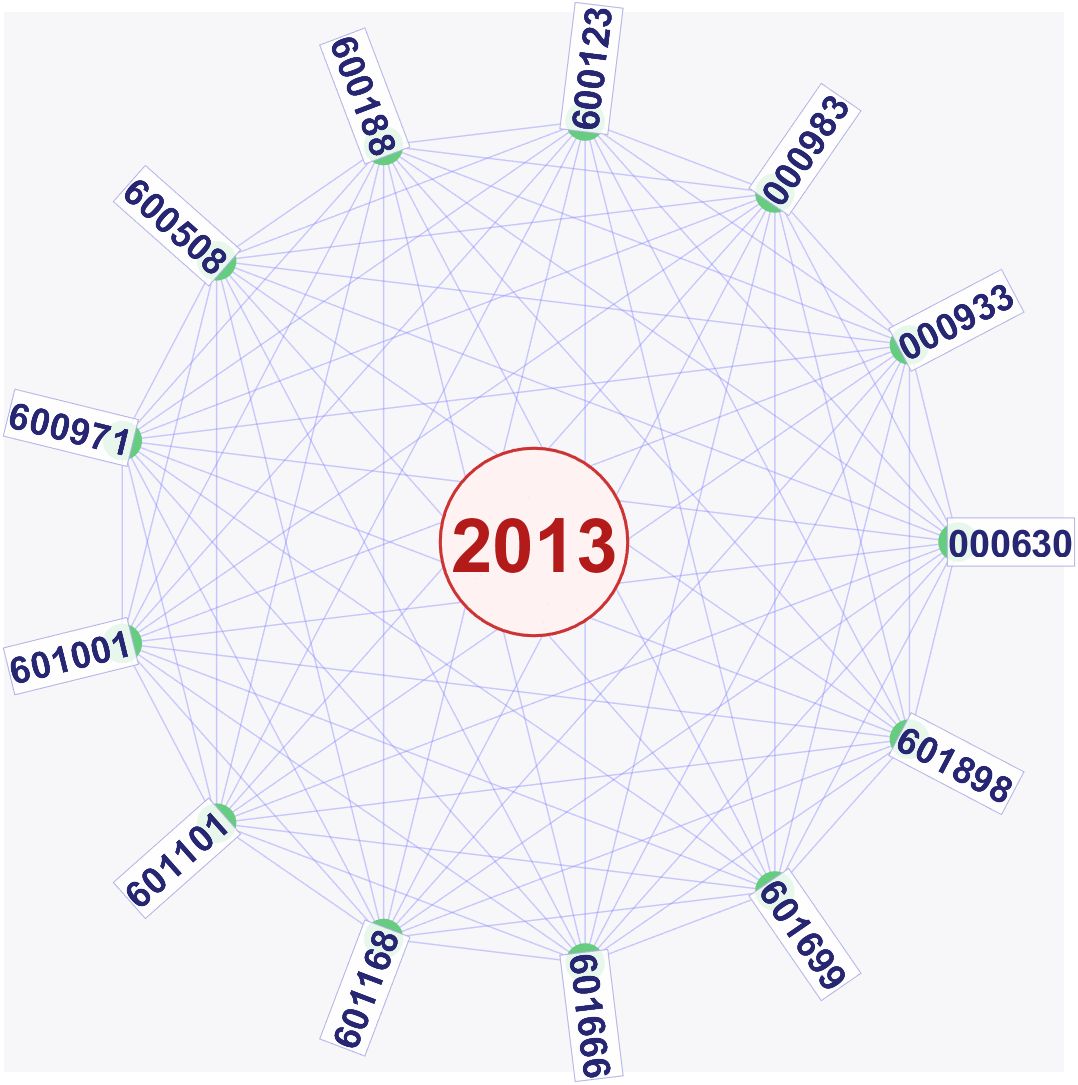}}
{\includegraphics[width=0.3\textwidth]{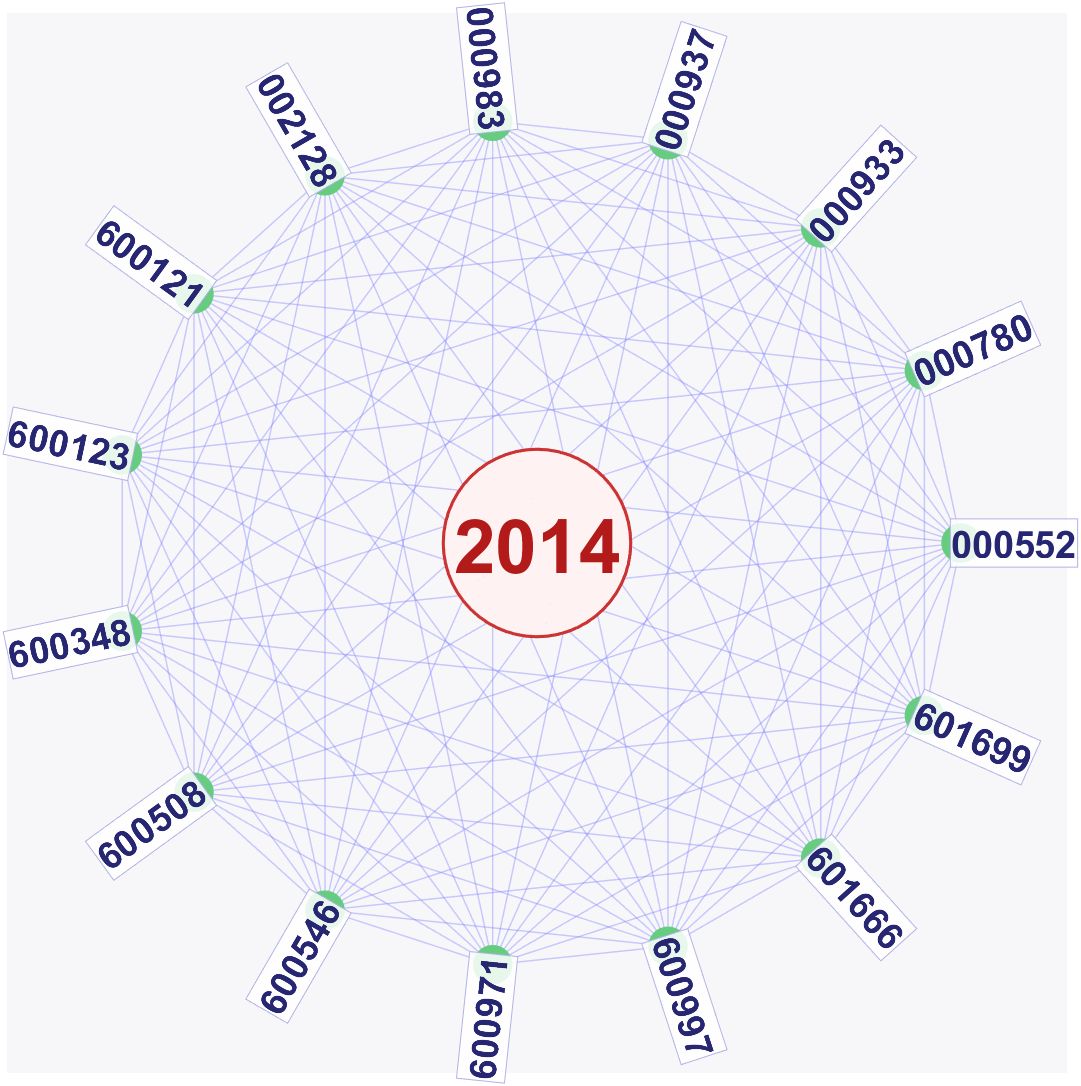}}
{\includegraphics[width=0.3\textwidth]{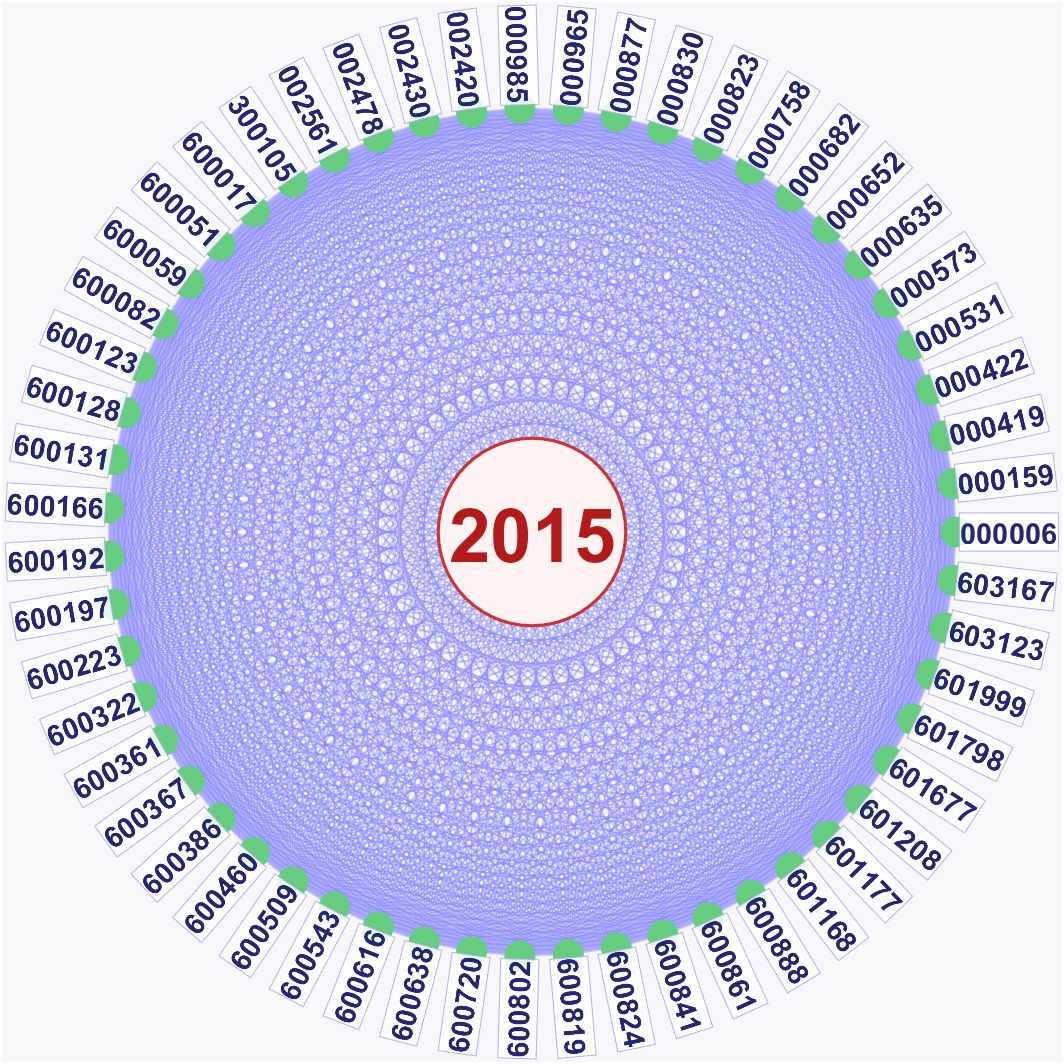}}
{\includegraphics[width=0.3\textwidth]{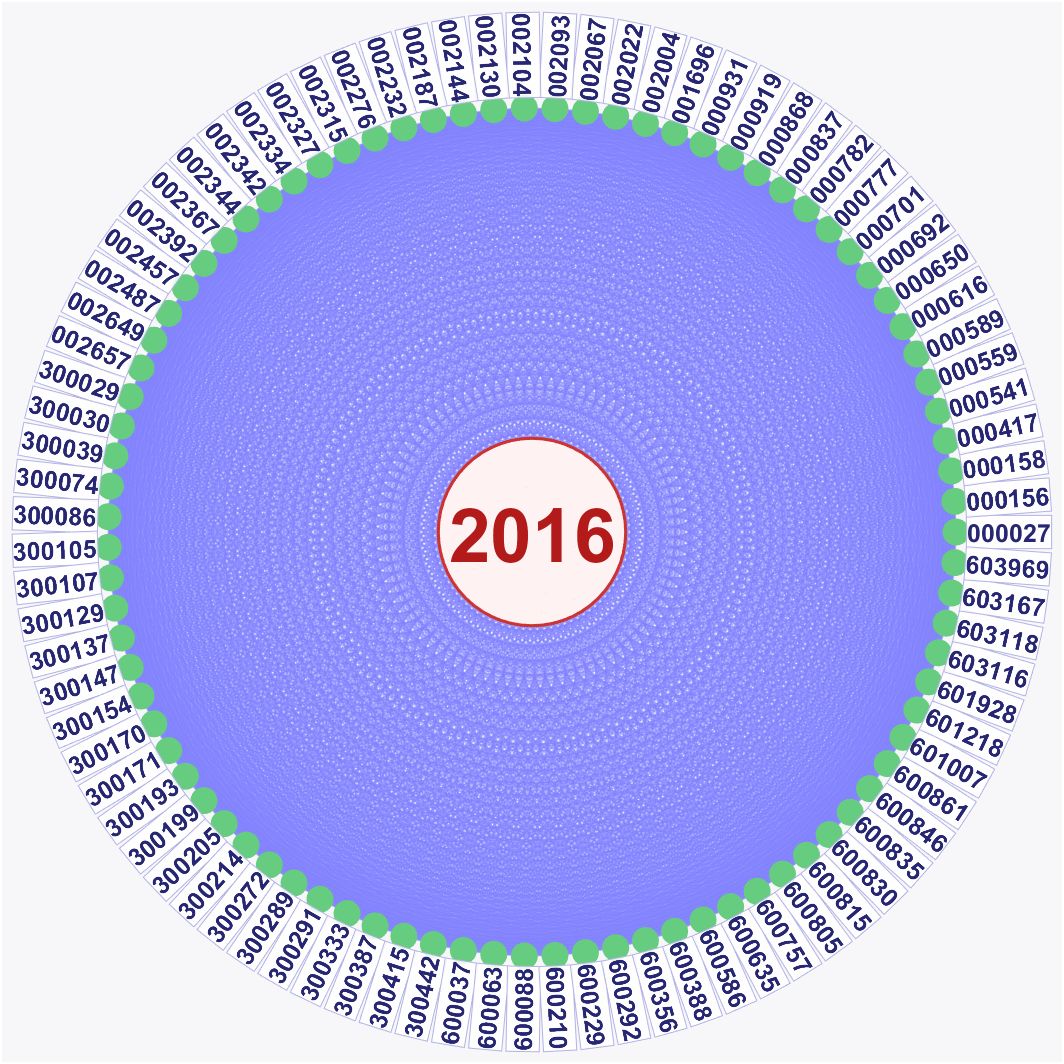}}
{\includegraphics[width=0.3\textwidth]{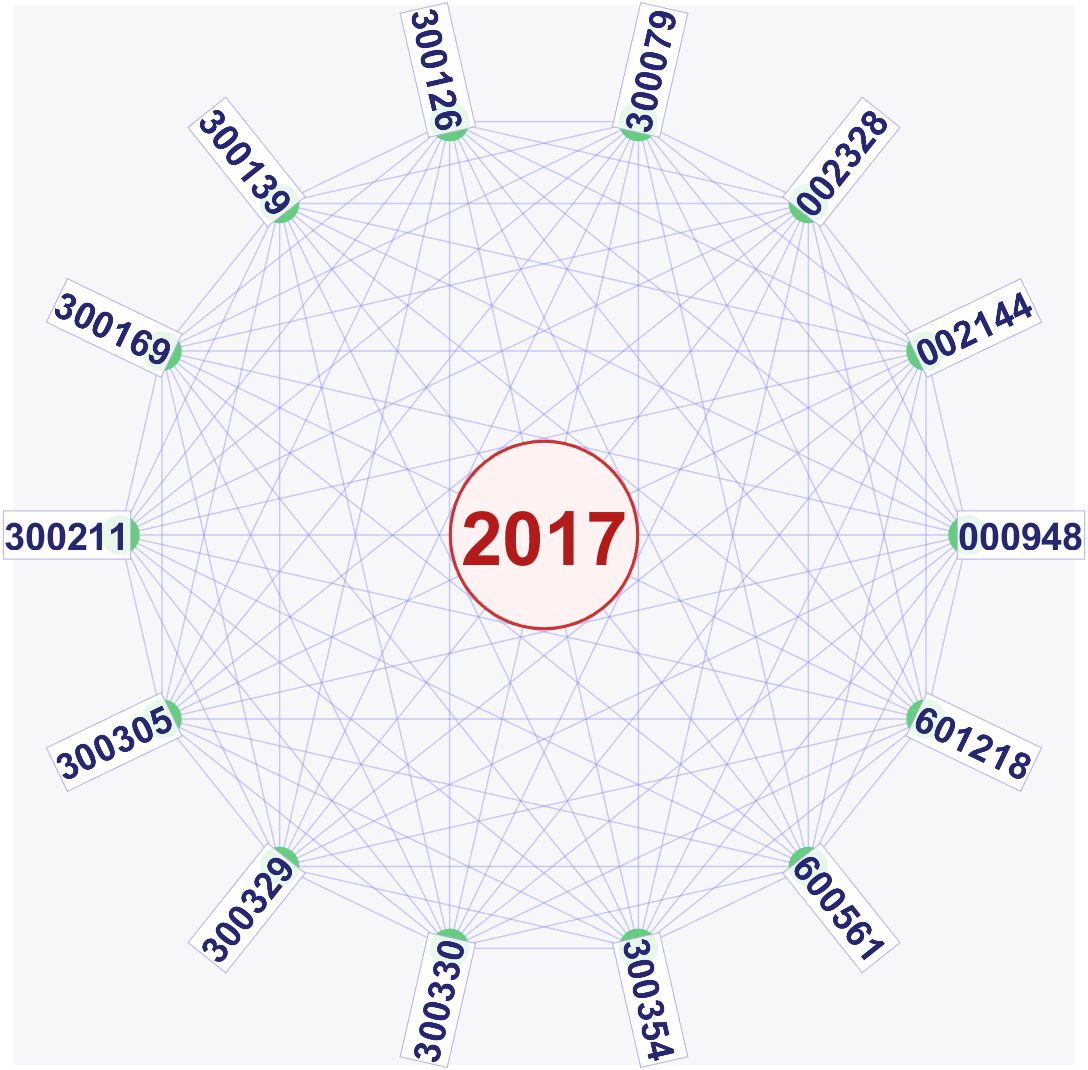}}
{\includegraphics[width=0.3\textwidth]{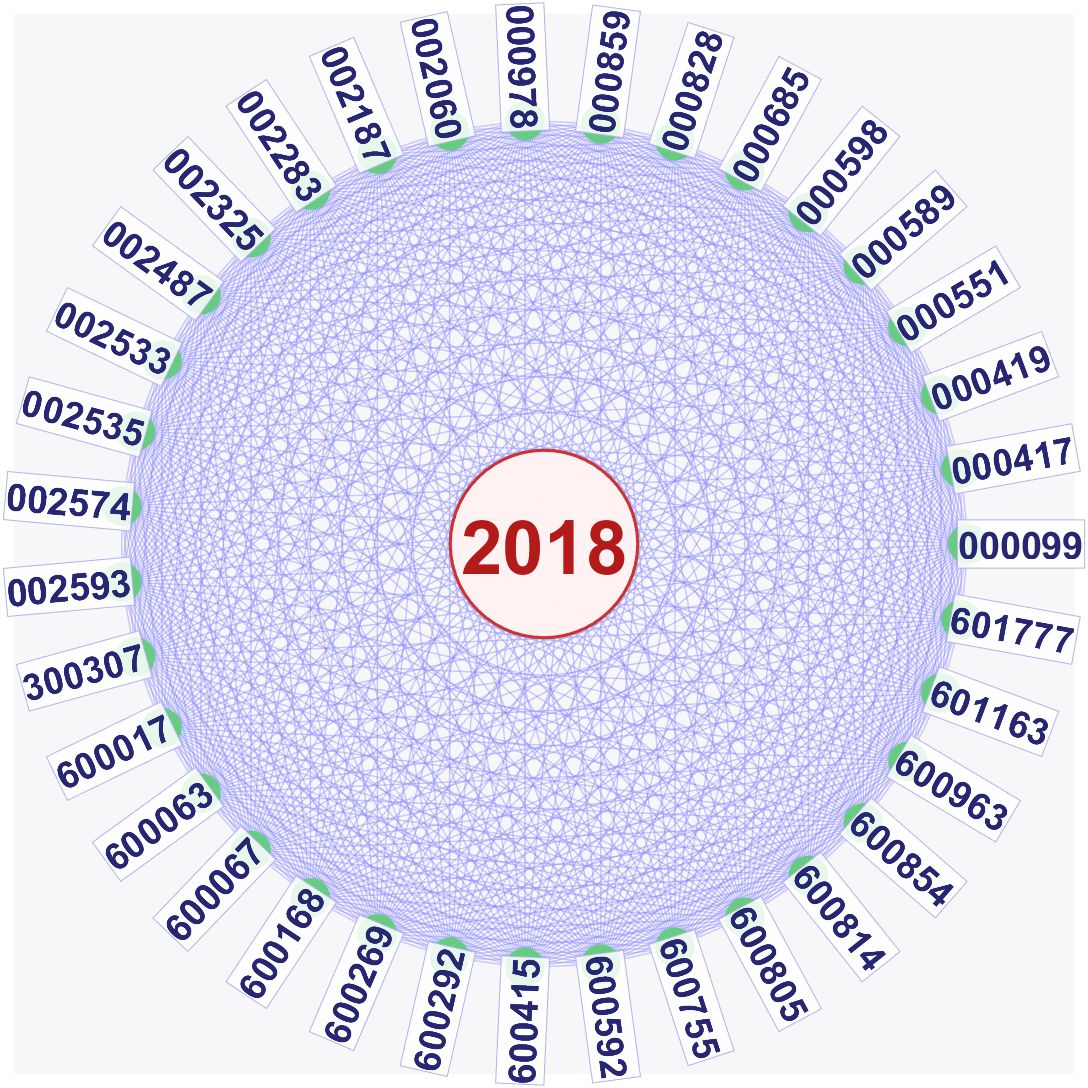}}
}
\resizebox{\columnwidth}{!}{
{\includegraphics[width=0.3\textwidth]{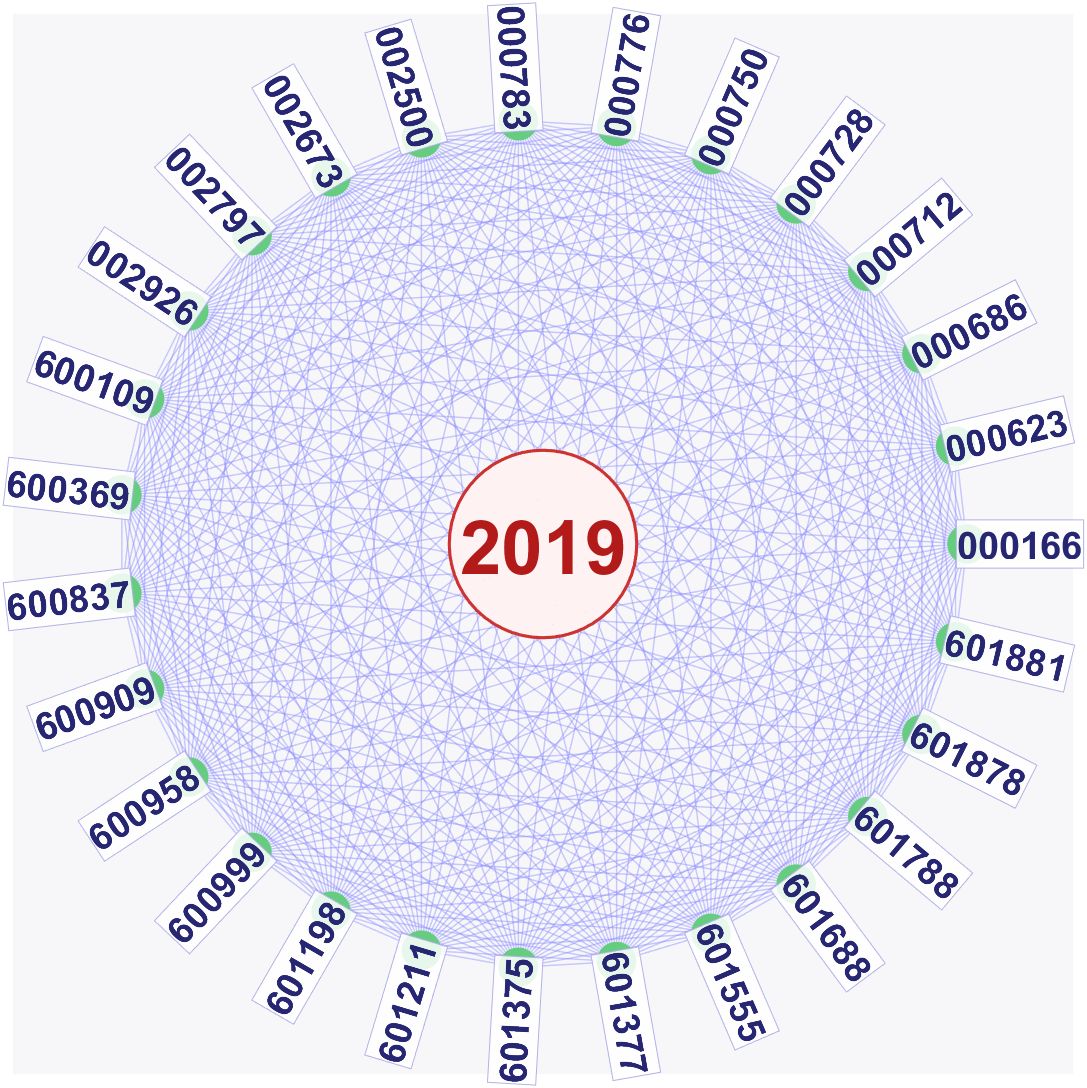}}
{\includegraphics[width=0.3\textwidth]{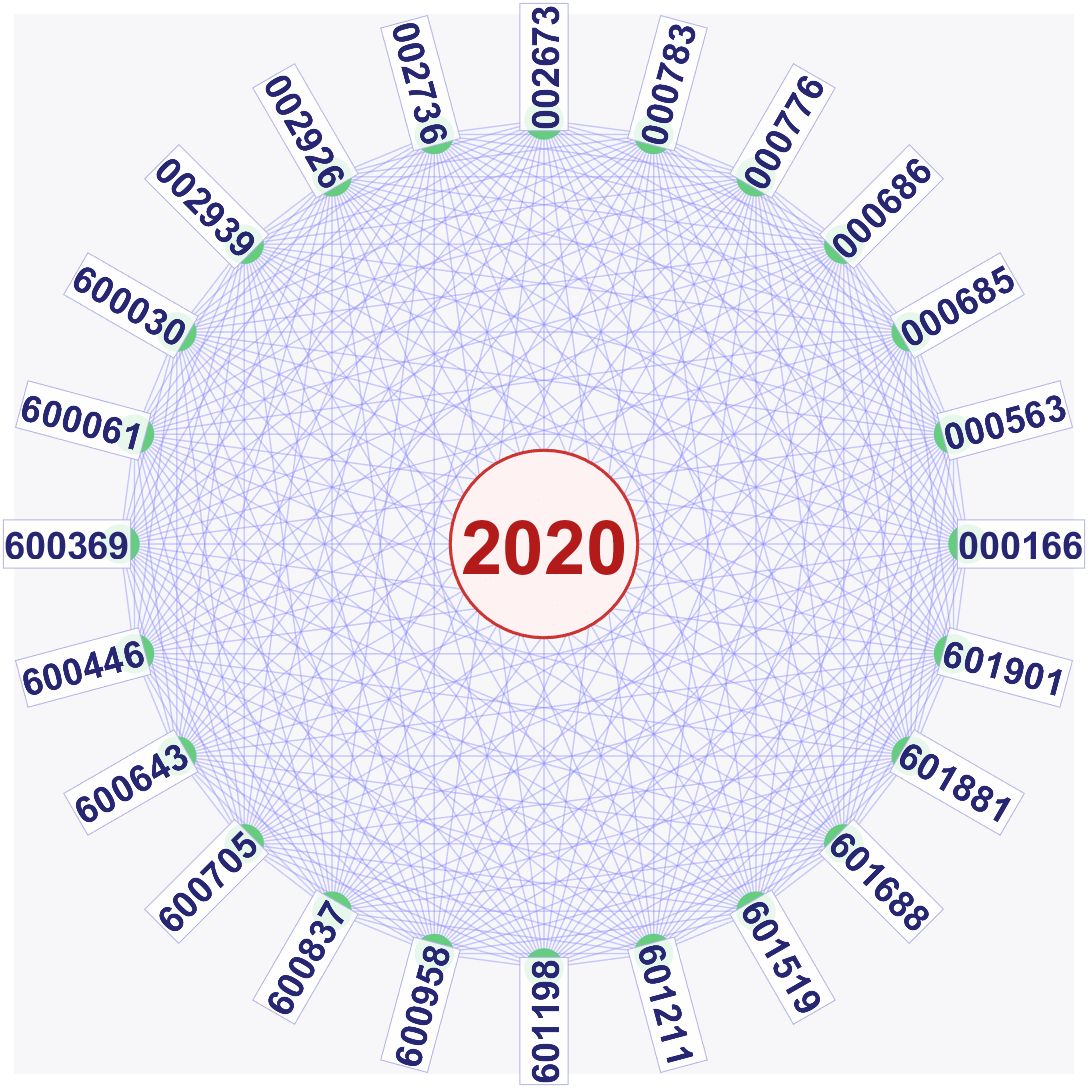}}
{\includegraphics[width=0.3\textwidth]{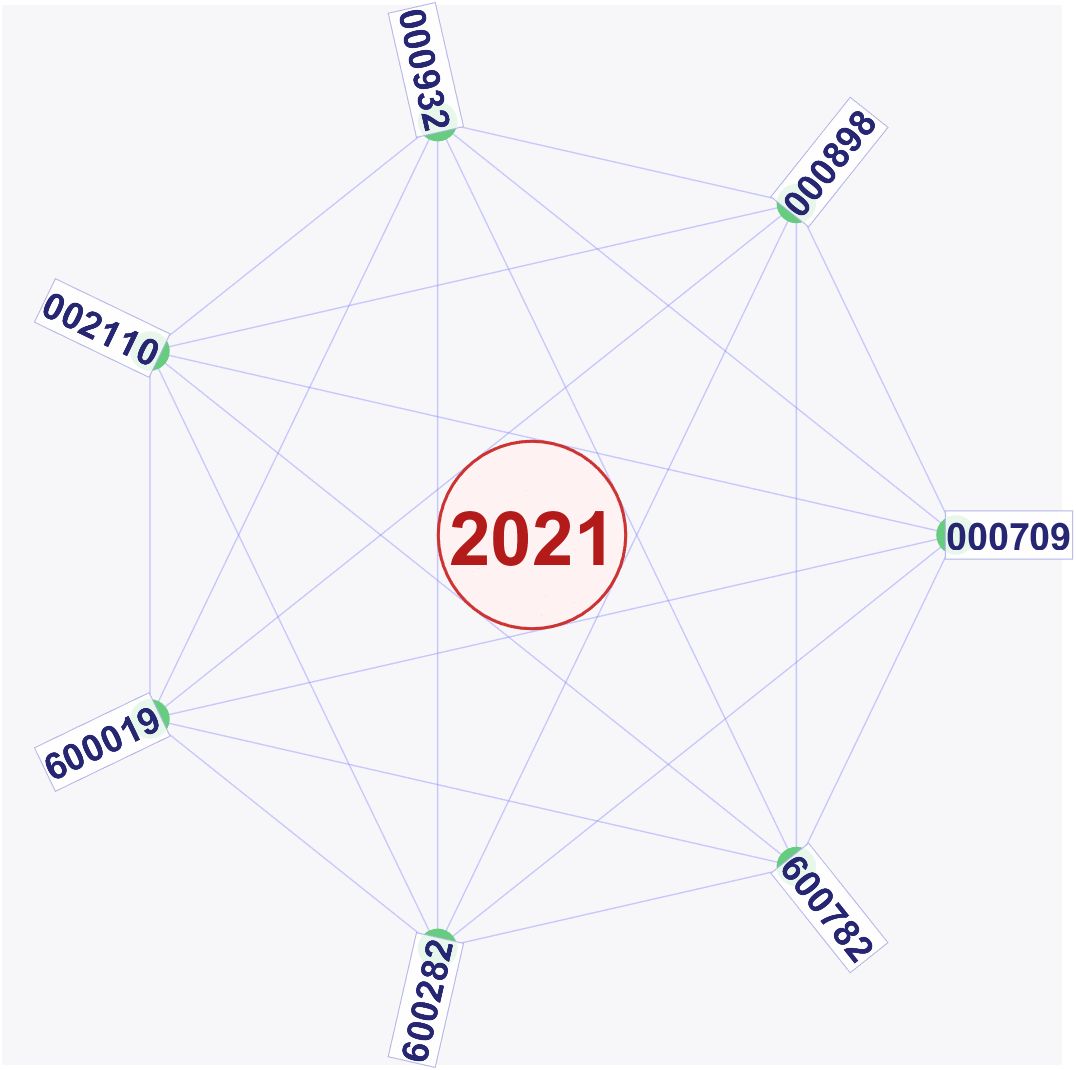}}
{\includegraphics[width=0.3\textwidth]{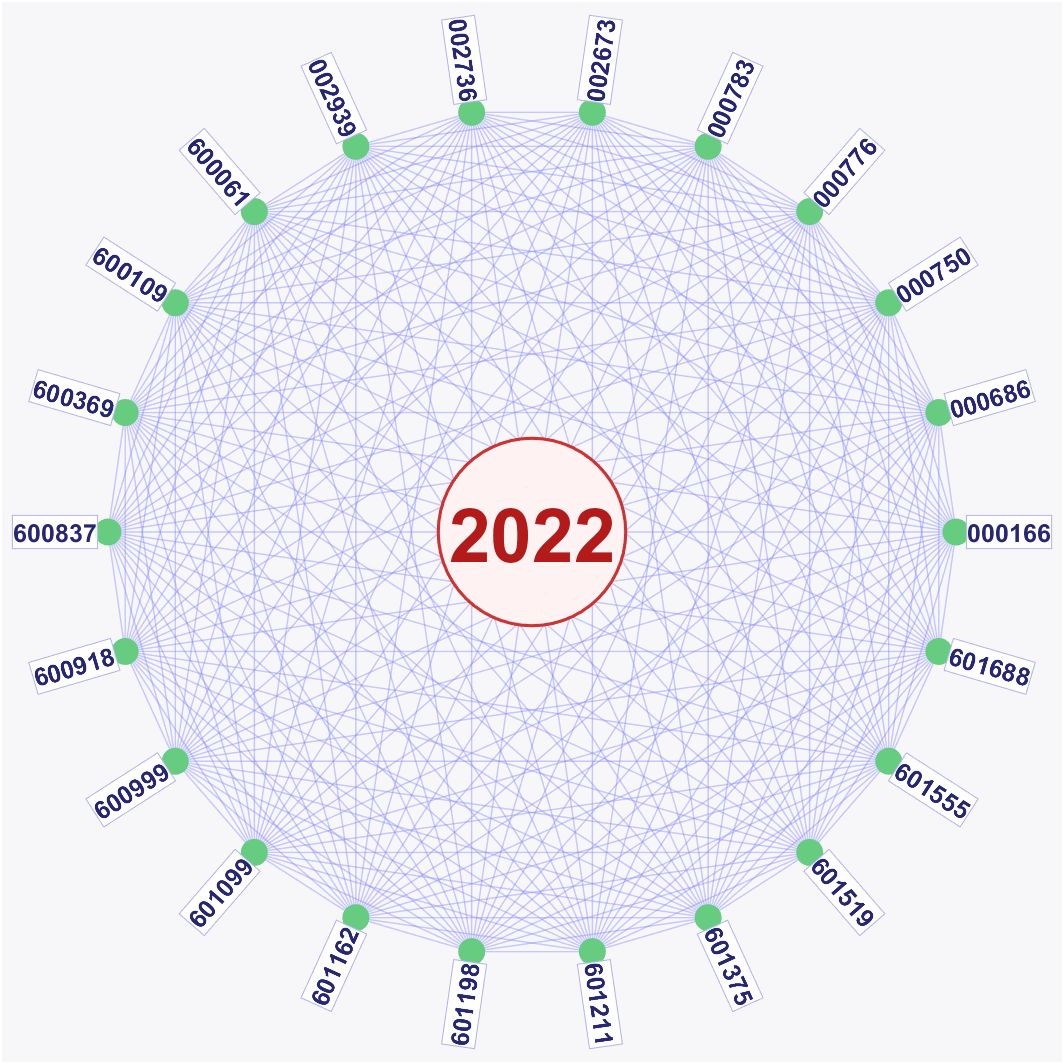}}
{\includegraphics[width=0.3\textwidth]{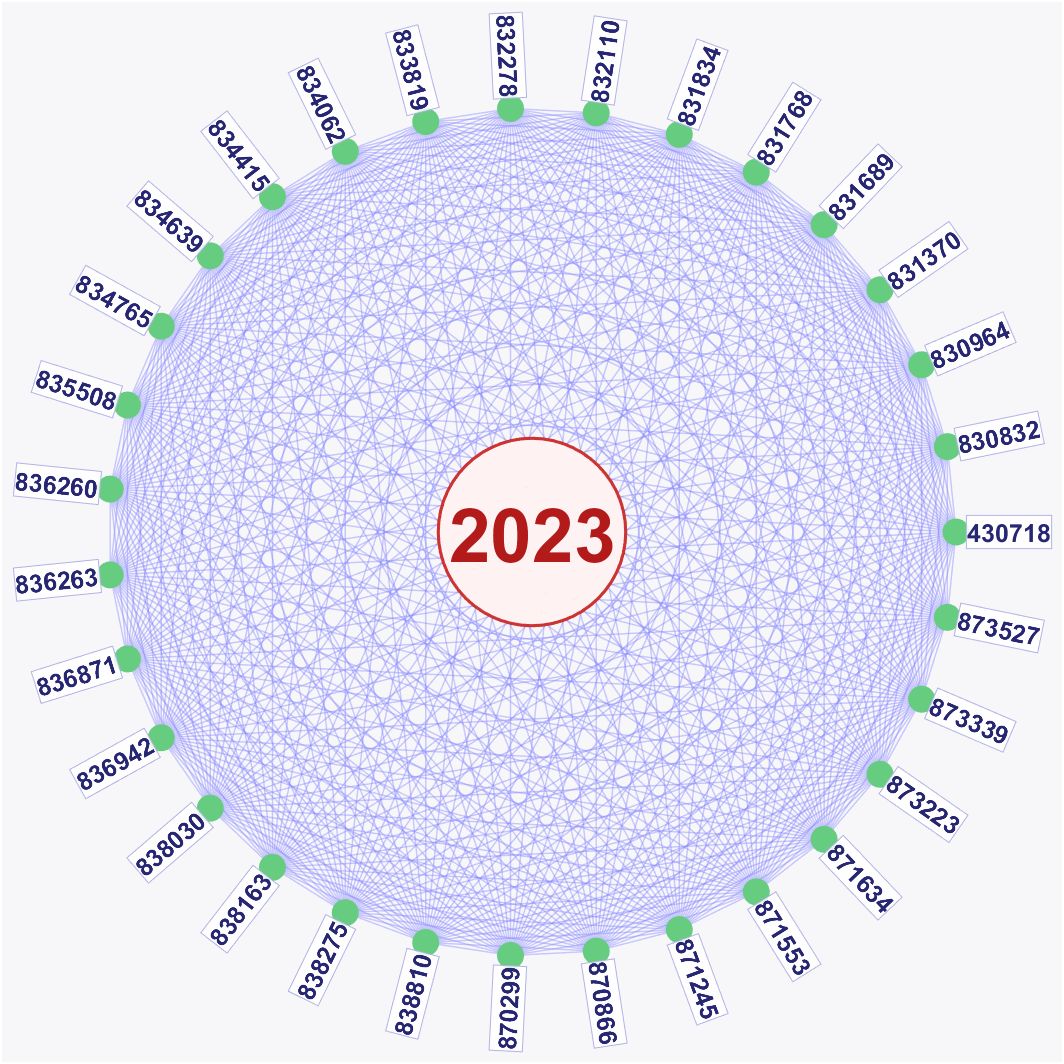}}
{\includegraphics[width=0.3\textwidth]{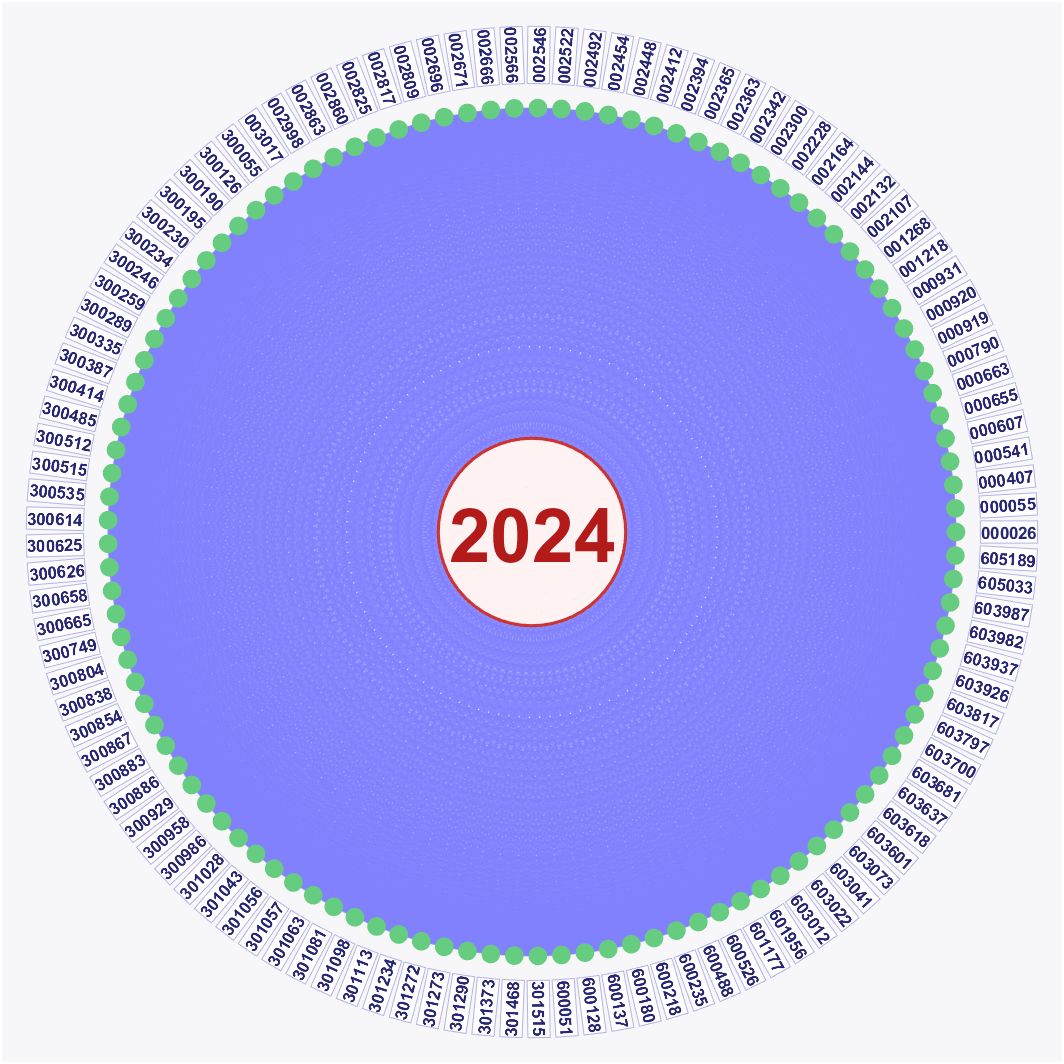}}
}
\caption{Correlation networks of stocks in LSCBMs for the twelve consecutive years 2013-2024, where we omit edge weights for visual clarity.}
\label{stockLSCBM} %% label for entire figure
\end{figure}
In Figure \ref{stockLSCBM}, we plot the stocks within the LSCBM for each year from 2013 to 2024. We observe that a clear pattern emerges: the composition of these core modules shows almost no stability across consecutive years. For example, there is no common stock shared between the 2024 and 2023 modules, nor between 2023 and 2022, and the same disconnect holds for 2022 and 2021. This year-to-year turnover highlights how the LSCBM captures shifting market dynamics—during high-stress periods like the 2015 crash, a large, tightly coupled module forms as stocks move in lockstep, but in calmer or fragmented times (e.g., 2021), the module shrinks and reconstitutes around different stocks, reflecting new sectoral influences or risk factors. Essentially, the lack of overlap underscores that the ``core" of the market isn't fixed; it dynamically reorganizes annually, driven by evolving economic conditions and crises, which the LSCBM framework effectively reveals.

\begin{table}[!htbp]
\centering
\caption{Industry Distribution of Stocks in LSCBM for the twelve consecutive years 2013-2024.}
\label{tab:industry}
\scriptsize
\begin{tabularx}{\textwidth}{@{}l c X@{}}
\toprule
Year & Dominant Industry (GICS Sub-Industry) & Representative Stocks \\ 
\midrule
2013 & Energy (Coal \& Consumable Fuels) & 
600123: Shanxi Lanhua Sci-Tech Venture Co., Ltd.\newline
601001: Jinneng Holding Shanxi Coal Industry Co., Ltd.\newline
600188: Yanzhou Coal Mining Co., Ltd. \\ 
\addlinespace

2014 & Energy (Coal \& Consumable Fuels) & 
600348: Shanxi Huayang Group New Energy Co., Ltd.\newline
600546: Shanxi Coal International Energy Group Co., Ltd.\newline
600997: Kailuan Energy Chemical Co., Ltd. \\ 
\addlinespace

2015 & Industrials (Industrial Machinery)& 
600166: Beiqi Foton Motor Co., Ltd.\newline
600192: Great Wall Electrical Co., Ltd.\newline
601798: Harbin Electric Co., Ltd. \\ 
\addlinespace

2016 & Information Technology (Application Software) & 
300074: Huatest Testing Technology Co., Ltd.\newline
300442: Runhe Software Development Co., Ltd.\newline
300415: Yizumi Holdings Co., Ltd.\\ 
\addlinespace

2017 & Information Technology (Internet Services \& Infrastructure)& 
300079: Beijing Sumavision Technologies Co., Ltd.\newline
300354: DongHua Testing Technology Co. Ltd.\newline
000948: Yunnan Nantian Electronics Information Co., Ltd. \\ 
\addlinespace

2018 & Industrials (Industrial Machinery) & 
300307: Ningbo Cixing Co., Ltd.\newline
600592: Fujian Longxi Bearing (Group) Co., Ltd.\newline
601777: Chongqing Qianli Technology Co., Ltd.\\ 
\addlinespace

2019 & Financials (Investment Banking \& Brokerage) & 
601688: Huatai Securities Co., Ltd.\newline
600958: Orient Securities Co., Ltd.\newline
000166: Shenwan Hongyuan Group Co., Ltd. \\ 
\addlinespace

2020 & Financials (Investment Banking \& Brokerage) & 
600837: Haitong Securities Co., Ltd.\newline
601211: Guotai Junan Securities Co., Ltd.\newline
601519: Shanghai DZH Ltd. \\ 
\addlinespace

2021 & Materials (Steel) & 
600019: Baoshan Iron \& Steel Co., Ltd.\newline
000709: HBIS Co., Ltd.\newline
600782: Xinyu Iron \& Steel Co., Ltd. \\ 
\addlinespace

2022 & Financials (Investment Banking \& Brokerage) & 
601198: Dongxing Securities Co., Ltd.\newline
601375: Zhongyuan Securities Co., Ltd.\newline
600061: SDIC Capital Co., Ltd.\\ 
\addlinespace

2023 & Information Technology (Application Software) & 
830964: Aisino Corporation\newline
831370: Newange Ambient Intelligence Technical Service Co.Ltd \newline
834062: Kerun Control Engineering Co., Ltd. \\ 
\addlinespace

2024 & Industrials (Building Products \& Industrial Machinery)& 
301028: Sinoma Science \& Technology Co., Ltd.\newline
301113: Zhejiang Yayi Metal Technology Co., Ltd.\newline
300126: KEN Holding Co., Ltd.\\ 
\bottomrule
\end{tabularx}
\end{table}
In Table \ref{tab:industry}, we highlight the dominant industries within the LSCBM for the Chinese stock market from 2013 to 2024, illustrating how these core clusters adapt to key economic sectors in response to major market events. For instance, the LSCBM was dominated by industrial sectors during the high-stress 2015 market crash, transitioned to financials amid pandemic-driven uncertainty in 2020, and shifted back to industrials by 2024, likely reflecting policy-driven adjustments. This annual rotation across energy, information technology, materials, and financials underscores the LSCBM framework’s ability to capture evolving market themes. By capturing these patterns in real-world dynamics, LSCBM provides a robust lens for identifying economically meaningful and structurally stable subsystems within the market.
\begin{figure}[H]
\centering
\resizebox{\columnwidth}{!}{
{\includegraphics[width=0.3\textwidth]{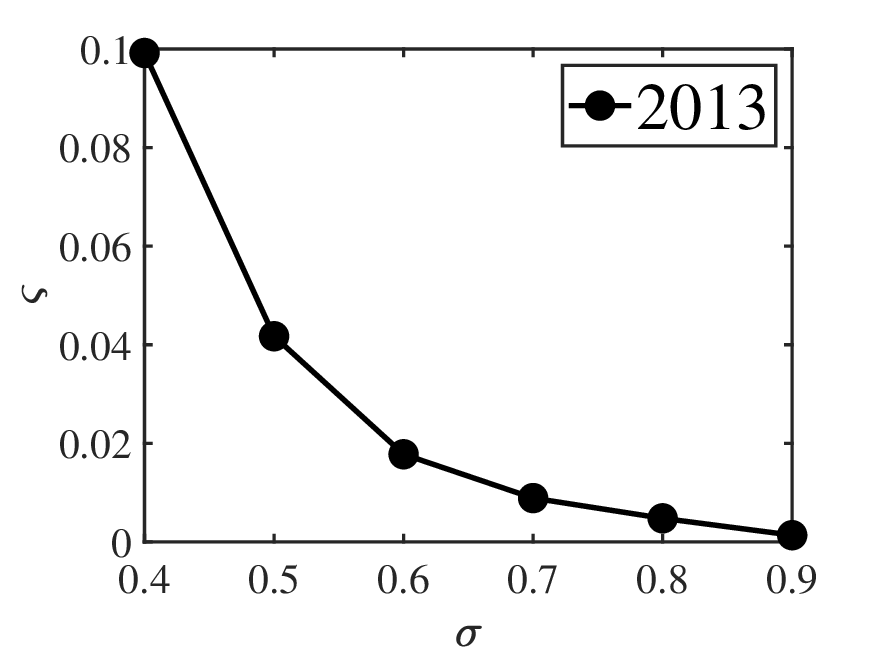}}
{\includegraphics[width=0.3\textwidth]{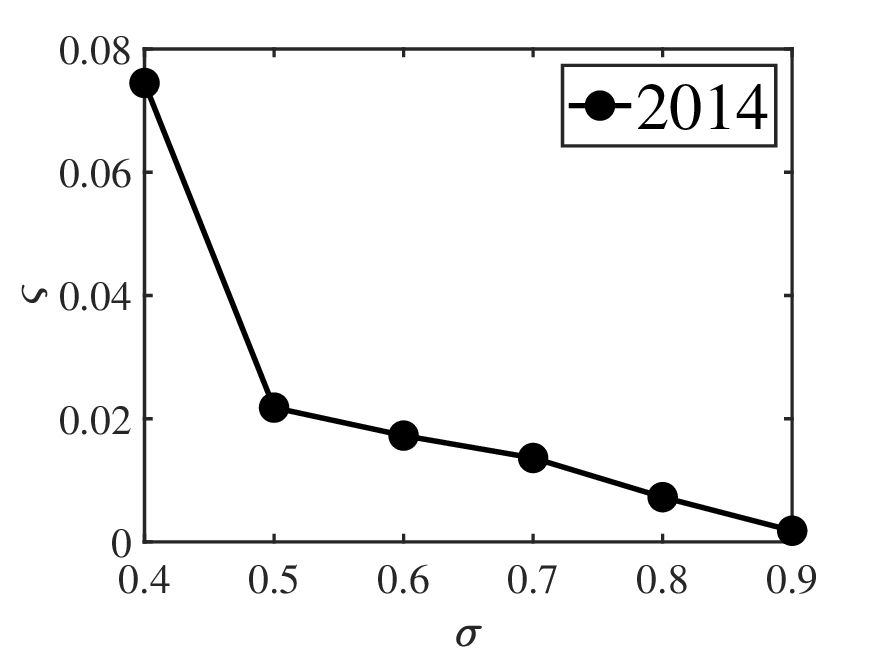}}
{\includegraphics[width=0.3\textwidth]{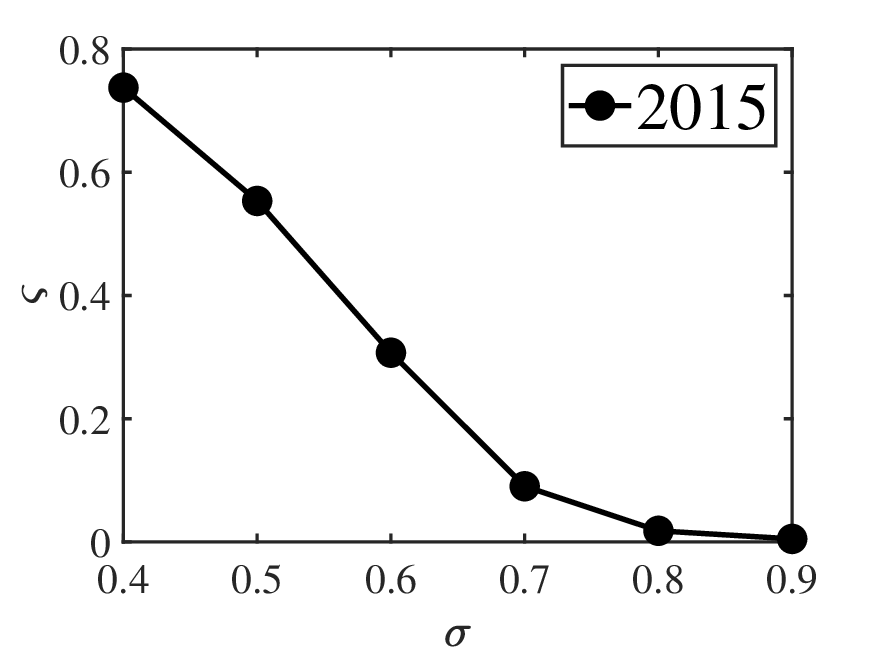}}
{\includegraphics[width=0.3\textwidth]{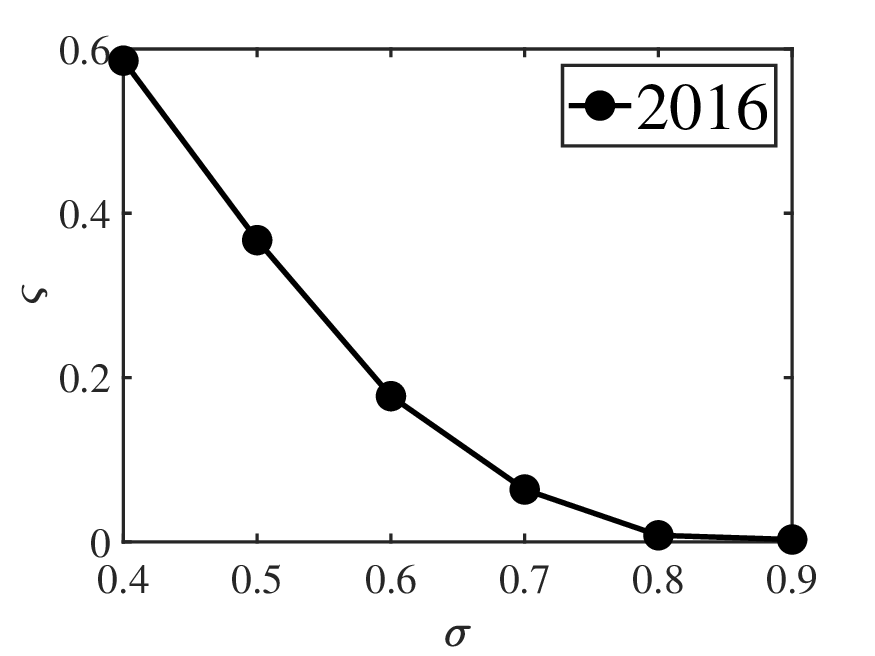}}
{\includegraphics[width=0.3\textwidth]{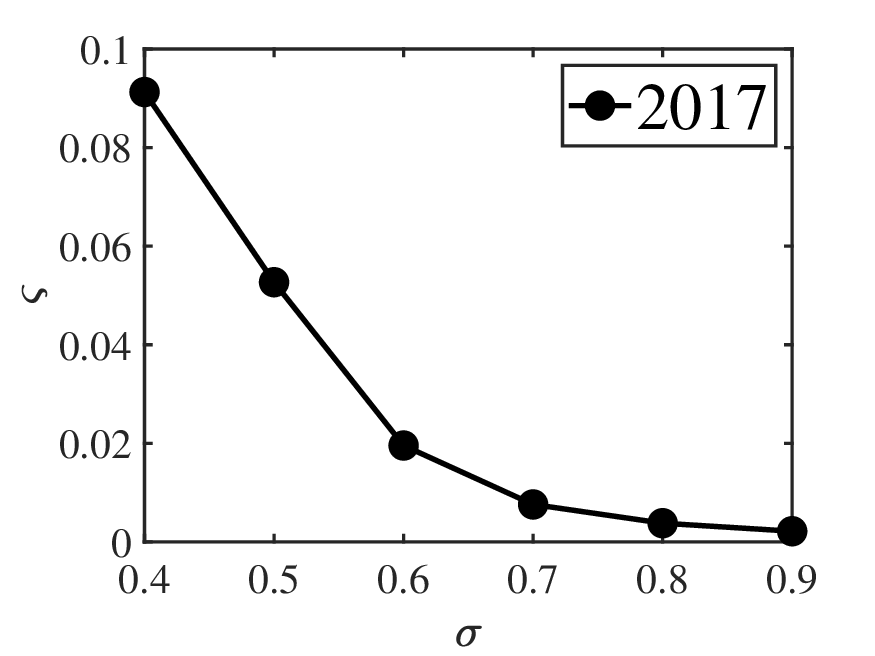}}
{\includegraphics[width=0.3\textwidth]{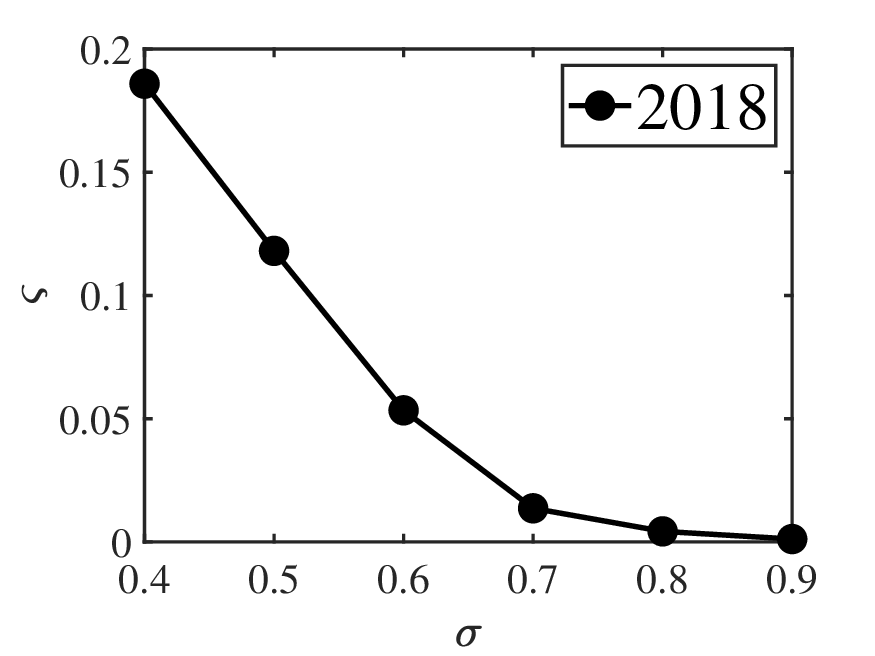}}
}
\resizebox{\columnwidth}{!}{
{\includegraphics[width=0.3\textwidth]{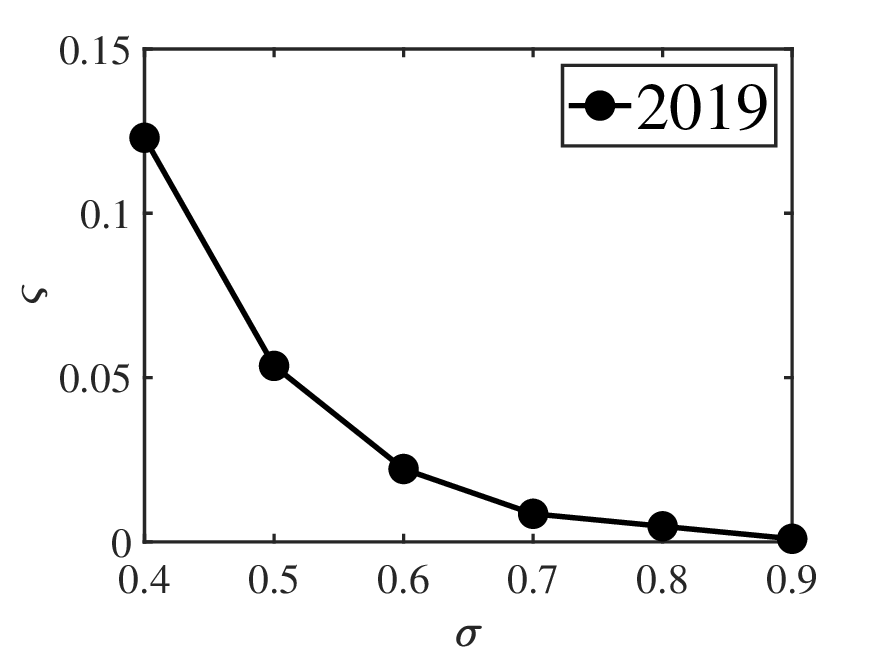}}
{\includegraphics[width=0.3\textwidth]{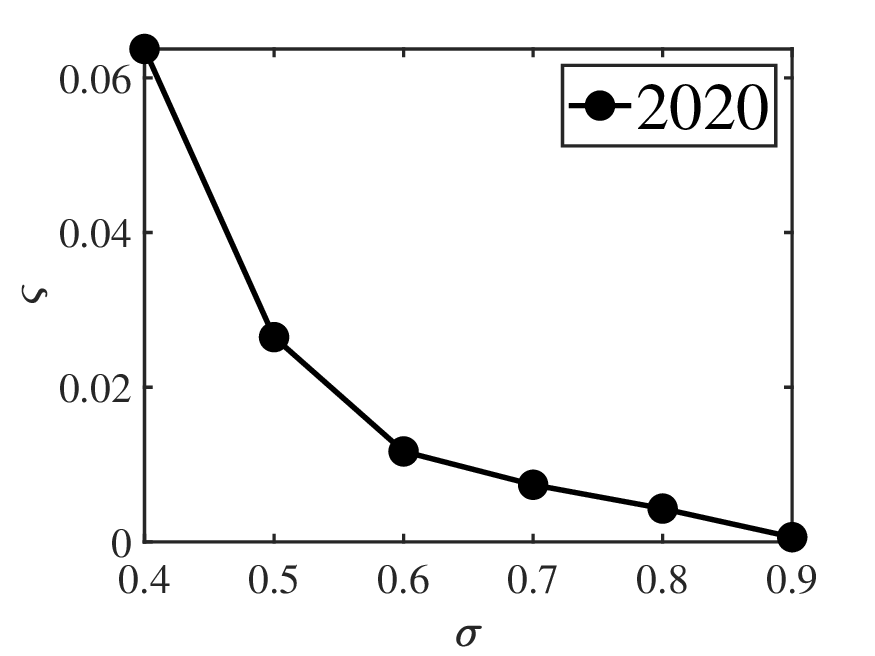}}
{\includegraphics[width=0.3\textwidth]{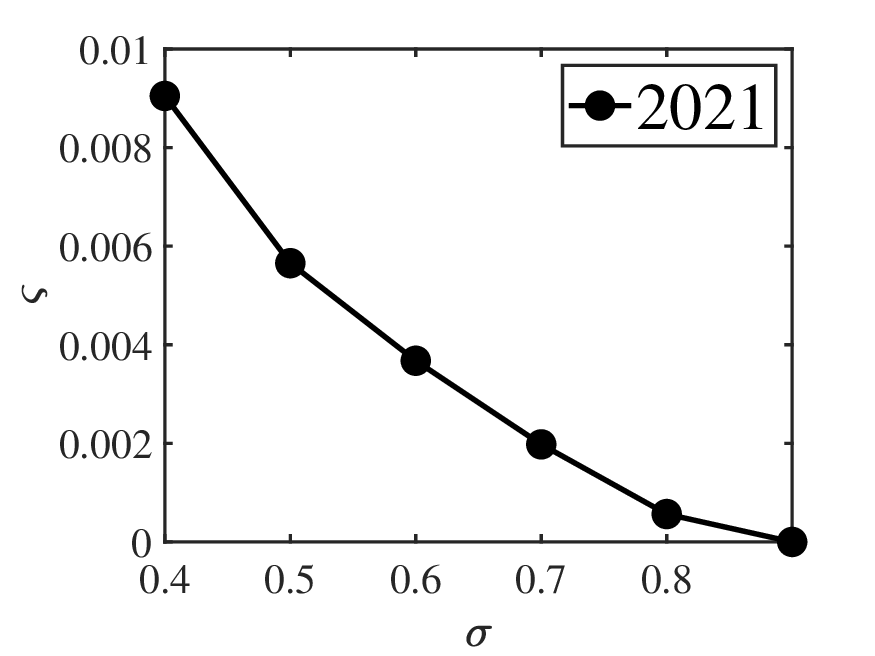}}
{\includegraphics[width=0.3\textwidth]{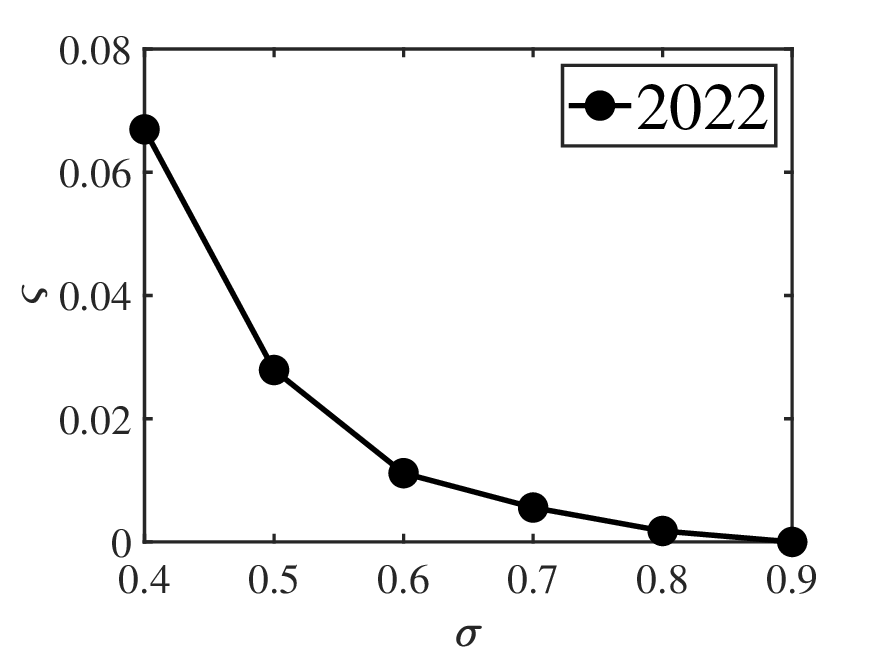}}
{\includegraphics[width=0.3\textwidth]{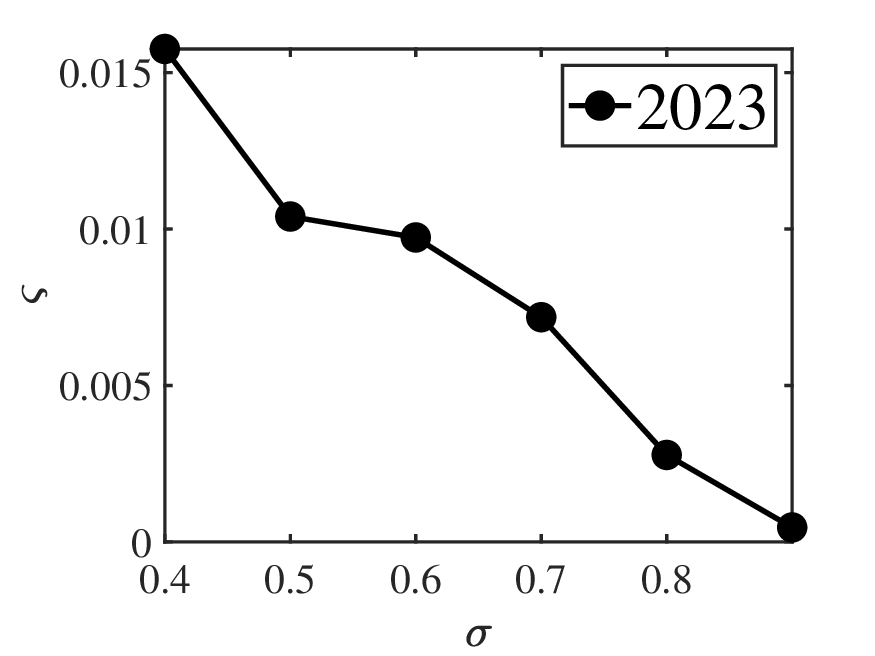}}
{\includegraphics[width=0.3\textwidth]{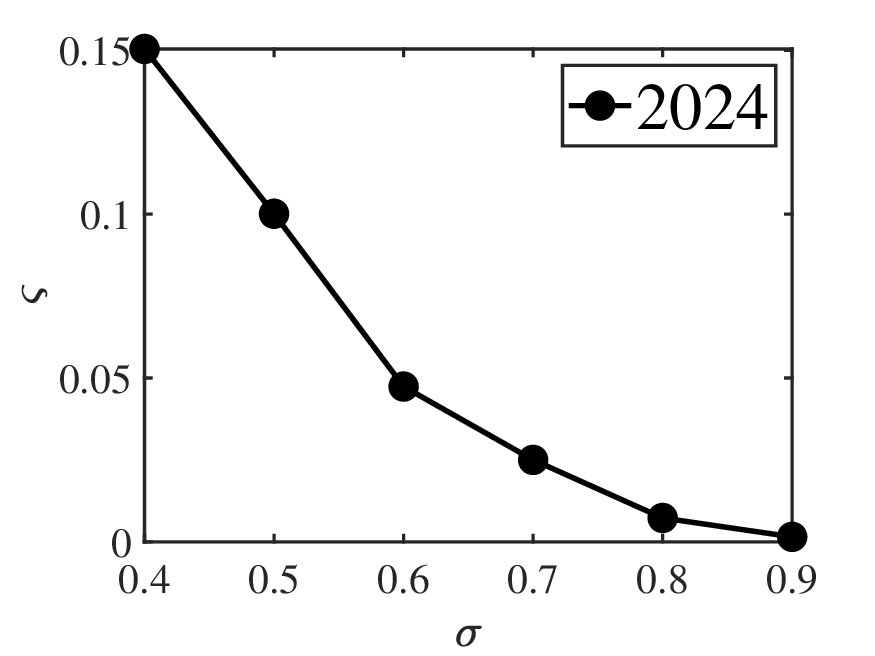}}
}
\caption{Proportion of nodes belonging to LSCBM against $\sigma$ for the twelve annual statistically validated stock networks.}
\label{vsigma} %% label for entire figure
\end{figure}

Figure \ref{vsigma} presents the proportion of nodes belonging to LSCBM across the twelve annual statistically validated stock networks (2013-2024) as a function of the correlation strength threshold $\sigma$. Critically, all twelve curves exhibit a consistent, monotonically decreasing trend: $\varsigma$ decreases as $\sigma$ increases from 0.4 to 0.9. This universal pattern underscores the inherent trade-off embedded in LSCBM's definition: increasing $\sigma$ imposes a stricter requirement for pairwise correlation strength within the module, usually reducing its potential size. The curves diverge significantly over time, reflecting the time-varying market structural cohesion. Notably, the 2015 network demonstrates markedly higher $\varsigma$ values across nearly the entire $\sigma$ spectrum compared to other years, aligning with its identification in Table \ref{realdataBasic} as a period of extreme market synchronization (the 2015 stock crash). Conversely, the 2021 network consistently yields the lowest $\varsigma$, confirming its status as the most fragmented year. The sharp decline in $\varsigma$ observed as $\sigma$ exceeds approximately 0.75 across most years highlights the scarcity of very strong ($|\widetilde{\mathbf{C}}_{i,j}|\geq0.8$) statistically validated correlations that meet the structural balance condition within large, stable modules in the Chinese stock market. 
\section{Conclusion}\label{sec5}
This study presents a novel framework for identifying structurally stable core subsystems within financial markets by introducing the concept of the largest strong-correlation balanced module (LSCBM). We establish the LSCBM as the first rigorous integration of statistically validated correlation networks—which objectively filter out spurious relationships—with structural balance theory to uncover market segments characterized by both economically significant correlation strength and relational stability. The core theoretical contribution lies in formally defining the LSCBM and deriving its fundamental asymptotic properties within the random signed graph model, establishing its expected size and multiplicity across diverse network regimes. To enable practical application on large-scale financial networks, we develop the efficient MaxBalanceCore algorithm. Leveraging structural balance theory and network sparsity, MaxBalanceCore identifies LSCBM with quadratic time complexity, making it feasible for real-world stock markets comprising thousands of entities. Empirically, the LSCBM framework reveals that the core structure of the Chinese stock market is dominated by clusters of strongly positively correlated stocks. No instances of the theoretically possible ``enemy of my enemy" motif (balanced negative triangles) were found within any LSCBM across the twelve-year study period (2013-2024) in the Chinese stock market. This absence of significant negative correlations within these core, densely connected modules indicates a critical lack of inherent, statistically robust hedging opportunities within these specific market subsystems. The extreme scarcity of validated negative correlations ($\xi_{-}\leq0.75\%$) in the broader networks support this finding. Instead, these modules act as cohesive risk units reflecting sectoral themes (e.g., Industrials during crises, Financials during pandemics), where stocks move in lockstep, particularly during high-stress events like the 2015 crash. Critically, LSCBMs capture the market’s dynamic reorganization: their composition rotates annually across consecutive years, while their size expands dramatically during systemic crises (e.g., 2015) and contracts in fragmented regimes (e.g., 2021). This sensitivity to economic shifts positions LSCBMs as real-time indicators in China's stock market. For investors, the uniform positivity within LSCBMs implies concentrated exposure to systemic risks, necessitating diversification beyond LSCBMs rather than internal hedging. 

Several promising avenues for future research emerge. Firstly, the application scope of the LSCBM concept warrants exploration beyond financial markets, such as in biological systems (e.g., gene regulatory networks) and social network analysis. For example, in online social networks (e.g., Twitter, Weibo) where relationships can be positive (follow) or negative (block), LSCBM could identify large stable groups characterized by mutual trust or the “enemy of my enemy” principle. In gene regulatory networks, where edges represent activation or inhibition, an LSCBM would correspond to a stable core of genes whose interactions maintain consistent patterns—either all mutually activating or structured with cross-inhibition—relevant to understanding cellular stability. In neuroscience, functional brain networks derived from fMRI data often exhibit positive and negative correlations between regions; applying LSCBM could help identify a stable core of regions with consistent interaction patterns, potentially offering insights into brain organization. In power grids, where positive correlations indicate synchronized flow and negative correlations indicate opposing loads, an LSCBM might reveal a robust, self-stabilizing subset of the grid capable of autonomous operation during disruptions.  Secondly, extending the theoretical foundations is crucial, including investigating LSCBM properties in random signed network models with degree heterogeneity or community structure. Moreover,  generalizing the LSCBM definition to directed signed networks would require redefining structural balance for directed triads and analyzing the resulting module properties. Lastly, within statistically validated financial correlation networks, future work can focus on leveraging the network structure for enhanced community detection for stock markets.  More generally, the LSCBM framework is transferable to any domain where pairwise interactions can be represented as signed, weighted edges. Applying it to a new field simply requires constructing an appropriate signed adjacency matrix—where edge weights reflect interaction strength and signs reflect the nature of the relationship (e.g., trust/distrust, activation/inhibition, synchronization/opposition)—and optionally pre-filtering edges using domain-relevant statistical criteria. The MaxBalanceCore algorithm, being purely structural, then operates on this graph without modification, underscoring the framework's broad applicability.
\section*{CRediT authorship contribution statement}
\textbf{Huan Qing:} Conceptualization, Data curation, Formal analysis, Funding acquisition, Methodology, Software, Visualization, Writing – original draft, Writing - review $\&$ editing.

\textbf{Xiaofei Xu:} Methodology, Funding acquisition,Validation, Visualization, Writing – original draft, Writing – review $\&$ editing.
\section*{Declaration of competing interest}
The authors declare no competing interests.
\section*{Data availability}
Data will be made available on request. %The MATLAB codes for the proposed Find-LCBC algorithm are provided in \ref{MatlabCodes}.
\section*{Acknowledgements}
Huan Qing was supported by the Scientific Research Foundation of Chongqing University of Technology (Grant No. 2024ZDR003), and the Science and Technology Research Program of Chongqing Municipal Education Commission (Grant No. KJQN202401168). Xiaofei Xu was supported by the  National Natural Science Foundation of China (NSFC) (Grant No. 12301358 and Grant No. 42450275).
\appendix
\section{Proofs of theoretical results}
\subsection{Proof of Lemma \ref{lemmain1}}
\begin{proof}
We prove that the probability of no SCBM existing vanishes asymptotically by analyzing the number of size-3 SCBMs. Let \(Z_3\) denote the number of strong-correlation balanced modules (SCBMs) of size exactly 3. Since the definition of an SCBM requires at least three nodes, the event \(\{\mathcal{S}^* = \emptyset\}\) implies no SCBM of any size exists, which includes size 3. Thus, we have
\[
\{\mathcal{S}^* = \emptyset\} \subseteq \{Z_3 = 0\}.
\]

Consequently, \(\mathbb{P}(\mathcal{S}^* = \emptyset) \leq \mathbb{P}(Z_3 = 0)\). Next, we show \(\mathbb{P}(Z_3 = 0) \to 0\) by using Chebyshev's inequality, which requires to bound the expectation and variance of \(Z_3\).

For any three distinct nodes, the probability of all edges existing with all positive signs is \(\alpha^3\). For the mixed case, there are \(\binom{3}{2} = 3\) choices for which two edges are negative, each with probability \(\alpha \beta^2\). Thus, the probability for a fixed triplet to be an SCBM is:
\[
p = \alpha^3 + 3\alpha\beta^2.
\]

Since \(\alpha > 0\) and \(\beta \geq0\), we have \(p > 0\). The number of triplets is \(\binom{N}{3}\), so we have
\[
\mathbb{E}[Z_3] = \binom{N}{3} p \sim \frac{N^3}{6} (\alpha^3 + 3\alpha\beta^2) = \Theta(N^3).
\]

In particular, \(\mathbb{E}[Z_3] \to \infty\) as \(N \to \infty\). Let \(I_U\) be the indicator that the node set \(U\) (with \(|U| = 3\)) is an SCBM. Then \(Z_3 = \sum_U I_U\), and the variance is
\[
\text{Var}(Z_3) = \sum_U \text{Var}(I_U) + \sum_{U \neq V} \text{Cov}(I_U, I_V).
\]

We bound each term separately. First, since \(I_U\) is a Bernoulli random variable, \(\text{Var}(I_U) \leq \mathbb{E}[I_U]\). Thus, we get
\[
\sum_U \text{Var}(I_U) \leq \sum_U \mathbb{E}[I_U]= \mathbb{E}[Z_3]= \Theta(N^3).
\]

For the covariance terms, partition the sum based on the intersection size \(|U \cap V|\) has the following cases:
\begin{itemize}
  \item Case \(|U \cap V| = 0\): The edge sets are disjoint and independent, so \(\text{Cov}(I_U, I_V) = 0\).
  \item Case \(|U \cap V| = 1\): Suppose \(U = \{a,b,c\}\) and \(V = \{a,d,e\}\) share only node \(a\). The edges within \(U\) (\(ab, ac, bc\)) and within \(V\) (\(ad, ae, de\)) are disjoint and independent. Hence, \(\text{Cov}(I_U, I_V) = 0\).
  \item Case \(|U \cap V| = 2\): Suppose \(U = \{a,b,c\}\) and \(V = \{a,b,d\}\) share nodes \(a\) and \(b\). The edge \(ab\) is shared, while edges \(\{ac, bc\}\) and \(\{ad, bd\}\) are disjoint. The covariance is bounded by \(|\text{Cov}(I_U, I_V)| \leq 2\) (since \(|I_U I_V| \leq 1\) and \(|\mathbb{E}[I_U] \mathbb{E}[I_V]| \leq 1\)). The number of such unordered pairs \((U,V)\) is:
  \[
  \binom{N}{3} \cdot \binom{3}{2} \cdot (N-3) = \Theta(N^4),
  \]
  since we choose \(U\) (\(\binom{N}{3}\) ways), choose two nodes in \(U\) to be shared with \(V\) (\(\binom{3}{2}\) ways), and choose the third node of \(V\) from the remaining \(N-3\) nodes.
Summing over all cases, we have:
\[
\sum_{U \neq V} \text{Cov}(I_U, I_V) \leq 0 + 0 + 2 \cdot \Theta(N^4) = \Theta(N^4).
\]
\end{itemize}

Combining both parts of the variance gives
\[
\text{Var}(Z_3) \leq \Theta(N^3) + \Theta(N^4) = \Theta(N^4).
\]

By Chebyshev's inequality:
\[
\mathbb{P}(Z_3 = 0) \leq \mathbb{P}\left( |Z_3 - \mathbb{E}[Z_3]| \geq \mathbb{E}[Z_3] \right) \leq \frac{\text{Var}(Z_3)}{\mathbb{E}[Z_3]^2} \leq \frac{C N^4}{(c N^3)^2} = \frac{C}{c^2} N^{-2},
\]
where \(C, c > 0\) are constants depending on \(\alpha\) and \(\beta\). As \(N \to \infty\), the right side vanishes:
\[
\mathbb{P}(Z_3 = 0) \to 0.
\]

Finally, since \(\mathbb{P}(\mathcal{S}^* = \emptyset) \leq \mathbb{P}(Z_3 = 0)\), we conclude:
\[
\lim_{N \to \infty} \mathbb{P}(\mathcal{S}^* = \emptyset) = 0.
\]
\end{proof}
\subsection{Proof of Theorem \ref{main1}}
\begin{proof}
For the first part of Theorem \ref{main1}, let \(Z_s\) be the number of SCBMs of size \(s\). Its expectation is:
\[
\mathbb{E}[Z_s] = \binom{N}{s} \sum_{k=0}^s \binom{s}{k} \alpha^{\binom{k}{2} + \binom{s-k}{2}} \beta^{k(s-k)}
\]
where \(k = |A|\), \(s - k = |B|\). The binomial coefficients arise from:
\begin{itemize}
  \item \(\binom{N}{s}\): ways to choose \(s\) nodes from \(N\)
  \item \(\binom{s}{k}\): ways to partition \(s\) nodes into subsets \(A\) with size \(k\) and \(B\) with size \(s-k\).
  \item \(\alpha^{\binom{k}{2} + \binom{s-k}{2}}\): probability all intra-subset edges exist and are positive.
  \item \(\beta^{k(s-k)}\): probability all \(A\)-\(B\) edges exist and are negative.
\end{itemize}

To analyze the asymptotics of \(\mathbb{E}[Z_s]\) as \(N \to \infty\), we apply Stirling's approximation:
\[
\binom{N}{s} \sim \frac{(eN)^s}{s^s \sqrt{2\pi s}}=\frac{e^{s \log N - s \log s + s}}{\sqrt{2 \pi s}}, \quad \binom{s}{k} \sim \frac{e^{s H(k/s)}}{\sqrt{2\pi (k/s)(1 - k/s)s}}=\frac{e^{s H(a)}}{\sqrt{2 \pi s a(1 - a)}},
\]
where \(H(a) = -a \log a - (1-a) \log(1-a)\) is the binary entropy function and we set \(k = as\) for \(a \in [0, 1]\). Here, the binary entropy function captures the combinatorial ``cost" of partitioning nodes into subsets of relative sizes \(a\) and \(1-a\).

The edge probability term \(\alpha^{\binom{k}{2} + \binom{s-k}{2}} \beta^{k(s-k)}\) is exponentiated as follows:
\[
\log \left( \alpha^{\binom{k}{2} + \binom{s-k}{2}} \beta^{k(s-k)} \right) = \underbrace{\left[ \binom{k}{2} + \binom{s-k}{2} \right] \log \alpha}_{\text{Intra-group edges}} + \underbrace{\left[ k(s-k) \log \beta \right]}_{\text{Inter-group edges}}.
\]

Given that \(k = a s\), for intra-group edges, we have
\[
\binom{k}{2} + \binom{s-k}{2} = \frac{k(k-1)}{2} + \frac{(s-k)(s-k-1)}{2} \approx\frac{s^2}{2} \left( a^2 + (1-a)^2 \right) + \mathcal{O}(s),
\]
where the lower-order term \(\mathcal{O}(s)\) vanishes asymptotically. For inter-group edges, we have
   \[
   k(s-k) = (a s)(s - a s) = a(1-a) s^2.
   \]

Thus, the exponent becomes:
\[
s^2 \underbrace{\left[ \frac{a^2 + (1-a)^2}{2} \log \alpha + a(1-a) \log \beta \right]}_{f(a)} + \mathcal{O}(s),
\]
where we define:
\[
f(a) = \frac{a^2 + (1-a)^2}{2} \log \alpha + a(1-a) \log \beta.
\]

Substituting approximations into \(\mathbb{E}[Z_s]\) gives
\[
\mathbb{E}[Z_s] \sim \frac{e^{s \log N - s \log s + s}}{\sqrt{2 \pi s}} \cdot \frac{e^{s H(a)}}{\sqrt{2 \pi s a(1-a)}} \cdot e^{s^2 f(a)}.
\]

Taking logarithms gives
\[
\log \mathbb{E}[Z_s] \approx \underbrace{s \log N - s \log s + s}_{(i)} + \underbrace{s H(a)}_{(ii)} + \underbrace{s^2 f(a)}_{(iii)} + o(s).
\]

We note that terms in \((i)\) (\(-s \log s + s\)) are independent of \(a\). They do not affect the optimal partition \(a^*\) and are absorbed into lower-order terms. Thus, the \(a\)-dependent dominant exponent in \(\log \mathbb{E}[Z_s]\) is:
\[
g(a) = s \log N + s H(a) + s^2 f(a).
\]

To find the most probable partition for SCBMs with size $s$, we maximize \(g(a)\) over \(a \in [0,1]\). Since \(s\log N\) is constant in \(a\), we solve:
   \[
   \max_{a} \left[ s H(a) + s^2 f(a) \right].
   \]

This identifies the partition \(a^*\) that maximizes the likelihood of SCBM formation. Maximizing \(g(a) = s \log N + s H(a) + s^2 f(a)\) over \(a \in [0,1]\) identifies the dominant contribution to the expected number \(\mathbb{E}[Z_s]\) of SCBMs of size \(s\). This maximization determines the most probable partition ratio \(a^* = |A|/s\) that maximizes the exponent in \(\mathbb{E}[Z_s]\), as \(g(a)\) captures the exponential growth rate (via \(s \log N\) and \(s H(a)\)) and edge probability decay (via \(s^2 f(a)\)). The scaling of \(\mathbb{E}[|\mathcal{S}^*|]\) emerges by finding the critical size \(s_c\) given later where \(\mathbb{E}[Z_s]\) transitions from decaying to growing exponentially, which occurs when the maximum of \(g(a)\) (over \(a\)) shifts sign; thus, maximizing \(g(a)\) directly governs the asymptotic behavior of \(\mathbb{E}[|\mathcal{S}^*|]\).

Because \(sH(a)\) is linear in \(s\), \(s^{2}f(a)\) is quadratic in \(s\), and \(H(a)\) is bounded because it ranges in \([0,\log 2]\), maximizing \(g(a)\) reduces to maximize \(f(a)\) for large \(N\). Next, we maximize \(f(a)\) to identify the most probable partition configuration. The derivative of \(f(a)\) is:
\[
f'(a) = (2a - 1)(\log \alpha - \log \beta)
\]
\begin{itemize}
  \item For the case \(\alpha \geq \beta\), we have
  \(f''(a) = 2(\log \alpha - \log \beta) \geq 0\), so \(f(a)\) is convex. The maximum occurs at endpoints:
  \[
  f(0) = f(1) = \frac{1}{2} \log \alpha \implies f^* = \frac{1}{2} \log \alpha
  \]
 \item For the case \(\alpha < \beta\), we have
  \(f''(a) = 2(\log \alpha - \log \beta) < 0\), so \(f(a)\) is concave. The maximum is at \(a = \frac{1}{2}\):
  \[
  f\left(\frac{1}{2}\right) = \frac{1}{4} \log(\alpha \beta) \implies f^* = \frac{1}{4} \log(\alpha \beta)
  \]
\end{itemize}

Since \(\alpha, \beta < 1\), we have \(f^* < 0\). After omitting the low-order term $sH(a)$, we have $g(a^{*})=s\log N+s^{2}f(a^{*})$, which gives $\mathbb{E}[Z_{s}]\approx \mathrm{exp}(g(a^{*}))=\mathrm{exp}(s\log N+s^{2}f(a^{*}))$. For the term $s\log N$, it originally arises for the dominant part of the binomial coefficient $\binom{N}{s}$ which counts the number of ways to choose $s$ nodes from $N$ nodes. Thus, we see that $s\log N$ quantifies the exponential growth in the number of candidate subsets of size $s$, where each subset is a potential module before edge constraints are applied. For the term $s^{2}f(a^{*})$, it originally comes from the joint probability that all edges in a candidate subset satisfy SCBM conditions under the optimal partition $a^{*}=\frac{|A|}{s}$. Because $f(a^{*})$ is negative, $s^{2}f(a^{*})$ represents the exponential decay in the probability that a subset with $s$ nodes forms a module satisfying structural balance condition. Based on the above analysis, we observe that
\begin{itemize}
  \item $s\log N$ increases $\mathbb{E}[Z_{s}]$ by adding more candidate subsets.
  \item $s^{2}f(a^{*})$ decreases $\mathbb{E}[Z_{s}]$ because more edges (growing as $s^{2}$) must satisfy constraints, and each edge has a probability $<1$.
\end{itemize}

Thus, the competition between the two terms $s\log N$ and  $s^{2}f(a^{*})$ determines the phase transition behavior of \(\mathbb{E}[Z_s]\):
\[
\log \mathbb{E}[Z_s]\approx \underbrace{s\log N}_{\text{Combinatorial growth}} + \underbrace{s^2 f(a^*)}_{\text{Probabilistic decay}}.
\]

Given that $s\log N$ grows linearly with $s$ and $s^2 f(a^*)$ decreases quadratically with $s$ since $f(a^{*})<0$, we have:
\begin{itemize}
  \item When \(s\) is small, the linear term $s\log N$ dominates for large $N$ because the quadratic term $|s^2 f(a^*)|$ is small in magnitude. This forces $\mathbb{E}[Z_s]$ to diverge, implying that SCBMs of size small $s$ are abundant in large networks.
  \item Conversely, when \(s\) is large, \(s^2 |f(a^*)|\) dominates. This forces \(\mathbb{E}[Z_s]\) to vanish exponentially, making SCBMs of large size $s$ statistically impossible in large networks.
\end{itemize}

Based on the above analysis, we argue that there must exist a sharp critical size $s_{c}$ that balances the combinatorial growth against edge probability decay. To find \(s_c\), we substitute \(s = c \log N\) and solve the balance equation:
  \[
  c \log N \cdot \log N + c^2 (\log N)^2 f(a^*) = 0 \implies c = -\frac{1}{f(a^*)} = \frac{1}{\lambda(\alpha,\beta)},
  \]
where
\[
\lambda(\alpha, \beta) = -f^* = \begin{cases}
\frac{1}{2} |\log \alpha| & \alpha \geq \beta,\\
\frac{1}{4} (|\log \alpha| + |\log \beta|) & \alpha < \beta.
\end{cases}
\]

Thus, we obtain the threshold size:
\[
s_c = \frac{\log N}{\lambda(\alpha, \beta)}
\]

In fact, $s_{c}$ is the asymptotic scaling of the size of the LSCBM, i.e., $|\mathcal{S}^{*}|$ must concentrate near $s_{c}$. To prove this statement, for any $0<\epsilon<1$, we want to show that
\begin{align*}
\mathbb{P}(|\mathcal{S}^{*}|\in[(1-\epsilon)s_{c},(1+\epsilon)s_{c}])\rightarrow1\qquad\mathrm{as}\qquad N\rightarrow\infty,
\end{align*}
which requires proving the following two distinct behaviors:
\begin{itemize}
  \item Below \(s_c\), SCBMs emerge abundantly.
  \item Above \(s_c\), their probability vanishes  exponentially.
\end{itemize}

For the case that $s$ is smaller than $s_{c}$, we set \(s= (1-\epsilon)s_c\) for any fixed \(0<\epsilon <1\). The exponent is
\[
s\log N + s^{2}f^* = (\log N)^2 \left[ \frac{1-\epsilon}{\lambda} + \frac{(1-\epsilon)^2}{\lambda^2} (-\lambda) \right] = (\log N)^2 \frac{\epsilon - \epsilon^2}{\lambda} > 0,
\]
which gives \(\mathbb{E}[Z_{s}]\approx\mathrm{exp}(s\log N+s^{2}f(a^{*}))\to \infty\) as $N\rightarrow\infty$. By Theorem \ref{main4}, we know that \(\text{Var}(Z_{s}) = o(\mathbb{E}[Z_{s}]^2)\) as $N\rightarrow\infty$ when \(s= (1-\epsilon)s_c\). By Chebyshev’s inequality, we have
\[
\mathbb{P}(Z_{s}= 0) \leq \frac{\text{Var}(Z_{s})}{\mathbb{E}[Z_{s}]^2} \to 0 \qquad\text{as~}N\rightarrow\infty,
\]
which implies that \(\mathbb{P}(Z_{s}> 0) \to 1\) as $N\rightarrow\infty$. Hence, SCBMs of size \(s= (1-\epsilon)s_c\) exist with high probability.

Set \(s = (1+\epsilon) s_c\). The exponent is:
\[
s \log N + s^2 f^* = (\log N)^2 \frac{-\epsilon - \epsilon^2}{\lambda} < 0
\]

Thus \(\mathbb{E}[Z_s]\to 0\) as $N\rightarrow\infty$. By Markov's inequality, we have
\[
\mathbb{P}(Z_s > 0)=\mathbb{P}(Z_s \geq1) \leq \mathbb{E}[Z_s]\to 0\qquad\text{as~}N\rightarrow\infty.
\]

So no SCBMs of size \(> (1+\epsilon)s_c\) exist with high probability. Thus, for any \(\epsilon \in(0,1)\):
 \[
  \lim_{N \to \infty} \mathbb{P}\left( \left| \frac{|\mathcal{S}^*|}{s_c} - 1 \right| < \epsilon \right) = 1.
  \]

Thus \(|\mathcal{S}^*| \sim s_c\) in probability, and
\[
\mathbb{E}[|\mathcal{S}^*|] \sim \frac{\log N}{\lambda(\alpha, \beta)}
\]

For the second part of Theorem \ref{main1}, fix \(\epsilon \in (0, 1/3)\) (e.g., \(\epsilon = 1/4\)). By previous analysis, for any \(\delta \in(0,1)\), there exists \(N_0\) such that for all \(N > N_0\),
\[
\mathbb{P}\left( |\mathcal{S}^{*}| \in \left[(1-\epsilon)s_c, (1+\epsilon)s_c\right] \right) > 1 - \delta,
\]
where \(s_c = \log N / \lambda(\alpha,\beta)\) is the asymptotic scaling of the LSCBM size.

Define \(Q\) as the number of unordered pairs \(\{A, B\}\) of vertex-disjoint SCBMs each of size \(s = \lfloor (1-\epsilon)s_c \rfloor\). The expectation of \(Q\) is
\[
\mathbb{E}[Q] = \frac{1}{2} \binom{N}{s} \binom{N-s}{s} \mu_s^2.
\]

From the proof of the first part of Theorem \ref{main1}, we know that \(\mathbb{E}[Z_s] = \binom{N}{s} \mu_s = \exp\left( s \ln N + s^2 f(a^*) + o(s) \right)\) with \(f(a^*) = -\lambda(\alpha,\beta)\). Substituting \(s = (1-\epsilon)s_c\) gives
\[
\ln \mathbb{E}[Z_s] = \frac{(\epsilon - \epsilon^2)(\ln N)^2}{\lambda} + o((\ln N)^2) \to \infty,
\]
implying \(\mathbb{E}[Z_s] \to \infty\). Rewrite \(\mathbb{E}[Q]\) as
\[
\mathbb{E}[Q] = \frac{1}{2} \mathbb{E}[Z_s] \cdot \binom{N-s}{s} \mu_s.
\]

Using the combinatorial identity \(\binom{N-s}{s} / \binom{N}{s} = \frac{(N-s)!^2}{(N-2s)! N!}\) and Stirling’s approximation, we have
\[
\binom{N-s}{s} \big/ \binom{N}{s} \leq \left(1 - \frac{s}{N}\right)^s \leq e^{-s^2 / N}.
\]

Since \(s = \Theta(\log N)\), \(s^2 / N \to 0\), so \(e^{-s^2 / N} \to 1\). Thus,
\[
\mathbb{E}[Q] \leq \frac{1}{2} \mathbb{E}[Z_s]^2 (1 + o(1)).
\]

Given that \(\mathbb{E}[Z_s] \to \infty\), it follows that \(\mathbb{E}[Q] \to \infty\). To bound \(\operatorname{Var}(Q)\), define the indicator \(I_{A,B} = \mathbf{1}_{\{A \text{ and } B \text{ are disjoint SCBMs}\}}\), so \(Q = \sum_{\{A,B\} \text{ disjoint}} I_{A,B}\). The variance decomposes as
\[
\operatorname{Var}(Q) = \sum_{\{A,B\}} \operatorname{Var}(I_{A,B}) + \sum_{\substack{\{A,B\} \neq \{C,D\} \\ \text{disjoint}}} \operatorname{Cov}(I_{A,B}, I_{C,D}),
\]
where the first term satisfies \(\sum_{\{A,B\}} \operatorname{Var}(I_{A,B}) \leq \mathbb{E}[Q]\), since \(\operatorname{Var}(I_{A,B}) \leq \mathbb{E}[I_{A,B}]\). For the second term, if the vertex sets of \(\{A,B\}\) and \(\{C,D\}\) are disjoint, edge independence implies \(\operatorname{Cov}(I_{A,B}, I_{C,D}) = 0\). When vertex sets overlap, let \(t \geq 1\) be the size of the intersection. Applying techniques from Theorem \ref{main4}, we have
\begin{itemize}
  \item For \(t < 6\) (since SCBMs require at least 3 nodes), \(\operatorname{Cov}(I_{A,B}, I_{C,D}) \leq \mathbb{E}[I_{A,B}]\).
  \item For \(t \geq 6\), conditional expectation and structural balance constraints yield \(\operatorname{Cov}(I_{A,B}, I_{C,D}) \leq \mathbb{E}[I_{A,B}] \mathbb{E}[I_{C,D}] / \mu_t\).
\end{itemize}

Combinatorial counting over intersection sizes confirms that
\[
\sum_{\substack{\{A,B\} \neq \{C,D\} \\ \text{overlapping}}} |\operatorname{Cov}(I_{A,B}, I_{C,D})| \leq c \mathbb{E}[Q]^2 N^{-1}
\]
for some constant \(c > 0\). Since \(\mathbb{E}[Q] \to \infty\), combining these terms obtains
\[
\operatorname{Var}(Q) \leq \mathbb{E}[Q] + c \mathbb{E}[Q]^2 N^{-1} = o(\mathbb{E}[Q]^2).
\]

By Chebyshev’s inequality, we have
\[
\mathbb{P}\left( |Q - \mathbb{E}[Q]| \geq \tfrac{1}{2} \mathbb{E}[Q] \right) \leq \frac{\operatorname{Var}(Q)}{(\mathbb{E}[Q]/2)^2} \to 0,
\]
which implies \(\mathbb{P}(Q \geq \tfrac{1}{2} \mathbb{E}[Q]) \to 1\). As \(\mathbb{E}[Q] \to \infty\), this gives \(\mathbb{P}(Q \geq 1) \to 1\).

Each pair \(\{A,B\}\) counted in \(Q\) must belong to distinct LSCBMs. If they are in the same LSCBM \(\mathcal{S}^*\), then \(|\mathcal{S}^*| \geq 2s\). However,
\[
2s = 2(1-\epsilon)s_c > (1+\epsilon)s_c \geq |\mathcal{S}^*| \quad \text{with high probability},
\]
since \(\epsilon < 1/3\) implies \(2(1-\epsilon) > 1+\epsilon\). This contradicts the maximality of \(|\mathcal{S}^*|\). Therefore,
\[
\mathbb{P}(Z_{|\mathcal{S}^{*}|} \geq 2) \geq \mathbb{P}(Q \geq 1) - \mathbb{P}\left(|\mathcal{S}^{*}| \notin [(1-\epsilon)s_c, (1+\epsilon)s_c]\right) \to 1 - \delta.
\]

As \(\delta > 0\) is arbitrary, \(\lim_{N \to \infty} \mathbb{P}(Z_{|\mathcal{S}^{*}|} \geq 2) = 1\).
\end{proof}
\subsection{Proof of Theorem \ref{main2}}
\begin{proof}
For the first part of this theorem, define \(d = \log b > 0\). We show that the expected number of non-all-positive balanced modules (containing at least one negative edge) vanishes asymptotically. W know that non-all-positive modules fall into the following two categories:
\begin{itemize}
  \item Mixed modules: \(1 \leq |A| \leq s-1\), \(|B| = s - |A| \geq 1\) (contain negative edges).
  \item All-negative modules: \(A = \emptyset\), \(|B| = s\) (fully negative edges but structurally balanced).
\end{itemize}

We analyze the upper bound for mixed modules. For fixed size \(s\) and partition \(a = |A|\) (\(1 \leq a \leq s-1\)), the expected count is:
\[
\mathbb{E}[\text{count}_{\text{mixed}}] \leq \sum_{a=1}^{s-1} \binom{N}{s} \binom{s}{a} \alpha^{\binom{a}{2} + \binom{s-a}{2}} \beta^{a(s-a)}.
\]

To bound the expected count of mixed modules, we apply tight asymptotic inequalities:
\begin{itemize}
  \item \(\binom{s}{a} \leq 2^s\) (since \(\sum_{a=0}^s \binom{s}{a} = 2^s\)),
  \item  \(\binom{N}{s} \leq (eN/s)^s\) (from Stirling's bound \(\binom{N}{s} \leq \frac{(eN)^s}{s^s}\)),
  \item \(\beta \leq 2b/N\) (for large \(N\), as \(\beta = b/N + o(1/N) \implies \beta \leq 2b/N\)),
  \item \(a(s-a) \geq s-1\) (minimized at \(a=1\) or \(a=s-1\) by convexity of \(a(s-a)\)).
\end{itemize}

Substituting these inequalities yields
\[
\mathbb{E}[\text{count}_{\text{mixed}}] \leq \sum_{a=1}^{s-1} \left(\frac{eN}{s}\right)^s 2^s \left(\frac{2b}{N}\right)^{s-1} = (s-1) \left(\frac{eN}{s}\right)^s 2^s \left(\frac{2b}{N}\right)^{s-1}.
\]

Setting \(s = \omega N\) (\(\omega > 0\) constant) and taking logarithms, we get
\[
\log \mathbb{E}[\text{count}_{\text{mixed}}] \leq \log(\omega N) + \omega N \log\left(\frac{e}{\omega}\right) + \omega N \log 2 + (\omega N - 1) \log\left(\frac{2b}{N}\right) + o(1).
\]

We see that the dominant term is \(\omega N \log\left(\frac{2b}{N}\right) = \omega N (\log(2b) - \log N)\). Since \(-\omega \log N \to -\infty\), we have
\[
\log \mathbb{E}[\text{count}_{\text{mixed}}] \leq -\Theta(N \log N) \implies \mathbb{E}[\text{count}_{\text{mixed}}] \leq e^{-\Theta(N \log N)} \to 0 \quad \text{(exponential decay)}.
\]

We now analyze the upper bound for all-negative modules. The expected count of size-\(s\) all-negative modules is
\[
\mathbb{E}[\text{count}_{\text{neg}}] = \binom{N}{s} \beta^{\binom{s}{2}}.
\]

Using \(\beta \leq \frac{2b}{N}\) obtains
\[
\mathbb{E}[\text{count}_{\text{neg}}] \leq \left(\frac{eN}{s}\right)^s \left(\frac{2b}{N}\right)^{\binom{s}{2}}.
\]

Setting \(s = \omega N\) and taking logarithms, we have
\[
\log \mathbb{E}[\text{count}_{\text{neg}}] \leq s \log\left(\frac{eN}{s}\right) + \binom{s}{2} \log\left(\frac{2b}{N}\right) = \omega N \log\left(\frac{e}{\omega}\right) + \frac{\omega N (\omega N - 1)}{2} \log\left(\frac{2b}{N}\right) + o(N).
\]

The dominant term is \(\frac{\omega^2 N^2}{2} \log\left(\frac{2b}{N}\right) = \frac{\omega^2 N^2}{2} (\log(2b) - \log N)\). Since \(-\frac{\omega^2 N^2}{2} \log N \to -\infty\), we get
\[
\mathbb{E}[\text{count}_{\text{neg}}] \to 0 \quad \text{(exponential decay)}.
\]

Therefore, we see that the total expected number of non-all-positive modules:
\[
\mathbb{E}[\text{count}_{\text{non-pos}}] = \mathbb{E}[\text{count}_{\text{mixed}}] + \mathbb{E}[\text{count}_{\text{neg}}] \to 0.
\]

Summing over \(s \geq 3\) gives
\[
\sum_{s=3}^N \mathbb{E}[\text{count}_{\text{non-pos}, s}] \leq O(N) e^{-\Theta(N \log N)} \to 0.
\]

Then, by Markov’s inequality, we have
\[
\mathbb{P}(\exists \text{non-all-positive balanced module}) \leq \sum_{s=3}^N \mathbb{E}[\text{count}_{\text{non-pos}, s}] \to 0.
\]

Therefore, w.h.p. LSCBM is an all-positive module (\(B = \emptyset\), automatically structurally balanced). Now, let \(Z_k\) denote the number of all-positive modules of size \(k\) (all edges present and positive), we get
\[
\mathbb{E}[Z_k] = \binom{N}{k} \alpha^{\binom{k}{2}}.
\]

Set \(k = c \mu = c \frac{N d}{b}\), where \(\mu = \frac{N \log b}{b}\), \(d = \log b\). We then provide asymptotic analysis of \(\mathbb{E}[Z_k]\). Using Stirling’s approximation gives
\[
\log \binom{N}{k} \leq k \log\left(\frac{N}{k}\right) + k.
\]

Given \(\alpha = 1 - \frac{b}{N} + o(1/N)\), by Taylor expansion, we have
\[
\log \alpha = -\frac{b}{N} - \frac{b^2}{2N^2} + o(1/N^2) \leq -\frac{b}{2N} \quad \text{(for large } N\text{)}.
\]

Then:
\[
\binom{k}{2} \log \alpha = \frac{k(k-1)}{2} \left(-\frac{b}{N} + O(1/N^2)\right) = -\frac{b k^2}{2N} + \frac{b k}{2N} + O(k^2/N^2).
\]

Substituting \(k = c \frac{N d}{b}\):
\[
\begin{aligned}
\log \mathbb{E}[Z_k] &= k \log\left(\frac{N}{k}\right) + k - \frac{b k^2}{2N} + O(\log N) \\
&= c \frac{N d}{b} \log\left(\frac{b}{c d}\right) + c \frac{N d}{b} - \frac{b}{2N} \left(c^2 \frac{N^2 d^2}{b^2}\right) + O(\log N) \\
&= \frac{N d}{b} \left[ c \log\left(\frac{b}{c d}\right) + c - \frac{c^2 d}{2} \right] + O(\log N) \\
&= \frac{N d}{b} h(c) + O(\log N),
\end{aligned}
\]
where \(h(c) = c \log\left( \frac{b}{c d} \right) + c - \frac{c^2 d}{2}\) and \(d = \log b\). When \(c=1\), we have
\[
h(1) = \log\left( \frac{b}{d} \right) + 1 - \frac{d}{2} = (\log b) + 1 - \frac{\log b}{2} - \log(\log b) = \frac{\log b}{2} + 1 - \log(\log b).
\]

Define \(f(d) = 1 + \frac{d}{2} - \log d\) (\(d = \log b > 0\)). Its derivative is
\[
f'(d) = \frac{1}{2} - \frac{1}{d}, \quad f''(d) = \frac{1}{d^2} > 0.
\]

The minimum occurs at \(d=2\) with \(f(2) = 1 + 1 - \log 2 \approx 1.307 > 0\). Since \(\lim_{d \to 0^+} f(d) = \infty\) and \(\lim_{d \to \infty} f(d) = \infty\), \(f(d) > 0\) for all \(d > 0\) (i.e., \(b > 1\)).
For \(k = \mu = \frac{N \log b}{b}\) (\(c=1\)), we have
\[
\log \mathbb{E}[Z_k] \sim \frac{N d}{b} h(1) \to \infty \implies \mathbb{E}[Z_k] \to \infty.
\]

Since \(h(c) \to -\infty\) as \(c \to \infty\) and \(h(c)\) is continuous, there must exist a \(c_0(b) > 0\) such that \(h(c_0) = 0\):
\begin{itemize}
  \item If \(b \geq e^2 \approx 7.389\), setting \(c_0=2\), we have
  \[
  h(2) = 2 \log\left(\frac{b}{2 \log b}\right) + 2 - 2 \log b = 2(\log b - \log 2 - \log(\log b)) + 2 - 2 \log b = -2 \log 2 + 2 - 2 \log(\log b) \leq -0.772 < 0.
  \]
  Thus, set \(C_{0}(b) = 2\).
\item If \(1 < b < e^2\), solve \(h(c_0) = 0\) numerically (e.g., bisection) and set \(C_0(b) = c_0 + 1 > 1\).
For \(k' = C_0(b) \mu = C_0(b) \frac{N \log b}{b}\), we have
\[
\log \mathbb{E}[Z_{k'}] \sim \frac{N d}{b} h(C_0(b)) \to -\infty \implies \mathbb{E}[Z_{k'}] \to 0 \quad \text{(exponentially)}.
\]
\end{itemize}

Given that w.h.p. \(|\mathcal{S}^{*}|\) is the size of an all-positive module, so w.h.p. \(|\mathcal{S}^{*}| = \max \{ k \mid Z_k > 0 \}\). Next, we prove that w.h.p. \(\mu \leq |\mathcal{S}^{*}| \leq k'\). For \(k = \mu\), by previous analysis, we know that \(\mathbb{E}[Z_k] \to \infty\). We now show \(\frac{\text{Var}(Z_k)}{(\mathbb{E}[Z_k])^2} \to 0\). According to variance decomposition, we have
\[
\text{Var}(Z_k) = \sum_{U} \text{Var}(I_U) + \sum_{U \neq V} \text{Cov}(I_U, I_V) \leq \mathbb{E}[Z_k] + \sum_{U \neq V} \mathbb{E}[I_U I_V],
\]
where \(I_U\) is the indicator that subset \(U\) forms an all-positive module. For the second term, we have
\[
\sum_{U \neq V} \mathbb{E}[I_U I_V] = \sum_{t=1}^{k-1} \sum_{\substack{U,V \\ |U \cap V|=t}} \mathbb{E}[I_U I_V] = \sum_{t=1}^{k-1} \binom{N}{k} \binom{k}{t} \binom{N-k}{k-t} \alpha^{2\binom{k}{2} - \binom{t}{2}}.
\]

The variance ratio is
\[
\frac{\text{Var}(Z_k)}{(\mathbb{E}[Z_k])^2} \leq \frac{1}{\mathbb{E}[Z_k]} + \sum_{t=1}^{k-1} \Gamma_t \alpha^{-\binom{t}{2}} \mathrm{~with~} \Gamma_t = \frac{\binom{k}{t} \binom{N-k}{k-t}}{\binom{N}{k}}.
\]

For $\Gamma_t$ and $\alpha^{-\binom{t}{2}}$, we have \(\Gamma_t \leq \left( \frac{e k^2}{t N} \right)^t\) and \(\alpha^{-\binom{t}{2}} = \exp\left( -\binom{t}{2} \log \alpha \right) \leq \exp\left( \binom{t}{2} \frac{2b}{N} \right) \leq \exp\left( \frac{b t^2}{N} \right)\) (since \(\log \frac{1}{\alpha} \leq \frac{2b}{N}\)). Substituting \(k = \mu = \frac{N d}{b}\) (so \(\frac{k^2}{N} = \frac{N d^2}{b^2}\)) and defining
\[
g(t) = t \log \left( \frac{e k^2}{t N} \right) + \frac{b t^2}{N} = t (\log N + \log(e d^2 / b^2) - \log t) + \frac{b t^2}{N}.
\]

For fixed \(b\), \(\log(e d^2 / b^2) = O(1)\). We split the summation by considering the following two cases:
\begin{itemize}
  \item Case 1: when \(1 \leq t \leq \sqrt{N}\), we have \(g(t) \leq t (\log N + C_1)\) (\(C_1 = \log(e d^2 / b^2)\) constant). So,
  \[
  \Gamma_t \alpha^{-\binom{t}{2}} \leq \exp(g(t)) \leq \exp(t (\log N + C_1)) = (e^{C_1} N)^t.
  \]
  Then we have
  \[
  \sum_{t=1}^{\lfloor \sqrt{N} \rfloor} \Gamma_t \alpha^{-\binom{t}{2}} \leq \sum_{t=1}^{\lfloor \sqrt{N} \rfloor} (e^{C_1} N)^t \leq \sqrt{N} (e^{C_1} N)^{\sqrt{N}} = \exp\left( \Theta(\sqrt{N} \log N) \right).
  \]
  By previous analysis, we know that \(\mathbb{E}[Z_k] = e^{\Theta(N)}\) when $k=\mu$, so \((\mathbb{E}[Z_k])^2 = e^{\Theta(N)}\). Since \(\sqrt{N} \log N = o(N)\),
  \[
  \sum_{t=1}^{\lfloor \sqrt{N} \rfloor} \Gamma_t \alpha^{-\binom{t}{2}} = e^{o(N)} = o\left( (\mathbb{E}[Z_k])^2 \right).
  \]
\item Case 2: When \(\sqrt{N} < t \leq k-1\), by simple analysis, we have $g(t)\leq g(N)=D N$ with (\(D\) being a constant. Thus, we get
  \[
  \Gamma_t \alpha^{-\binom{t}{2}} \leq \exp(g(t)) \leq e^{D N},
  \]
  which gives
  \[
  \sum_{t=\lceil \sqrt{N} \rceil}^{k-1} \Gamma_t \alpha^{-\binom{t}{2}} \leq k e^{D N} = \Theta(N) e^{D N} = e^{D N + \log N}.
  \]
  Given that \(\log \mathbb{E}[Z_k] = \frac{N d}{b} h(1) + O(\log N)\) with \(h(1) > 0\) when $k=\mu$, so \((\mathbb{E}[Z_k])^2 = \exp\left( \frac{2N d}{b} h(1) + O(\log N) \right)\).
  Since \(h(1) > 0\), for large \(N\), \(\frac{2Nd}{b} h(1)=\Theta (N)\), implying:
  \[
  \sum_{t=\lceil \sqrt{N} \rceil}^{k-1} \Gamma_t \alpha^{-\binom{t}{2}} = o\left( (\mathbb{E}[Z_k])^2 \right).
  \]
\end{itemize}

Combining both cases gives
\[
\sum_{t=1}^{k-1} \Gamma_t \alpha^{-\binom{t}{2}} = o\left( (\mathbb{E}[Z_k])^2 \right), \quad \frac{1}{\mathbb{E}[Z_k]} \to 0 \implies \frac{\text{Var}(Z_k)}{(\mathbb{E}[Z_k])^2} \to 0.
\]

Then, by Chebyshev’s inequality, we have
\[
\mathbb{P}(Z_k = 0) \leq \frac{\text{Var}(Z_k)}{(\mathbb{E}[Z_k])^2} \to 0 \implies \mathbb{P}(Z_k > 0) \to 1.
\]

Since w.h.p. LSCBM is all-positive, w.h.p. \(|\mathcal{S}^{*}| \geq \mu\). For \(k' = C_0(b) \mu\), by previous analysis, we know that \(\mathbb{E}[Z_{k'}] \to 0\) exponentially. By Markov’s inequality:
\[
\mathbb{P}(Z_{k'} > 0) \leq \mathbb{E}[Z_{k'}] \to 0.
\]

For \(s > k'\), since \(h(c)\) is continuous and \(h(C_0(b)) < 0\), we have \(h(c) \leq -\varsigma(b) < 0\), where \(\varsigma(b) > 0\) depends on \(b\). Thus, we have
\[
\mathbb{E}[Z_s] \leq \exp\left( \frac{N d}{b} h(c) \right) \leq \exp\left( -\frac{N d}{b} \varsigma(b) \right) \quad \text{for} \quad s \geq k'.
\]

Then, we get
\[
\mathbb{P}(\exists \text{ all-positive module of size } \geq k') \leq \sum_{s=k'}^N \mathbb{E}[Z_s] \leq (N-k')\exp\left( -\frac{N d}{b} \varsigma(b) \right)\to 0.
\]

Since w.h.p. LSCBM is all-positive, w.h.p. \(|\mathcal{S}^{*}| \leq k'\). In conclusion, we have w.h.p. \(\mu \leq |\mathcal{S}^{*}| \leq k'\). By previous analysis, we see that there exist \(\delta > 0\) and \(N_0 > 0\) such that for \(N > N_0\),
\[
\mathbb{P}\left( |\mathcal{S}^{*}| \notin [\mu, k'] \right) \leq e^{-\delta N}.
\]

Decomposing the expectation gets
\[
\mathbb{E}[|\mathcal{S}^{*}|] = \sum_{m=0}^{N} \mathbb{P}(|\mathcal{S}^{*}| > m) = \sum_{m=0}^{\lfloor \mu \rfloor - 1} \mathbb{P}(|\mathcal{S}^{*}| > m) + \sum_{m=\lfloor \mu \rfloor}^{\lfloor k' \rfloor} \mathbb{P}(|\mathcal{S}^{*}| > m) + \sum_{m=\lfloor k' \rfloor + 1}^{N} \mathbb{P}(|\mathcal{S}^{*}| > m).
\]

For the lower bound, we have
  \[
  \mathbb{E}[|\mathcal{S}^{*}|] \geq \sum_{m=0}^{\lfloor \mu \rfloor - 1} \mathbb{P}(|\mathcal{S}^{*}| > \lfloor \mu \rfloor) \geq \sum_{m=0}^{\lfloor \mu \rfloor - 1} \mathbb{P}(|\mathcal{S}^{*}| \geq \mu) \geq \lfloor \mu \rfloor (1 - e^{-\delta N}).
  \]

  Since \(\mu = \Theta(N)\), \(\lfloor \mu \rfloor = \mu (1 + o(1))\), so:
  \[
  \mathbb{E}[|\mathcal{S}^{*}|] \geq \mu (1 - e^{-\delta N}) (1 + o(1)) \sim \mu = \frac{N \log b}{b}.
  \]

For the upper bound, we have
  \[
  \mathbb{E}[|\mathcal{S}^{*}|] \leq \sum_{m=0}^{\lfloor \mu \rfloor - 1} 1 + \sum_{m=\lfloor \mu \rfloor}^{\lfloor k' \rfloor} 1 + \sum_{m=\lfloor k' \rfloor + 1}^{N} \mathbb{P}(|\mathcal{S}^{*}| > k') \leq \lfloor \mu \rfloor + \lfloor k' \rfloor - \lfloor \mu \rfloor + N \mathbb{P}(|\mathcal{S}^{*}| > k').
  \]

Since \(\mathbb{P}(|\mathcal{S}^{*}| > k') \leq e^{-\delta N}\),  we have
  \[
  N \mathbb{P}(|\mathcal{S}^{*}| > k') \leq N e^{-\delta N} \to 0.
  \]

  Since \(\lfloor k' \rfloor = k' (1 + o(1)) = C_0(b) \frac{N \log b}{b} (1 + o(1))\):
  \[
  \mathbb{E}[|\mathcal{S}^{*}|] \leq k' + o(1) \sim C_0(b) \frac{N \log b}{b}.
  \]

As \(C_0(b)\) is a constant (depending on \(b\)), we have
\[
\frac{N \log b}{b} (1 - o(1)) \leq \mathbb{E}[|\mathcal{S}^{*}|] \leq C_0(b) \frac{N \log b}{b} (1 + o(1)) \implies \mathbb{E}[|\mathcal{S}^{*}|] = \Theta\left(\frac{N \log b}{b}\right).
\]

For the second part of this theorem, the strategy hinges on constructing sufficiently large, disjoint all-positive modules that cannot coexist within a single LSCBM due to size constraints. Select a constant \(\delta > 0\) satisfying: (1)  \(g(\delta) = \delta \ln(1/\delta) + \delta - \frac{b \delta^2}{2} > 0\) (ensured by choosing \(\delta < \delta_{\max}\), where \(\delta_{\max}\) is the largest root of \(g(\delta) = 0\)); (2) \(\delta > \frac{C_0(b) \log b}{2b}\) (guaranteeing \(2\delta N > C_0(b)\mu\) w.h.p.). Such \(\delta\) exists for \(b > 1\): \(g(\delta) > 0\) holds for small \(\delta > 0\) due to the \(\delta \ln(1/\delta)\) term dominating, while \(\frac{C_0(b) \log b}{2b}\) is a fixed positive constant. Set \(s = \lfloor \delta N \rfloor\).

We know that the expected number of size-\(s\) all-positive cliques (trivially balanced SCBMs) is:
\[
\mathbb{E}[Z_s] = \binom{N}{s} \alpha^{\binom{s}{2}}.
\]

Using \(\binom{N}{s} \leq \left( \frac{eN}{s} \right)^s\) and \(\ln \alpha \leq -\frac{b}{2N}\) for large \(N\) gives
\[
\ln \mathbb{E}[Z_s] \leq s \ln\left(\frac{eN}{s}\right) - \frac{b s(s-1)}{4N}.
\]

Substituting \(s = \delta N\) gets
\[
\ln \mathbb{E}[Z_s] \leq N \left[ \delta \ln(1/\delta) + \delta - \frac{b \delta^2}{4} + o(1) \right] = N \left[ g(\delta) + \frac{b \delta^2}{4} \right] + o(N).
\]

Since \(g(\delta) > 0\), the expression in brackets is positive, implying \(\mathbb{E}[Z_s] \to \infty\). Thus, size-\(s\) all-positive modules are abundant. Now we define \(Q\) as the number of unordered pairs \(\{A, B\}\) of disjoint size-\(s\) all-positive modules. Its expectation is:
\[
\mathbb{E}[Q] = \frac{1}{2} \binom{N}{s} \binom{N-s}{s} \left( \alpha^{\binom{s}{2}} \right)^2 = \frac{1}{2} \mathbb{E}[Z_s] \cdot \binom{N-s}{s} \alpha^{\binom{s}{2}}.
\]

Bounding the second factor gives
\[
\binom{N-s}{s} \alpha^{\binom{s}{2}} \leq \exp\left( N \left[ \delta \ln(1/\delta) + \delta - \delta^2 - \frac{b \delta^2}{4} + o(1) \right] \right).
\]

Combined with \(\mathbb{E}[Z_s] = \exp\left( N \left[ \delta \ln(1/\delta) + \delta - \frac{b \delta^2}{4} + o(1) \right] \right)\), we get:
\[
\mathbb{E}[Q] \leq \frac{1}{2} \exp\left( N \left[ 2\delta \ln(1/\delta) + 2\delta - \delta^2 - \frac{b \delta^2}{2} + o(1) \right] \right) = \frac{1}{2} \exp\left( N \left[ 2g(\delta) - \delta^2 + o(1) \right] \right).
\]

As \(g(\delta) > 0\) and dominates \(\delta^2\) for small \(\delta\), the exponent is positive, so \(\mathbb{E}[Q] \to \infty\). Variance analysis (similar to Theorem \ref{main4}) shows \(\text{Var}(Q) = o(\mathbb{E}[Q]^2)\). By Chebyshev's inequality, we have
\[
\mathbb{P}\left(Q < \tfrac{1}{2}\mathbb{E}[Q]\right) \leq \frac{4\text{Var}(Q)}{\mathbb{E}[Q]^2} \to 0,
\]
which implyies \(\mathbb{P}(Q \geq 1) \to 1\). Thus, w.h.p. there exists a pair \(\{A, B\}\) of disjoint size-\(s\) all-positive modules. Since each is an SCBM, if they belonged to the same LSCBM \(\mathcal{S}^*\), then \(|\mathcal{S}^*| \geq 2s\). However, by the proof of the first part of Theorem \ref{main2} and our choice of \(\delta\), we have
\[
2s \approx 2\delta N > 2 \cdot \tfrac{C_0(b) \log b}{2b} N = C_0(b)\mu \geq S_{\max} \quad \text{w.h.p.},
\]
a contradiction. Therefore, \(A\) and \(B\) must reside in distinct LSCBMs. Combining these results:
\[
\mathbb{P}(\text{Multiple LSCBMs}) \geq \mathbb{P}(Q \geq 1) - \mathbb{P}(S_{\max} < \mu) - \mathbb{P}(S_{\max} > C_(b)\mu) \to 1,
\]
completing the proof.
\end{proof}
\subsection{Proof of Theorem \ref{main3}}
\begin{proof}
For the first part of Theorem \ref{main3}, let \(Z_s\) denote the number of strong-correlation balanced modules (SCBM) of size \(s \geq 3\) in \(\mathcal{G}(N,\alpha,\beta)\). The size of the LSCBM is \(|\mathcal{S}^{*}| = \max \{ s \mid Z_s > 0 \}\). We bound \(\mathbb{E}[|\mathcal{S}^{*}|]\) by analyzing \(\mathbb{E}[Z_s]\) and summing over \(s\). For a fixed vertex set \(U\) of size \(s\), the probability that \(U\) forms an SCBM is bounded by summing over all possible partitions \(A \cup B = U\). There are \(2^s\) partitions (each vertex assigned independently to \(A\) or \(B\)), and for a partition with \(|A| = k\), \(|B| = s - k\), the probability is \(\alpha^{\binom{k}{2} + \binom{s-k}{2}} \beta^{k(s-k)}\). Since \(\beta \leq 1\), we have
\[
\mathbb{P}(U \text{ is SCBM})=\sum_{k=0}^{s} \binom{s}{k} \alpha^{\binom{k}{2} + \binom{s-k}{2}} \beta^{k(s-k)} \leq 2^s \max_{k} \left( \alpha^{\binom{k}{2} + \binom{s-k}{2}} \right).
\]

The maximum is attained at partitions minimizing the exponent of \(\alpha\). As \(\binom{k}{2} + \binom{s-k}{2} \geq \frac{s(s-2)}{4}\) for all \(k\), we have
\[
\max_{k} \left( \alpha^{\binom{k}{2} + \binom{s-k}{2}}\right) \leq \alpha^{\frac{s(s-2)}{4}},
\]
which gives
\[
\mathbb{P}(U \text{ is SCBM}) \leq 2^s \alpha^{\frac{s(s-2)}{4}}.
\]

The expected number of SCBMs of size \(s\) is
\[
\mathbb{E}[Z_s] = \binom{N}{s} \mathbb{P}(U \text{ is SCBM}) \leq \binom{N}{s} 2^s \alpha^{\frac{s(s-2)}{4}} .
\]

Using the bound \(\binom{N}{s} \leq \left( \frac{eN}{s} \right)^s\) gives
\[
\mathbb{E}[Z_s] \leq \left( \frac{eN}{s} \right)^s 2^s \alpha^{\frac{s(s-2)}{4}} = \left( \frac{2eN}{s} \right)^s \alpha^{\frac{s(s-2)}{4}}.
\]

Taking the natural logarithm gets
\[
\log \mathbb{E}[Z_s] \leq s \log \left( \frac{2eN}{s} \right) + \frac{s(s-2)}{4}\log \alpha.
\]

For \(s \geq 4\), \(\frac{s(s-2)}{4}\geq \frac{s^2}{8}\). Thus, we have
\[
\log \mathbb{E}[Z_s] \leq s \log \left( \frac{2eN}{s} \right) - \frac{s^2}{8} |\log \alpha|, \quad \text{for } s \geq 4.
\]

We express the expectation as
\[
\mathbb{E}[|\mathcal{S}^{*}|] = \sum_{m=3}^{N} \mathbb{P}(|\mathcal{S}^{*}| \geq m).
\]

Set \(k = c \frac{\log N}{|\log \alpha|}\) with constant \(c > 8\) (to be determined). We have
\[
\mathbb{E}[|\mathcal{S}^{*}|] = \underbrace{\sum_{m=3}^{\lfloor k \rfloor} \mathbb{P}(|\mathcal{S}^{*}| \geq m)}_{\text{Sum I}} + \underbrace{\sum_{m=\lfloor k \rfloor + 1}^{N} \mathbb{P}(|\mathcal{S}^{*}| \geq m)}_{\text{Sum II}}.
\]

For Sum I, since \(\mathbb{P}(|\mathcal{S}^{*}| \geq m) \leq 1\), we get
  \[
  \text{Sum I} \leq \lfloor k \rfloor - 2 \leq k.
  \]

For Sum II, by the union bound, \(\mathbb{P}(|\mathcal{S}^{*}| \geq m) \leq \sum_{s=m}^{N} \mathbb{P}(Z_s \geq 1) \leq \sum_{s=m}^{N} \mathbb{E}[Z_s]\). Thus, we have
  \[
  \text{Sum II} \leq \sum_{m=\lfloor k \rfloor + 1}^{N} \sum_{s=m}^{N} \mathbb{E}[Z_s] = \sum_{s=\lfloor k \rfloor + 1}^{N} \mathbb{E}[Z_s] (s - \lfloor k \rfloor) \leq \sum_{s=\lfloor k \rfloor + 1}^{N} s \mathbb{E}[Z_s].
  \]

For \(s \geq \lfloor k \rfloor + 1 \geq k\) and large \(N\), by previous analysis, we know that
\[
\log \mathbb{E}[Z_s] \leq s \log \left( \frac{2eN}{s} \right) - \frac{s^2}{8} |\log \alpha|.
\]

We show that for large \(N\),
\[
s \log \left( \frac{2eN}{s} \right) \leq \frac{s^2}{16} |\log \alpha|.
\]

Rearranging it gives
\[
\log \left( \frac{2eN}{s} \right) \leq \frac{s}{16} |\log \alpha|.
\]

Substitute \(s > k = c \frac{\log N}{|\log \alpha|}\):
\[
\frac{s}{16} |\log \alpha| > \frac{c \log N}{16}.
\]

The left side is
\[
\log \left( \frac{2eN}{s} \right) = \log(2e) + \log N - \log s \leq \log(2e) + \log N \quad \text{(since \(\log s > 0\))}.
\]

For \(c > 16\), \(\frac{c}{16} > 1\), so \(\frac{c \log N}{16} > \log(2e) + \log N\) for large \(N\) as \(\log N\) dominates. Thus, we have
\[
\log \left( \frac{2eN}{s} \right) \leq \frac{s}{16} |\log \alpha|,
\]
which implies
\[
s \log \left( \frac{2eN}{s} \right) - \frac{s^2}{8} |\log \alpha| \leq -\frac{s^2}{16} |\log \alpha|.
\]

So we have
\[
\mathbb{E}[Z_s] \leq \exp \left( -\frac{s^2}{16} |\log \alpha| \right),
\]
which gives
\[
\text{Sum II} \leq \sum_{s=\lfloor k \rfloor + 1}^{N} s \exp \left( -\frac{s^2}{16} |\log \alpha| \right).
\]

For large \(N\), the function \(f(x) = x \exp \left( -\frac{x^2}{16} |\log \alpha| \right)\) is decreasing for \(x \geq k\) (as \(k \to \infty\)). Thus,
\[
\sum_{s=\lfloor k \rfloor + 1}^{N} s \exp \left( -\frac{s^2}{16} |\log \alpha| \right) \leq \int_{k}^{\infty} x \exp \left( -\frac{x^2}{16} |\log \alpha| \right) dx.
\]

Compute the integral:
\[
\int_{k}^{\infty} x e^{-a x^2} dx = \frac{1}{2a} e^{-a k^2}, \quad \text{where } a = \frac{|\log \alpha|}{16}.
\]

Substituting \(a\) gives
\[
\frac{1}{2a} e^{-a k^2} = \frac{8}{|\log \alpha|} \exp \left( -\frac{k^2}{16} |\log \alpha| \right).
\]

Now substitute \(k = c \frac{\log N}{|\log \alpha|}\):
\[
\exp \left( -\frac{k^2}{16} |\log \alpha| \right) = \exp \left( -\frac{c^2 (\log N)^2}{16 |\log \alpha|} \right).
\]

As \(N \to \infty\), \(|\log \alpha| \to \infty\), so we have
 \(\frac{8}{|\log \alpha|} \to 0\), \(\exp \left( -\frac{c^2 (\log N)^2}{16 |\log \alpha|} \right) \leq 1\),
and 
\[
\frac{8}{|\log \alpha|} \exp \left( -\frac{c^2 (\log N)^2}{16 |\log \alpha|} \right) \to 0.
\]

Thus, we get \(\text{Sum II} \to 0\). Combining the sums obtains
\[
\mathbb{E}[|\mathcal{S}^{*}|] \leq k + o(1) = c \frac{\log N}{|\log \alpha|} + o(1).
\]

Since \(c > 8\) is a constant, we have
\[
\mathbb{E}[|\mathcal{S}^{*}|] = O\left( \frac{\log N}{|\log \alpha|} \right).
\]

For the second part of Theorem \ref{main3}, we define the event \(\mathcal{A} := \{ s_{\text{low}} \leq |\mathcal{S}^*| < s_{\text{high}} \}\) with \(s_{\text{low}} = \lfloor 0.1k \rfloor\), \(s_{\text{high}} = \lceil (8 - 0.1)k \rceil\), and \(k = \log N / |\log \alpha|\), where \(|\log \alpha| = o(\sqrt{\log N})\). For \(s = s_{\text{high}}\), the expectation \(\mathbb{E}[Z_s]\) vanishes asymptotically. From previous analysis, we know that
  \[
  \mathbb{E}[Z_s]\leq \left( \frac{2eN}{s} \right)^s \alpha^{\binom{s-1}{2}}.
  \]

  Taking logarithms and using \(\binom{s-1}{2} \geq s^2/8\) for \(s \geq 4\), we substitute \(s = (8 - 0.1)k\) and get
  \[
  \log \mathbb{E}[Z_s]\leq -\Theta\left( \frac{(\log N)^2}{|\log \alpha|} \right) \to -\infty.
  \]

  By Markov’s inequality, we have \(\mathbb{P}(|\mathcal{S}^*| \geq s_{\text{high}}) \leq \mathbb{E}[Z_s]\to 0\). For \(s_0 = s_{\text{low}}\), consider SCBMs composed solely of positive edges (i.e., \(B = \emptyset\)). The expectation \(\mathbb{E}[Z_{s_0}^{\text{pos}}]\) diverges:
  \[
  \mathbb{E}[Z_{s_0}^{\text{pos}}]\geq \left( \frac{N}{s_0} \right)^{s_0} e^{-s_0} \alpha^{\binom{s_0}{2}}.
  \]

  Substituting \(s_0 = 0.1k\), the dominant term in \(\log \mathbb{E}[Z_{s_0}^{\text{pos}}]\) is \(0.095(\log N)^2 / |\log \alpha| \to \infty\). Since \(\mathbb{E}[Z_{s_0}] \geq \mathbb{E}[Z_{s_0}^{\text{pos}}]\), we have \(\mathbb{E}[Z_{s_0}] \to \infty\). Critically, Lemma \ref{main31} given later establishes \(\text{Var}(Z_{s_0}) = o(\mathbb{E}[Z_{s_0}]^2)\). By Chebyshev’s inequality, we have
  \[
  \mathbb{P}(Z_{s_0} = 0) \leq \frac{\text{Var}(Z_{s_0})}{\mathbb{E}[Z_{s_0}]^2} \to 0 \implies \mathbb{P}(|\mathcal{S}^*| \geq s_0) \to 1,
  \]
  which gives \(\mathbb{P}(A) \to 1\).

For any fixed \(s \in [s_{\text{low}}, s_{\text{high}})\), the same analysis as above shows \(\inf_{s} \mathbb{E}[Z_s] \to \infty\). Moreover, Lemma \ref{main31} guarantees that\(\text{Var}(Z_s) = o(\mathbb{E}[Z_s]^2)\) uniformly over \(s\) in this interval. By Chebyshev's inequality, we have
\[
\mathbb{P}\left( |Z_s - \mathbb{E}[Z_s]| \geq \tfrac{1}{2}\mathbb{E}[Z_s] \right) \leq \frac{4\text{Var}(Z_s)}{\mathbb{E}[Z_s]^2} \to 0 \quad \text{(uniformly in } s\text{)}.
\]

Hence, \(\mathbb{P}(Z_s \geq \tfrac{1}{2}\mathbb{E}[Z_s]) \to 1\) uniformly. Since \(\inf_s \mathbb{E}[Z_s] \to \infty\), there exists a \(N_0\) such that \(\tfrac{1}{2}\mathbb{E}[Z_s] \geq 2\) for all \(s \in [s_{\text{low}}, s_{\text{high}})\) and \(N > N_0\). Consequently, \(\mathbb{P}(Z_s \geq 2) \to 1\) uniformly in \(s\). By the law of total probability, we have
\[
\mathbb{P}(Z_{|\mathcal{S}^*|} \geq 2) \geq \mathbb{P}(Z_{|\mathcal{S}^*|} \geq 2 \mid \mathcal{A}) \mathbb{P}(\mathcal{A}).
\]

Since \(\mathbb{P}(\mathcal{A}) \to 1\), it suffices to show \(\mathbb{P}(Z_{|\mathcal{S}^*|} \geq 2 \mid \mathcal{A}) \to 1\). Conditioned on \(\mathcal{A}\), \(|\mathcal{S}^*| = s\) for some \(s \in [s_{\text{low}}, s_{\text{high}})\). Define \(h(s) := \mathbb{P}(Z_s < 2)\). We require:
\[
\mathbb{E}\left[ h(|\mathcal{S}^*|) \mid \mathcal{A} \right] \to 0.
\]
By Lemma \ref{main32} provided later, \(\sup_{s \in [s_{\text{low}}, s_{\text{high}})} h(s) \to 0\). Thus, we have
\[
\mathbb{E}\left[ h(|\mathcal{S}^*|) \mid \mathcal{A} \right] \leq \sup_{s} h(s) \to 0,
\]
implying \(\mathbb{P}(Z_{|\mathcal{S}^*|} \geq 2 \mid \mathcal{A}) = 1 - \mathbb{E}\left[ h(|\mathcal{S}^*|) \mid \mathcal{A} \right] \to 1\). Finally, we have
\[
\lim_{N \to \infty} \mathbb{P}(Z_{|\mathcal{S}^*|} \geq 2) = 1.
\]

\begin{lem}\label{main31}
(Variance control) For \(s = \lfloor c k \rfloor\) with \(c < 8\) and \(k = \log N / |\log \alpha|\), assume \(|\log \alpha| = o(\sqrt{\log N})\). We have \(\text{Var}(Z_s) = o(\mathbb{E}[Z_s]^2)\).
\end{lem}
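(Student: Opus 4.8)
\textit{Proof plan.} I would run the second moment method for $Z_{s}$. Write $Z_{s}=\sum_{U}I_{U}$, where $U$ ranges over the $\binom{N}{s}$ size-$s$ vertex sets and $I_{U}$ indicates that $U$ is an SCBM, and set $\mu_{s}:=\mathbb{P}(I_{U}=1)=\sum_{k=0}^{s}\binom{s}{k}\alpha^{\binom{k}{2}+\binom{s-k}{2}}\beta^{k(s-k)}$, so $\mathbb{E}[Z_{s}]=\binom{N}{s}\mu_{s}$. In the regime of the lemma we already know $\mathbb{E}[Z_{s}]\to\infty$, and since the summand defining $\mu_{s}$ peaks at the balanced split $k\approx s/2$ one gets $\log\mu_{s}=-\tfrac14 s^{2}|\log\alpha|\,(1+o(1))$, hence $\mu_{s}^{-1}=e^{\Theta(s^{2}|\log\alpha|)}=e^{\Theta((\log N)^{2}/|\log\alpha|)}$. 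Since
\[
\frac{\text{Var}(Z_{s})}{\mathbb{E}[Z_{s}]^{2}}=\frac{1}{\mathbb{E}[Z_{s}]}+\sum_{U\neq V}\frac{\text{Cov}(I_{U},I_{V})}{\mathbb{E}[Z_{s}]^{2}}
\]
and the first term is $o(1)$, everything reduces to the covariance sum, which I would reorganise by the overlap $t:=|U\cap V|$.

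\textit{A conditional bound on $\mathbb{E}[I_{U}I_{V}]$.} For $t\le1$ the edge sets inside $U$ and inside $V$ are disjoint, so $I_{U}\perp I_{V}$ and $\text{Cov}(I_{U},I_{V})=0$. For $t\ge2$ I would condition on the edges inside $W:=U\cap V$, exploiting the structural fact that once all $\binom{t}{2}$ edges of $W$ are present and every triangle of $W$ is balanced, the balanced bipartition $W=W_{A}\cup W_{B}$ is uniquely forced by the edge signs (``$+$'' is an equivalence relation with at most two classes); conditionally on these edges the events $\{U\text{ SCBM}\}$ and $\{V\text{ SCBM}\}$ become independent, each having to extend the induced bipartition. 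Summing over the determined bipartitions $\chi_{W}$ gives the identity $\mathbb{E}[I_{U}I_{V}]=\sum_{\chi_{W}}q(\chi_{W})\,p(\chi_{W})^{2}$ with $\sum_{\chi_{W}}q(\chi_{W})p(\chi_{W})=\mu_{s}$, where $q(\chi_{W})=\alpha^{\binom{a}{2}+\binom{t-a}{2}}\beta^{a(t-a)}$ ($a=|W_{A}|$) is the probability the edges of $W$ realise $\chi_{W}$ and $p(\chi_{W})$ is the conditional probability that $U$ extends to an SCBM compatible with $\chi_{W}$. Factoring the edge probability of each extending partition of $U$ through its restriction to $W$ gives $p(\chi_{W})\le\mu_{s}/q(\chi_{W})$, hence — via $\sum x_{i}^{2}\le(\sum x_{i})^{2}$ and $\min_{\chi_{W}}q(\chi_{W})=\alpha^{\binom{t}{2}}$ (attained at the all-positive $\chi_{W}$, as $\alpha<\beta<1$) — the workhorse estimate $\mathbb{E}[I_{U}I_{V}]\le\min\bigl(\mu_{s},\,\mu_{s}^{2}\alpha^{-\binom{t}{2}}\bigr)$, the first alternative being just $I_{U}I_{V}\le I_{U}$. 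A sharper optimisation over $\chi_{W}$ (the dominant terms come from near-balanced $\chi_{W}$) improves the second alternative to $\mathbb{E}[I_{U}I_{V}]\lesssim 2^{-t}\mu_{s}^{2}(\alpha\beta)^{-\Theta(t^{2})}$, i.e.\ $\mathbb{E}[I_{U}I_{V}]/\mu_{s}^{2}\lesssim e^{\Theta(t^{2}|\log\alpha|)}$.

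\textit{Summation over overlaps.} Let $L_{t}:=\binom{s}{t}\binom{N-s}{s-t}/\binom{N}{s}$, the hypergeometric overlap weight; these satisfy $\sum_{t}L_{t}=1$ (Vandermonde), are sharply concentrated near $t\approx s^{2}/N\to0$, and obey $\log L_{t}\le -(1-o(1))\,t\log N$ uniformly in $2\le t\le s-1$. Then the covariance sum equals $\sum_{t=2}^{s-1}L_{t}\,\mathbb{E}[I_{U}I_{V}]/\mu_{s}^{2}\le\sum_{t=2}^{s-1}\exp\bigl(-t\log N+\Theta(t^{2}|\log\alpha|)+o(t\log N)\bigr)$. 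Because the exponent $-t\log N+\Theta(t^{2}|\log\alpha|)$ is an upward parabola in $t$, with vertex at $t\asymp\log N/|\log\alpha|$, it is maximised over $[2,s-1]$ at the two endpoints, and each endpoint value tends to $-\infty$: at $t=2$ it is $-2\log N+O(|\log\alpha|)\to-\infty$ since $|\log\alpha|=o(\sqrt{\log N})$; at $t\asymp s$ it is $\Theta(s\log N)$ times a strictly negative constant, $\to-\infty$, because $s=\Theta(\log N/|\log\alpha|)$ lies in the range where $\mathbb{E}[Z_{s}]\to\infty$ — equivalently, where that parabola's value at $t=s$, namely $s(\log N-\tfrac14 s|\log\alpha|)(1+o(1))$, is still negative. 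Summing the at most $s=O(\log N)$ terms against this uniform bound gives $\sum_{t=2}^{s-1}L_{t}\,\mathbb{E}[I_{U}I_{V}]/\mu_{s}^{2}=o(1)$, whence $\text{Var}(Z_{s})=o(\mathbb{E}[Z_{s}]^{2})$.

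\textit{Main obstacle.} The crux is the intermediate overlap range: there the combinatorial decay of $L_{t}$ and the probabilistic enhancement $\mathbb{E}[I_{U}I_{V}]/\mu_{s}^{2}$ are of the same exponential order, so the crude bound $\mathbb{E}[I_{U}I_{V}]\le\mu_{s}^{2}\alpha^{-\binom{t}{2}}$ (which carries the constant $\tfrac12$ rather than $\tfrac14$ in the exponent $e^{\Theta(t^{2}|\log\alpha|)}$) is by itself too weak and would spoil the endpoint estimate at $t\asymp s$. One genuinely needs the sharper $\mathbb{E}[I_{U}I_{V}]\lesssim 2^{-t}\mu_{s}^{2}(\alpha\beta)^{-\Theta(t^{2})}$ coming from optimising over the bipartition of $W$, together with the precise hypergeometric tail of $L_{t}$, to make $t\mapsto L_{t}\,\mathbb{E}[I_{U}I_{V}]/\mu_{s}^{2}$ a log-convex function maximised at its endpoints. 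Carrying the constants through this interpolation — and checking, for $t$ close to $s$, that the enhancement factor is no larger than $\mu_{s}^{-1}$ times a quantity that $L_{t}$ can absorb — is the delicate technical step, and it is precisely there that the hypotheses $|\log\alpha|=o(\sqrt{\log N})$ and $s=\Theta(\log N/|\log\alpha|)$ are used.
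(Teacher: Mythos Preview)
Your approach is essentially the same as the paper's: a second-moment argument with the covariance sum stratified by the overlap $t=|U\cap V|$, using conditional independence given the edges inside $W=U\cap V$ to obtain the key bound $\mathbb{E}[I_UI_V]\le \mu_s^2/\mu_t$ (the paper derives this in its auxiliary Theorem~4 via the observation that $I_U=I_V=1$ forces $I_W=1$, which is exactly your uniqueness-of-bipartition step). Your ``sharper'' estimate $\mathbb{E}[I_UI_V]\lesssim 2^{-t}\mu_s^2(\alpha\beta)^{-\Theta(t^2)}$ is precisely $\mu_s^2/\mu_t$ once one notes $\mu_t\gtrsim 2^t(\alpha\beta)^{t^2/4}$ in the $\alpha<\beta$ regime, so the two routes coincide.

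Where you are more careful than the paper: you correctly isolate $t\le 1$ as giving zero covariance by edge-disjointness (the paper lumps $t=1,2$ together under the crude bound $\mu_s$), and you explicitly flag that the naive bound $\mathbb{E}[I_UI_V]\le \mu_s^2\alpha^{-\binom{t}{2}}$ carries the constant $\tfrac12$ rather than $\tfrac14$ and therefore fails at large overlaps. This is a genuine point: with the sharp $\tfrac14$ constant, your endpoint analysis at $t\asymp s$ ties negativity of the exponent to $\mathbb{E}[Z_s]\to\infty$, which forces $c<4$ rather than the $c<8$ stated in the lemma. The paper's own proof reaches $c<8$ only through what appear to be arithmetic slips (it writes $-2t\log N$ where $\Gamma_t$ gives $-t\log N$, and $t^2|\log\alpha|/8$ where the balanced-split lower bound on $\mu_t$ gives $t^2|\log\alpha|/4$). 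Your argument is internally consistent and covers the range actually used in the application ($c\in[0.1,4)$); the discrepancy with the stated hypothesis $c<8$ reflects looseness in the paper rather than a gap in your plan.
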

\begin{proof}
Using the variance decomposition from proofs of Theorem \ref{main4} gives
\[
\frac{\text{Var}(Z_s)}{\mathbb{E}[Z_s]^2} \leq \frac{1}{\mathbb{E}[Z_s]} + \sum_{t=1}^{2} \Gamma_t \frac{1}{\mu_s} + \sum_{t=3}^{s-1} \Gamma_t \frac{1}{\mu_t},
\]
where \(\Gamma_t = \binom{s}{t} \frac{\binom{N-s}{s-t}}{\binom{N}{s}} \leq \left( \frac{e s^2}{t N} \right)^t\), and \(\mu_s = \mathbb{E}[I_U]\) for \(|U| = s\). From previous analysis, we know that
\[
\mu_s \leq \exp\left( s \log(2eN/s) - \tfrac{s^2}{8} |\log \alpha| \right), \quad s \geq 4.
\]

Substituting \(s = c k = c \log N / |\log \alpha|\) gives
\[
\log \mu_s \leq \left( c - \tfrac{c^2}{8} \right) \tfrac{(\log N)^2}{|\log \alpha|} + O\left( \tfrac{\log N \log \log N}{|\log \alpha|} \right).
\]

Since \(c < 8\), the coefficient \(c - c^2/8 > 0\), so \(\mathbb{E}[Z_s] \to \infty\), i.e., \(\frac{1}{\mathbb{E}[Z_s]} \to 0\). For(\(t=1,2\)), we have
  \[
  \Gamma_t \frac{1}{\mu_s} \leq \exp\left( -\Theta\left( \tfrac{(\log N)^2}{|\log \alpha|} \right) \right) \to 0.
  \]

For (\(t \geq 3\)), we have
  \[
  \Gamma_t \frac{1}{\mu_t} \leq \exp\left( -2t \log N + \tfrac{t^2}{8} |\log \alpha| + O(t \log \log N) \right).
  \]

  The exponent is dominated by \(-2t \log N + \tfrac{t^2}{8} |\log \alpha|\), which is maximized at \(t=3\) and strictly negative for large \(N\) due to \(|\log \alpha| = o(\sqrt{\log N})\). Summing over \(t\) gets
  \[
  \sum_{t=3}^{s-1} \Gamma_t \frac{1}{\mu_t} \leq s \exp\left( K \tfrac{(\log N)^2}{|\log \alpha|} \right) \to 0, \quad K < 0.
  \]

Thus, we have \(\frac{\text{Var}(Z_s)}{\mathbb{E}[Z_s]^2} \to 0\).
\end{proof}
\begin{lem}\label{main32}
(Uniform convergence) For \(h(s) = \mathbb{P}(Z_s < 2)\), we have \(\sup_{s \in [s_{\text{low}}, s_{\text{high}}]} h(s) \to 0\) as \(N \to \infty\), where \(s_{\text{low}} = \lfloor 0.1k \rfloor\), \(s_{\text{high}} = \lfloor (8 - 0.1)k \rfloor\), and \(k = \log N / |\log \alpha|\) with \(|\log \alpha| = o(\sqrt{\log N})\).
\end{lem}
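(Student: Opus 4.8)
The plan is a second-moment argument carried out uniformly in $s$. For every integer $s$ in the window $[s_{\text{low}},s_{\text{high}}]$ the goal is to establish (i) $\mathbb{E}[Z_s]\to\infty$ and (ii) $\operatorname{Var}(Z_s)=o(\mathbb{E}[Z_s]^2)$, both uniformly over $s$. Granting these, once $N$ is large enough that $\mathbb{E}[Z_s]\ge 3$ for all such $s$, we have $\mathbb{E}[Z_s]-1\ge\tfrac12\mathbb{E}[Z_s]$, so Chebyshev's inequality gives
\[
h(s)=\mathbb{P}(Z_s\le 1)\le\mathbb{P}\bigl(|Z_s-\mathbb{E}[Z_s]|\ge\mathbb{E}[Z_s]-1\bigr)\le\frac{\operatorname{Var}(Z_s)}{(\mathbb{E}[Z_s]-1)^2}\le\frac{4\operatorname{Var}(Z_s)}{\mathbb{E}[Z_s]^2},
\]
whence $\sup_s h(s)\le 4\sup_s\operatorname{Var}(Z_s)/\mathbb{E}[Z_s]^2\to 0$, which is the assertion.

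For (i) I would retain a single near-balanced bipartition in the sum defining $\mathbb{E}[Z_s]$, so that $\mathbb{E}[Z_s]\ge\binom{N}{s}\binom{s}{\lceil s/2\rceil}\alpha^{\binom{\lceil s/2\rceil}{2}+\binom{\lfloor s/2\rfloor}{2}}\beta^{\lceil s/2\rceil\lfloor s/2\rfloor}$ (for small $s$ in the window the all-positive term $\binom{N}{s}\alpha^{\binom{s}{2}}$ serves equally well). Writing $s=ck$ with $k=\log N/|\log\alpha|$, applying Stirling, and invoking $|\log\alpha|=o(\sqrt{\log N})$ to absorb $s\log s$, the factor $\binom{s}{\lceil s/2\rceil}\le 2^s$, and --- since $1-\beta\le 1$ while $|\log\alpha|\to\infty$ --- the factor $\beta^{\lceil s/2\rceil\lfloor s/2\rfloor}$ into lower-order terms, one obtains $\log\mathbb{E}[Z_s]=(\phi(c)+o(1))\,(\log N)^2/|\log\alpha|$ for an explicit concave $\phi$ coming from the competition between the combinatorial $s\log N$ gain and the edge-probability exponent (in essence $\phi(c)=c-c^2/4$, the precise shape depending on the bound chosen for $\binom{k}{2}+\binom{s-k}{2}$). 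The window $[s_{\text{low}},s_{\text{high}}]$ must be chosen so that $\phi$ is positive throughout it; being continuous on a compact interval, $\phi$ is then bounded below there by a positive constant, and $\mathbb{E}[Z_s]\to\infty$ uniformly in $s$ follows.

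For (ii) I would rerun the variance decomposition from the proof of Lemma \ref{main31}, namely $\operatorname{Var}(Z_s)/\mathbb{E}[Z_s]^2\le\mathbb{E}[Z_s]^{-1}+\sum_{t=1}^{2}\Gamma_t/\mu_s+\sum_{t=3}^{s-1}\Gamma_t/\mu_t$ with $\Gamma_t\le(es^2/(tN))^t$ and $\mu_t$ the probability that a fixed $t$-set is an SCBM, splitting the $t$-sum at $t\le 2$ versus $t\ge 3$ exactly as in that proof. The point is that every estimate produced there has the form $\exp(-c'(c)\,(\log N)^2/|\log\alpha|)$ or $\exp(-c'(c)\,t\log N+\cdots)$ with $c'(\cdot)$ a continuous, strictly positive function of $c=s/k$ on the same compact window; replacing $c'$ by its minimum over the window preserves the estimates uniformly, and the $O(\log N/|\log\alpha|)$ additional summands in the $t$-sum are negligible against a super-polynomially small bound.

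The main obstacle is uniformity near the upper endpoint of the window: $\phi(c)$ decays to zero at the critical constant where $\mathbb{E}[Z_s]$ switches from diverging to vanishing, so $s_{\text{high}}$ has to be kept strictly below that threshold --- otherwise $\mathbb{E}[Z_s]\to0$ and $h(s)\to1$ there --- and this constant must be matched against the crude estimate of $\binom{k}{2}+\binom{s-k}{2}$ used to bound $|\mathcal{S}^*|$ from above in the proof of Theorem \ref{main3}, including the delicate case where $1-\beta$ is much larger than $\alpha$. A convenient supporting fact is that SCBMs are hereditary --- any subset of size at least $3$ of an SCBM is an SCBM --- so $|\mathcal{S}^*|>s$ forces $Z_s\ge s+1\ge2$; this makes the step from ``$\mathbb{E}[Z_s]$ large and $Z_s$ concentrated'' to ``$Z_{|\mathcal{S}^*|}\ge 2$ with high probability'' in the parent argument immediate.
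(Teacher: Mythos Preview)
Your proposal is correct and follows essentially the same approach as the paper: the paper's proof of Lemma~\ref{main32} likewise invokes Lemma~\ref{main31} for the uniform variance bound, cites the earlier expectation analysis for $\mathbb{E}[Z_s]\to\infty$ uniformly, and concludes via Chebyshev with a deviation of $\mathbb{E}[Z_s]-1.5$ in place of your $\mathbb{E}[Z_s]-1$. You actually go further than the paper in detailing the expectation lower bound and in flagging the upper-endpoint issue --- the paper simply asserts the uniform divergence by reference to ``earlier analysis'' without addressing the mismatch between the crude $s^2/8$ exponent used for the upper bound on $|\mathcal{S}^*|$ and the tighter $s^2/4$ exponent governing the true transition of $\mathbb{E}[Z_s]$.
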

\begin{proof}
Fix an arbitrary \(s \in [s_{\text{low}}, s_{\text{high}}]\). By Lemma \ref{main31}, the variance of \(Z_s\) satisfies \(\text{Var}(Z_s) = o(\mathbb{E}[Z_s]^2)\) uniformly over \(s\) in this interval. From the earlier analysis of Theorem \ref{main3}, the expectation \(\mathbb{E}[Z_s] \to \infty\) uniformly for \(s \in [s_{\text{low}}, s_{\text{high}}]\). Consequently, there exists \(N_0 > 0\) such that for all \(N > N_0\) and all \(s \in [s_{\text{low}}, s_{\text{high}}]\),
\[
\mathbb{E}[Z_s] > 2.
\]

Now consider the event \(Z_s < 2\). Since \(Z_s\) is non-negative integer-valued, \(Z_s < 2\) implies \(Z_s \leq 1\). Therefore,
\[
\{Z_s < 2\} \subseteq \{ |Z_s - \mathbb{E}[Z_s]| \geq \mathbb{E}[Z_s] - 1.5 \},
\]
where the inclusion holds because \(\mathbb{E}[Z_s] > 2\) ensures \(\mathbb{E}[Z_s] - 1.5 > 0.5 > 0\), and if \(Z_s \leq 1\), then
\[
|Z_s - \mathbb{E}[Z_s]| \geq \mathbb{E}[Z_s] - 1 \geq \mathbb{E}[Z_s] - 1.5 + 0.5 > \mathbb{E}[Z_s] - 1.5.
\]

Applying Chebyshev’s inequality to the right-hand side event gives
\[
h(s) = \mathbb{P}(Z_s < 2) \leq \mathbb{P}\left( |Z_s - \mathbb{E}[Z_s]| \geq \mathbb{E}[Z_s] - 1.5 \right) \leq \frac{\text{Var}(Z_s)}{(\mathbb{E}[Z_s] - 1.5)^2}.
\]

We now analyze this upper bound uniformly in \(s\). First, from the denominator, we have
\[
(\mathbb{E}[Z_s] - 1.5)^2 = \mathbb{E}[Z_s]^2 \left(1 - \frac{1.5}{\mathbb{E}[Z_s]}\right)^2,
\]
which gives
\[
\frac{\text{Var}(Z_s)}{(\mathbb{E}[Z_s] - 1.5)^2} = \frac{\text{Var}(Z_s)}{\mathbb{E}[Z_s]^2} \cdot \frac{1}{\left(1 - \frac{1.5}{\mathbb{E}[Z_s]}\right)^2}.
\]

Since \(\mathbb{E}[Z_s] \to \infty\) and
 \(\frac{\text{Var}(Z_s)}{\mathbb{E}[Z_s]^2} \to 0\) uniformly in \(s\), we have
\[
\sup_{s \in [s_{\text{low}}, s_{\text{high}}]} h(s) \leq \sup_{s \in [s_{\text{low}}, s_{\text{high}}]} \frac{\text{Var}(Z_s)}{(\mathbb{E}[Z_s] - 1.5)^2} \to 0.
\]
\end{proof}
\end{proof}
\subsection{Variance bound for subcritical modules}
%Useful theoretical results
The following theorem is used to prove Theorem \ref{main1}.
\begin{thm}\label{main4}
Under \(\mathcal{G}(N, \alpha, \beta)\), for any $\epsilon\in(0,1)$, let \(s=(1-\epsilon)s_{c}=(1-\epsilon) \frac{\log N}{\lambda(\alpha,\beta)}\), where \(\lambda(\alpha,\beta) > 0\) is defined as:
\[
\lambda(\alpha,\beta) =
\begin{cases}
\frac{1}{2} |\log \alpha| & \alpha \geq \beta \\
\frac{1}{4} (|\log \alpha| + |\log \beta|) & \alpha < \beta
\end{cases}.
\]
We have
\[
\text{Var}(Z_{s})=o(\mathbb{E}[Z_{s}]^2)\qquad \mathrm{as~}N\rightarrow\infty.
\]
\end{thm}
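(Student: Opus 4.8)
The plan is to run a standard second moment argument on $Z_s$ with $s=(1-\epsilon)s_c$, using the fact—already extracted from the first‑moment computation in the proof of Theorem \ref{main1}, which does not invoke the present theorem—that $\mathbb{E}[Z_s]\to\infty$ in this subcritical window (there the exponent equals $(\log N)^2(\epsilon-\epsilon^2)/\lambda>0$). Write $Z_s=\sum_{|U|=s}I_U$, where $I_U$ indicates that $U$ spans an SCBM, so that $\mathrm{Var}(Z_s)=\sum_U\mathrm{Var}(I_U)+\sum_{U\neq V}\mathrm{Cov}(I_U,I_V)$. The diagonal part is harmless: $\sum_U\mathrm{Var}(I_U)\le\sum_U\mathbb{E}[I_U]=\mathbb{E}[Z_s]=o(\mathbb{E}[Z_s]^2)$ since $\mathbb{E}[Z_s]\to\infty$. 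Hence the whole theorem reduces to controlling the covariance sum.

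For the covariances I would organize the sum by the overlap size $t=|U\cap V|$. If $t\le 1$ the two vertex sets share no edge, so $I_U,I_V$ are independent and the covariance is $0$. For $t\ge 2$, set $W=U\cap V$ and condition on the edge configuration on $W$: the events $\{U\text{ is SCBM}\}$ and $\{V\text{ is SCBM}\}$ then depend on disjoint, hence independent, edge families, giving $\mathbb{E}[I_UI_V]=\sum_{\omega}\mathbb{P}(W=\omega)\,\mathbb{P}(U\text{ SCBM}\mid\omega)\,\mathbb{P}(V\text{ SCBM}\mid\omega)$, the sum over balanced‑clique configurations $\omega$ on $W$. To make this precise without bookkeeping the $A$/$B$ bipartition, expand each indicator over its admissible ordered bipartitions, $I_U=\tfrac12\sum_{(A,B)}I_U^{(A,B)}$, where $I_U^{(A,B)}$ is the indicator of one fully specified signed clique; then every cross term becomes the classical clique‑overlap identity $\mathbb{E}\big[I_U^{(A,B)}I_V^{(C,D)}\big]=p_{A,B}\,p_{C,D}/p_W$ when the bipartitions agree on $W$ (and $0$ otherwise), with $p_W=\alpha^{\binom{j}{2}+\binom{t-j}{2}}\beta^{j(t-j)}$ for the induced split of $W$ into sizes $j$ and $t-j$. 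Summing back over bipartitions yields $\mathbb{E}[I_UI_V]\le C\,\mu_s^2/\mu_t$ for an absolute constant $C$, where $\mu_s,\mu_t$ denote the SCBM‑probabilities of an $s$‑set and a $t$‑set; dividing the covariance sum by $\mathbb{E}[Z_s]^2=\binom{N}{s}^2\mu_s^2$ gives $\mathrm{Var}(Z_s)/\mathbb{E}[Z_s]^2\le \mathbb{E}[Z_s]^{-1}+C\sum_{t=2}^{s-1}\Gamma_t/\mu_t$, with $\Gamma_t=\binom{s}{t}\binom{N-s}{s-t}/\binom{N}{s}$.

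What remains is a routine asymptotic estimate of $\sum_{t}\Gamma_t/\mu_t$. Stirling gives $\Gamma_t\le (2es^2/(tN))^t$, so $\log\Gamma_t\le t(2\log s-\log N+O(1))=-t\log N\,(1-o(1))$ since $s=\Theta(\log N)$ makes $\log s=o(\log N)$. For $\mu_t$ I would lower‑bound by its largest term; by the $f(a)$ analysis already carried out for Theorem \ref{main1}, that term sits at an empty part when $\alpha\ge\beta$ and at a balanced split when $\alpha<\beta$, and in both cases $\mu_t\ge\exp(-\lambda(\alpha,\beta)\,t^2(1+o(1)))$—this is precisely where the two branches of $\lambda$ enter, and where the naive bound $\mu_t\ge\alpha^{\binom{t}{2}}$ is insufficient in the $\alpha<\beta$ regime. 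Combining, $\log(\Gamma_t/\mu_t)\le t(-\log N+\lambda t)(1+o(1))$, and since $t\le s-1<s$ while $\lambda s=(1-\epsilon)\log N$ this is at most $-\epsilon t\log N\,(1+o(1))\le -\tfrac{\epsilon}{2}t\log N$ for large $N$; hence $\sum_{t\ge 2}\Gamma_t/\mu_t\le\sum_{t\ge 2}N^{-\epsilon t/2}\to 0$. Together with $\mathbb{E}[Z_s]^{-1}\to 0$ this gives $\mathrm{Var}(Z_s)=o(\mathbb{E}[Z_s]^2)$.

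The main obstacle is the covariance estimate at moderate overlap $t$: obtaining the exponential bound $\mathbb{E}[I_UI_V]\lesssim\mu_s^2/\mu_t$ requires handling the structural‑balance bipartition carefully—without it the inequality can fail by more than a constant—which is why I would route through the colored‑clique expansion. A secondary care point is that the cancellation of $\log N$ against $\lambda t$ uses that $s$ lies \emph{strictly} below $s_c$, so one must control the $o(1)$ corrections uniformly in $t$: split off the bounded‑$t$ range, where $\mu_t=\Theta(1)$ and the bound is trivial, from the $t\to\infty$ range, where the Stirling and entropy corrections to $\log\mu_t$ are genuinely $o(t^2)$ and the exponent is dominated by $-\lambda t^2+t\log N$.
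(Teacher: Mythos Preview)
Your proposal is correct and follows essentially the same second–moment route as the paper: decompose the variance, bound the diagonal by $\mathbb{E}[Z_s]$, organize the covariance sum by overlap size $t$, obtain a bound of the form $\mathbb{E}[I_UI_V]\lesssim\mu_s^2/\mu_t$, and finish with $\Gamma_t\le(es^2/(tN))^t$ together with $\log\mu_t\sim-\lambda t^2$ to show each term is at most $N^{-\epsilon t/2}$.

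The only notable difference is in how the key covariance inequality is derived. The paper conditions on the edge configuration of $W=U\cap V$ and invokes the inclusion $\{I_U=1\}\subset\{I_W=1\}$ (valid only for $t\ge 3$, since an SCBM needs at least three nodes), which forces it to treat $t=1,2$ separately via the crude bound $\mathbb{E}[I_UI_V]\le\mu_s$. Your colored-clique expansion $I_U=\tfrac12\sum_{(A,B)}I_U^{(A,B)}$ is a cleaner device: it sidesteps the ``$W$ is an SCBM'' requirement, handles $t=2$ on the same footing as larger $t$ (with $p_W\in\{\alpha,\beta\}=\Theta(1)$), and correctly observes that $t\le 1$ already gives independence since no edge is shared. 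Both routes land on the same asymptotic estimate, so the distinction is one of packaging rather than substance.
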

\begin{proof}
Let \(I_U\) be the indicator random variable for the event that the subset \(U \subseteq \{1,2,\ldots,N\}\) is a strong-correlation balanced module (SCBM) of size \(s\). Then \(Z_s = \sum_{U: |U|=s} I_U\). The variance decomposes of  $Z_{s}$ can be written as
\[
\text{Var}(Z_s) = \sum_{U} \text{Var}(I_U) + \sum_{U \neq V} \text{Cov}(I_U, I_V).
\]

For the term \(\sum_{U}\text{Var}(I_U)\), since \(\text{Var}(I_U) = \mathbb{E}[I_U^2] - (\mathbb{E}[I_U)]^2 \leq \mathbb{E}[I_U^2] = \mathbb{E}[I_U]\) (as \(I_U^2 = I_U\)), we have:
  \[
  \sum_{U} \text{Var}(I_U) \leq \sum_{U} \mathbb{E}[I_U]= \mathbb{E}[Z_s].
  \]

Next, we focus on the term\(\sum_{U \neq V} \text{Cov}(I_U, I_V)\). By the covariance definition and \(I_U, I_V \geq 0\), \(\text{Cov}(I_U, I_V) \leq \mathbb{E}[I_U I_V]\). When \(U \cap V = \emptyset\), \(I_U\) and \(I_V\) are independent (due to edge independence), so \(\text{Cov}(I_U, I_V) = 0\). Thus, we only consider pairs with \(|U \cap V| = t \geq 1\), where $t\in\{1,2,\ldots,s-1\}$ and $t\neq s$ because $U=V$ if $t=s$.

Fix \(t \in \{1, 2, \dots, s-1\}\).  We define \(\mathcal{P}_t = \{(U, V) : |U| = |V| = s, |U \cap V| = t\}\), where \(|\mathcal{P}_t| = \binom{N}{s} \binom{s}{t} \binom{N-s}{s-t}\) (ways to choose \(U\), intersection \(U \cap V\), and \(V \setminus U\)).
Then we have
\[
\sum_{U \neq V} \text{Cov}(I_U, I_V)\leq\sum_{U \neq V} \mathbb{E}[I_U I_V]= \sum_{t=1}^{s-1} \sum_{(U,V) \in \mathcal{P}_t} \mathbb{E}[I_U I_V].
\]

Let \(W = U \cap V\) with \(|W| = t\). For the case \(t < 3\), recall that every SCBM requires a minimum size of 3, since \(I_U I_V \leq I_U\), we have
  \[
  \mathbb{E}[I_U I_V]\leq \mathbb{E}[I_U]= \mu_s, \quad \text{where } \mu_s = \mathbb{E}[I_U].
  \]

For the case \(t \geq 3\), the following two lemmas hold.
\begin{lem}\label{lem1}
If \(I_U = 1\) and \(I_V = 1\), then \(I_W = 1\).
\end{lem}
\begin{proof}
\(U\) and $V$ are SCBMs, so they are modules and structurally balanced. Since \(W \subseteq U\) and \(|W| = t \geq 3\), \(W\) is a SCBM, i.e., \(I_W = 1\).
\end{proof}
\begin{lem}\label{lem2}
Given \(I_W = 1\), the events \(\{I_U = 1\}\) and \(\{I_V = 1\}\) are conditionally independent.
\end{lem}
\begin{proof}
Fix the edge set \(\mathcal{E}_W\) of \(W\) that satisfy SCBM conditions. Since \((U \setminus W) \cap (V \setminus W) = \emptyset\), the edge sets \(\mathcal{E}_{U \setminus W}\) (edges within \(U \setminus W\) and between \((U \setminus W)\) and \(W\)) and \(\mathcal{E}_{V \setminus W}\) (edges within \(V \setminus W\) and between \((V \setminus W)\) and \(W\)) satisfy: \(\mathcal{E}_{U \setminus W} \cap \mathcal{E}_{V \setminus W} = \emptyset\). All edges are generated independently, so \(\mathcal{E}_{U \setminus W}\) and \(\mathcal{E}_{V \setminus W}\) are independent. \(I_U = 1\) if and only if \(\mathcal{E}_{U \setminus W}\) satisfies SCBM conditions given \(\mathcal{E}_W\), and similarly for \(I_V\). Thus, given \(\mathcal{E}_W\) (i.e., \(I_W = 1\)), \(I_U\) and \(I_V\) depend on independent edge sets and are conditionally independent.
\end{proof}

By the law of total expectation and Lemma \ref{lem1}, we have
  \[
  \mathbb{E}[I_U I_V]= \mathbb{E}[\mathbb{E}[I_U I_V \mid I_W]]\leq\mathbb{E}[I_W \cdot \mathbb{E}[I_U I_V \mid I_W]].
  \]

Given \(I_W = 1\), conditional independence in Lemma \ref{lem2} implies
  \[
  \mathbb{E}[I_U I_V \mid I_W = 1]= \mathbb{P}(I_U = 1 \mid I_W = 1) \mathbb{P}(I_V = 1 \mid I_W = 1).
  \]

By Lemma \ref{lem1}, we have \(\{I_U = 1\} \subseteq \{I_W = 1\}\), which gives
  \[
  \mathbb{P}(I_U = 1 \mid I_W = 1) = \frac{\mathbb{P}(I_U = 1)}{\mathbb{P}(I_W = 1)} = \frac{\mu_s}{\mu_t}, \quad \text{similarly } \mathbb{P}(I_V = 1 \mid I_W = 1) = \frac{\mu_s}{\mu_t}.
  \]

Thus, we have
  \[
  \mathbb{E}[I_U I_V]\leq \mathbb{E}[I_W]\cdot \left( \frac{\mu_s}{\mu_t} \right)^2 = \mu_t \cdot \frac{\mu_s^2}{\mu_t^2} = \frac{\mu_s^2}{\mu_t}.
  \]

Define:
\[
c_t =
\begin{cases}
\mu_s & \text{if } t < 3, \\
\frac{\mu_s^2}{\mu_t} & \text{if } t \geq 3.
\end{cases}
\]

Then we have \(\mathbb{E}[I_U I_V]\leq c_t\) for all \(t\in\{1,2,\ldots,s-1\}\). Substituting \(\mathbb{E}[I_U I_V]\leq c_t\) into the variance decomposition gives
\[
\text{Var}(Z_s) \leq \mathbb{E}[Z_s]+ \sum_{t=1}^{s-1} |\mathcal{P}_t| c_t.
\]

Given that $Z_{s}=\sum_{U:|U|=s}I_{U}$, we have $\mathbb{E}[Z_s]=\sum_{U:|U|=s}\mathbb{E}[I_{U}]=\binom{N}{s} \mu_s$. Combing \(\mathbb{E}[Z_s]= \binom{N}{s} \mu_s\) with \(|\mathcal{P}_t| = \binom{N}{s} \binom{s}{t} \binom{N-s}{s-t}\) gives
\[
\frac{\text{Var}(Z_s)}{\mathbb{E}[Z_s]^2} \leq \frac{1}{\mathbb{E}[Z_s]} + \sum_{t=1}^{s-1} \frac{\binom{s}{t} \binom{N-s}{s-t} c_t}{\binom{N}{s} \mu_s^2}.
\]

For \(t < 3\),\(c_t = \mu_s\), so we have
  \[
  \frac{\binom{s}{t} \binom{N-s}{s-t} \mu_s}{\binom{N}{s} \mu_s^2} = \frac{\binom{s}{t} \binom{N-s}{s-t}}{\binom{N}{s}} \cdot \frac{1}{\mu_s}.
  \]

For \(t \geq 3\), \(c_t = \frac{\mu_s^2}{\mu_t}\), so we have
  \[
  \frac{\binom{s}{t} \binom{N-s}{s-t} \frac{\mu_s^2}{\mu_t}}{\binom{N}{s} \mu_s^2} = \frac{\binom{s}{t} \binom{N-s}{s-t}}{\binom{N}{s}} \cdot \frac{1}{\mu_t}.
  \]

Thus we have
\[
\frac{\text{Var}(Z_s)}{\mathbb{E}[Z_s]^2} \leq \underbrace{\frac{1}{\mathbb{E}[Z_s]}}_{(I)} + \sum_{t=1}^{2} \underbrace{\frac{\binom{s}{t} \binom{N-s}{s-t}}{\binom{N}{s}} \cdot \frac{1}{\mu_s}}_{(II)} + \sum_{t=3}^{s-1} \underbrace{\frac{\binom{s}{t} \binom{N-s}{s-t}}{\binom{N}{s}} \cdot \frac{1}{\mu_t}}_{(III)}.
\]

Let \(s = (1-\epsilon) \frac{\log N}{\lambda(\alpha,\beta)}\). From former analysis, we know that
\[
\mathbb{E}[Z_s]= \exp\left( s \log N + s^2 f(a^*) + o(s) \right), \quad f(a^*) = -\lambda(\alpha,\beta) < 0,
\]
where \(a^*\) is the partition ratio maximizing \(f(a)\). Since \(\mathbb{E}[Z_s]= \binom{N}{s} \mu_s\), we have \(\mu_{s}=\exp\left(s^{2}f(a^{*})+s\log s-s+o(s))\right)\).  Similarly, \(\mu_t = \exp\left( t^2 f(a^*)+t\log t-t + o(t) \right)\). Thus we have
\[
\frac{1}{\mu_s} =\exp\left( \lambda(\alpha, \beta) s^2-s\log s+s + o(s) \right), \frac{1}{\mu_t} = \exp\left( \lambda(\alpha, \beta) t^2-t\log t+t + o(t) \right).
\]

Define the combinatorial ratio:
\[
\Gamma_t = \frac{\binom{s}{t} \binom{N-s}{s-t}}{\binom{N}{s}}.
\]

Using standard combinatorial bounds gives
\[
\Gamma_t \leq \binom{s}{t} \left( \frac{s}{N} \right)^t \leq \left( \frac{e s}{t} \right)^t \left( \frac{s}{N} \right)^t = \left( \frac{e s^2}{t N} \right)^t.
\]

We now analyze the three terms (I), (II), and (III), respectively (note \(s =(1-\epsilon)\frac{\log N}{\lambda(\alpha,\beta)}=\Theta(\log N)\)):
\begin{itemize}
  \item For term (I), since \(\mathbb{E}[Z_s] \to \infty\) when \(s = (1-\epsilon)s_c\), we have
  \[
  (I) = \frac{1}{\mathbb{E}[Z_s]} \to 0 \quad \text{as } N \to \infty.
  \]
  \item For term (II) with \(t = 1, 2\), use the tighter bound:
\[
\Gamma_t \frac{1}{\mu_s} \leq s^t \left( \frac{s}{N} \right)^{s-t} \exp\left( \lambda(\alpha, \beta)s^2 - s \log s + s + o(s) \right).
\]

Simplify the exponent:
\[
\lambda(\alpha, \beta) s^2 - s \log s + s + (s-t) \log s + t \log s - (s-t) \log N = \lambda(\alpha, \beta) s^2 - (s-t) \log N + s .
\]

Substitute \(s = (1 - \epsilon) \frac{\log N}{\lambda(\alpha, \beta)}\):
\[
\lambda(\alpha, \beta) s^2 - (s-t) \log N + s = -\epsilon(1-\epsilon) \frac{(\log N)^2}{\lambda(\alpha, \beta)} + t \log N + \frac{1-\epsilon}{\lambda(\alpha, \beta)} \log N + o(\log N).
\]

The dominant term is \(-\epsilon(1-\epsilon) \frac{(\log N)^2}{\lambda(\alpha, \beta)} < 0\), so:
\[
\Gamma_t \frac{1}{\mu_s} \leq \exp\left( -\Theta((\log N)^2) \right) \to 0.
\]

Thus \((II) \to 0\).
\item For term (III) with \(t \geq 3\), we have
\[
\Gamma_t \frac{1}{\mu_t} \leq \exp\left( t \log\left( \frac{e s^2}{t N} \right) + \lambda(\alpha, \beta) t^2 - t \log t + t + o(t) \right).
\]

The exponent is
\[
\lambda(\alpha, \beta) t^2 + t(1 + 2 \log s - \log t - \log N) - t \log t + t + o(t).
\]

Then substitute \(\log s = \log \log N + \Theta(1)\), we have
\[
\lambda(\alpha, \beta) t^2 - t \log N + 2t \log \log N - 2t \log t + \Theta(t) + o(t).
\]

Given that \(t \leq s = \Theta(\log N)\), and \(\lambda(\alpha, \beta) t \leq (1-\epsilon) \log N\), we get
\[
\lambda(\alpha, \beta) t^2 \leq (1-\epsilon) t \log N \implies \lambda(\alpha, \beta) t^2 - t \log N \leq -\epsilon t \log N.
\]

For any \(\delta > 0\), for large \(N\), we have
\[
2t \log \log N - 2t \log t + \Theta(t) \leq \delta t \log N.
\]

Choosing \(\delta = \epsilon/2\) gives
\[
\lambda(\alpha, \beta) t^2 - t \log N + 2t \log \log N - 2t \log t + \Theta(t) \leq -\frac{\epsilon}{2} t \log N.
\]

Then we get
\[
\Gamma_t \frac{1}{\mu_t} \leq \exp\left( -\frac{\epsilon}{2} t \log N \right) = N^{-\frac{\epsilon}{2} t},
\]
which gives
\[
(III) \leq \sum_{t=3}^{s-1} N^{-\frac{\epsilon}{2} t} \leq \sum_{t=3}^{\infty} N^{-\frac{\epsilon}{2} t} = \frac{N^{-\frac{3\epsilon}{2}}}{1 - N^{-\frac{\epsilon}{2}}} \to 0.
\]
\end{itemize}

Finally, since
\[
(I) \to 0, \quad (II) \to 0, \quad (III) \to 0 \quad \text{as} \quad N \to \infty,
\]
we conclude that
\[
\frac{\text{Var}(Z_s)}{\mathbb{E}[Z_s]^2} \to 0\quad \text{as} \quad N \to \infty,
\]
i.e., \(\text{Var}(Z_s) = o(\mathbb{E}[Z_s]^2)\).
\end{proof}
\section{MATLAB codes of MaxBalanceCore}
The MATLAB codes of MaxBalanceCore are provided below:
\begin{lstlisting}
%% MaxBalanceCore
% LSCBM = MaxBalanceCore(C_tilde, sigma) returns the largest
% strong-correlation balanced module.
% Inputs:
%   C_tilde : N x N statistically validated correlation matrix.
%   sigma   : strength threshold (default 0.7).
% Output:
%   LSCBM   : row vector of node indices (sorted).

function LSCBM = MaxBalanceCore(C_tilde, sigma)
    % Construct filtered signed adjacency matrix
    n = size(C_tilde, 1);
    S = sign(C_tilde) .* (abs(C_tilde) >= sigma);
    S(1:n+1:end) = 0;  % Remove self-loops
    % Compute node impact (degree centrality)
    node_impact = sum(S ~= 0, 2);
    
    % Find maximum balanced clique
    best_clique = [];
    best_size = 0;
    [~, order] = sort(node_impact, 'descend');
    
    for i = 1:min(100, n)
        seed = order(i);
        if node_impact(seed) == 0, continue; end
        
        % Build balanced clique from seed
        [clique, A, B] = build_balanced_clique(S, seed);
        
        % Expand clique and maintain partitions
        [clique, A, B] = expand_clique(S, clique, A, B, sigma);
        
        % Update best solution
        if length(clique) > best_size
            best_clique = clique;
            best_size = length(clique);
        end
    end
    
    LSCBM = sort(best_clique);
end

%% build_balanced_clique
% [clique, A, B] = build_balanced_clique(S, seed) builds an initial
% balanced clique from a seed node.
% Inputs:
%   S    : signed adjacency matrix (-1,0,1).
%   seed : starting node index.
% Outputs:
%   clique : vector of nodes in the initial clique.
%   A, B   : partition sets (A: positive to seed, B: negative to seed).

function [clique, A, B] = build_balanced_clique(S, seed)
    % Initialize partitions
    neighbors = find(S(seed, :) ~= 0);
    A = [seed, neighbors(S(seed, neighbors) > 0)];
    B = neighbors(S(seed, neighbors) < 0);
    
    % Vectorized intra-group filtering
    A = filter_group(S, A, 1);
    B = filter_group(S, B, 1);
    
    % Check inter-group connections (matrix ops instead of loops)
    if ~isempty(A) && ~isempty(B)
        [conflictA, conflictB] = find(S(A, B) >= 0);
        A(unique(conflictA)) = [];
        B(unique(conflictB)) = [];
    end
    
    clique = [A, B];
end

%% filter_group
% group = filter_group(S, group, req_sign) removes nodes in 'group'
% that lack the required sign with all other members.
% Inputs:
%   S        : signed adjacency matrix.
%   group    : current group vector.
%   req_sign : required sign (+1 or -1) for intra-group edges.
% Output:
%   group    : filtered group.

function group = filter_group(S, group, req_sign)
    % Vectorized intra-group filtering
    if numel(group) < 2, return; end
    
    subS = S(group, group);
    mask = ~eye(numel(group));  % Off-diagonal mask
    invalid = any((subS ~= req_sign) & mask, 2);
    group(invalid) = [];
end

%% expand_clique
% [clique, A, B] = expand_clique(S, clique, A, B, sigma) expands an
% existing balanced clique by adding compatible nodes.
% Inputs:
%   S      : signed adjacency matrix.
%   clique : current clique (union of A and B).
%   A, B   : current partitions.
%   sigma  : strength threshold.
% Outputs:
%   clique : expanded clique.
%   A, B   : updated partitions.

function [clique, A, B] = expand_clique(S, clique, A, B, sigma)
    n = size(S, 1);
    candidates = setdiff(1:n, clique);
    if isempty(candidates), return; end
    
    % Precompute connection strength of candidates to current clique
    candidate_strength = all(abs(S(candidates, clique)) >= sigma, 2);
    strong_candidates = candidates(candidate_strength);
    
    for node = strong_candidates
        % Check if can join A: same sign as A, opposite to B
        joinA = (isempty(A) || all(S(node, A) == 1)) && ...
                (isempty(B) || all(S(node, B) == -1));
        
        % Check if can join B: opposite to A, same as B
        joinB = (isempty(A) || all(S(node, A) == -1)) && ...
                (isempty(B) || all(S(node, B) == 1));
        
        if joinA
            A = [A, node];
            clique = [clique, node];
        elseif joinB
            B = [B, node];
            clique = [clique, node];
        end
    end
end
\end{lstlisting}
\bibliographystyle{model5-names}\biboptions{authoryear}%ESWA style
\bibliography{reference}

\end{document}